\documentclass[11pt]{article}
%\RequirePackage{algorithm,algorithmic,graphicx,amssymb,epsf,epic,url}
%\RequirePackage{fullpage}
%\RequirePackage{ifpdf}
%\ifpdf    % we are running pdflatex
%\RequirePackage{hyperref}
%\else    % we are running latex
%\RequirePackage[hypertex]{hyperref}
%\fi

%\newcommand{\ee}{w_{dg}}
%\newcommand{\vv}{w_{vx}}

\renewcommand{\deg}{\mathsf{deg}}

\usepackage{cite}
\usepackage{amsmath}
\usepackage{appendix}
\usepackage{graphicx}
\usepackage{color}
\usepackage{algorithm}
\usepackage[noend]{algpseudocode}
\usepackage{epstopdf}
\usepackage{wrapfig}
\usepackage[textsize=tiny]{todonotes}

\def\SETu{S_{\mbox{\sf\scriptsize uni}}}
\def\SETt{S_{\mbox{\sf\scriptsize tmp}}}
\def\NNu{\NN_{\mbox{\sf\scriptsize uni}}}
\def\NNt{\NN_{\mbox{\sf\scriptsize tmp}}}
\def\GAMm{\NN_{\mbox{\sf\scriptsize many}}}
\def\NN{N}
\def\EN{\Gamma^{-}(S)}
\def\EEu{E_{\mbox{\sf\scriptsize uni}}}
\def\EEt{E_{\mbox{\sf\scriptsize tmp}}}
\def\UN{\Gamma^1}

\def\partition{\mbox{\tt Partition}}
\def\gain{\mbox{\tt gain}}

\long\def\commabs #1\commabsend{}
\long\def\commful #1\commfulend{#1}
\long\def\comment #1\commentend{}

\newcommand {\ignore} [1] {}
\usepackage[framemethod=tikz]{mdframed}
\usepackage{caption}
%\captionsetup{font=small}
\usepackage{xspace}
\def\inline#1:{\par\vskip 7pt\noindent{\bf #1:}\hskip 10pt}

\setlength{\itemsep}{0pt}
\allowdisplaybreaks
\def\Invariant{\par\noindent{\bf Invariant:~}}
\def\blackslug{\hbox{\hskip 1pt \vrule width 4pt height 8pt
    depth 1.5pt \hskip 1pt}}

%%%%%%%%%%%%%%%%%%%%%%%%%%%%%%%%%%%%%%%%%%%%%%%%%%%%%%%%%%%%
%
%   PAGE LENGTH PARAMETERS
%
%%%%%%%%%%%%%%%%%%%%%%%%%%%%%%%%%%%%%%%%%%%%%%%%%%%%%%%%%%%%
%
% Springer Verlag format: depends on font size. For 12pt:
% 9.2 x 6.3 inches, which is roughly 23 x 15.8 cm
% Latest - in cm, and slightly changed: 23.8 x 15.4cm (narrower).
%
% MENTION: REDUCE TO 83%
%
%Margins adjusted properly

%%%%%%%%%%%%%%%%%%%%%%%%%%%%%%%%%%%%%%
% Springer Verlag format: depends on font size. For 10pt:
%
% PROBLEM - wisdom does not have 10pt.sty
%
% For 10pt they wanted pages of 5.3 x 7.9 inches.
% But apparently they changed that: Now, they specify things in cm,
% and require 12.8 x 19.8 cm (which is 5.1 x 7.9 inches) (narrower).
%Margins not adjusted properly

%%%%%%%%%%%%%%%%%%%%%%%%%%%%%%%%%%%%%%
% Dense format
\def\denseformat{
\setlength{\textheight}{9.5in}
\setlength{\textwidth}{6.9in}
\setlength{\evensidemargin}{-0.3in}
\setlength{\oddsidemargin}{-0.3in}
\setlength{\headsep}{10pt}
\setlength{\topmargin}{-0.44in}
\setlength{\columnsep}{0.375in}
\setlength{\itemsep}{0pt}
}
%%%%%%%%%%%%%%%%%%%%%%%%%%%%%%%%%%%%%%
% New Dense format

%%%%%%%%%%%%%%%%%%%%%%%%%%%%%%%%%%%%%%
% mid format

%%%%%%%%%%%%%%%%%%%%%%%%%%%%%%%%%%%%%%
% mid-spacy format

%%%%%%%%%%%%%%%%%%%%%%%%%%%%%%%%%%%%%%
% Spacy format

%%%%%%%%%%%%%%%%%%%%%%%%%%%%%%%%%%%%%%
% Super spacy format

%%%%%%%%%%%%%%%%%%%%%%%%%%%%%%%%%%%%%%%%%%%%%%%%%%%%%%%%%%%%
%
% DEFINING THEOREM-LIKE ENVIRONMENTS
%
%%%%%%%%%%%%%%%%%%%%%%%%%%%%%%%%%%%%%%%%%%%%%%%%%%%%%%%%%%%%
%
% Examine sometime.
% Here: the first, theorem, is defined [section],
%       and the rest are defined [theorem].
%
% Possible: define all as [section].
%
% Better, if I copy just a few lines.
%
\newtheorem{theorem}{Theorem}[section]

\newtheorem{claim}[theorem]{Claim}
\newtheorem{lemma}[theorem]{Lemma}

\newtheorem{corollary}[theorem]{Corollary}

\newtheorem{observation}[theorem]{Observation}

%%%%%%%%%%%%%%%%%%%%%%%%%%%%%%%%%%%%%%%%%%%%%%%%%%%%%%%%%%%

%%%%%%%%%%%%%%%%%%%%%%%%%%%%%%%%%%%%%%%%%%%%%%%%%%%%%%%%%%%%
%
% MY MACROS
%
%%%%%%%%%%%%%%%%%%%%%%%%%%%%%%%%%%%%%%%%%%%%%%%%%%%%%%%%%%%%

\def\boldhead#1:{\par\vskip 7pt\noindent{\bf #1:}\hskip 10pt}
\def\ithead#1:{\par\vskip 7pt\noindent{\it #1:}\hskip 10pt}

\def\inline#1:{\par\vskip 7pt\noindent{\bf #1:}\hskip 10pt}
\def\midinline#1:{\par\noindent{\bf #1:}\hskip 10pt}
\def\dnsinline#1:{\par\vskip -7pt\noindent{\bf #1:}\hskip 10pt}
\def\ddnsinline#1:{\newline{\bf #1:}\hskip 10pt}
\def\largeinline#1:{\par\vskip 7pt\noindent{\large\bf #1:}\hskip 10pt}
%
%The next command is essentially equivalent to \section*,
%except smaller font

%%%%%%%%%%%%%%%%%%%%%%%%%%%%%%%%%%
%\long\def\comment #1\commentend{}
\long\def\commhide #1\commhideend{}
%\long\def\commfull #1\commend{#1}
%\long\def\commabs #1\commenda{}
\long\def\commtim #1\commendt{#1}
\long\def\commb #1\commbend{}
%
%% FOR LN:
%
\long\def\commedit #1\commeditend{} % Editing comments, marked also by $>>>$

\long\def\commB #1\commBend{}       % Omit in 1996 (both TR and Siena)
                                    % Leave for phase B.

\long\def\commex #1\commexend{}     % LN home exercise (hide solutions)

\long\def\commsiena #1\commsienaend{}  % omit in Siena, show in TR

\long\def\commBI #1\commBIend{}  % omit in Bar-Ilan

%%%%%%%%%%%%%%%%%%%%%%%%%%%%%%%%%%

\long\def\CProof #1\CQED{}

\def\blackslug{\hbox{\hskip 1pt \vrule width 4pt height 8pt
    depth 1.5pt \hskip 1pt}}
\def\QED{\quad\blackslug\lower 8.5pt\null\par}
% In-line QED, for ending a proof with a $$ formula
\def\inQED{\quad\quad\blackslug}

\def\Proof{\par\noindent{\bf Proof:~}}

\long\def\PPP#1{\noindent{\bf Proof:}{ #1}{\quad\blackslug\lower 8.5pt\null}}

\long\def\denspar #1\densend
{#1}
%{{\renewcommand{\baselinestretch}{0.8}\small #1\par\medskip}}

%%%%%%%%%%%%%%%%%%%%%%%%%%%%%%%%%%%%%%%%%%%%%%%%%%%
%
%  DELIMITER PAIRS AND MATHEMATICAL FUNCTIONS
%
%%%%%%%%%%%%%%%%%%%%%%%%%%%%%%%%%%%%%%%%%%%%%%%%%%%
% \newcommand{\attention}[1]{{\bf \textcolor{red}{\# #1 \#}} \\}

%\newcommand{\prob}[1]{\Pr\left\{ #1 \right\}}

%%%%%%%%%%%%%%%%%%%%%%%%%%%%%%%%%%%%%%%%%%%%%%%%%%%%%%%
%
%  MARGINAL NOTES FOR COMMUNICATING WITH COAUTHORS
%
%%%%%%%%%%%%%%%%%%%%%%%%%%%%%%%%%%%%%%%%%%%%%%%%%%%%%%%

\setlength{\marginparwidth}{1in}
\setlength{\marginparpush}{-5ex}
\newif\ifnotesw\noteswtrue% T to show box & marginal notes; F supresses.
   {\ifnotesw\marginpar[\hfill\(\top\)]{\(\top\)}\fi}%
   {\ifnotesw\marginpar[\hfill\(\bot\)]{\(\bot\)}\fi}

\newcommand{\mnote}[1]%
    {\ifnotesw\marginpar%
        [{\scriptsize\it\begin{minipage}[t]{\marginparwidth}
        \raggedleft#1%
                        \end{minipage}}]%
        {\scriptsize\it\begin{minipage}[t]{\marginparwidth}
        \raggedright#1%
                        \end{minipage}}%
    \fi}

%%%%%%%%%%%%%%%%%%%%%%%%%%%%%%%%%%%%%%%%%%%%%%%%%%%%%%%%%%%%
%
% SPECIAL LETTERS
%
%%%%%%%%%%%%%%%%%%%%%%%%%%%%%%%%%%%%%%%%%%%%%%%%%%%%%%%%%%%%

%%%%%%%%%%%%%%%%%%

%\def\wmax{{\hat \omega}}
%%%%%%%%%%%%%%%%%%

      % seems that \bf is occupied

%%%%%%%%%%%%%%%%%%

%%%%%%%%%%%%%%%%%%%%%%%%%%%%%%%%%%%%%%%%%%%%%%%%%%%%%%%

%%%%%%%%%%%%%%%%%%%%%%%%%%%%%%%%%%%%%%%%%%%%%%%%%%%%%%%
%
% DOUBLING LETTERS (e.g. IE as E).
%
%%%%%%%%%%%%%%%%%%%%%%%%%%%%%%%%%%%%%%%%%%%%%%%%%%%%%%%
%
% For the Z I changes to san-serif, to avoid strange tags,
% and decreased the spacing (from -3 to -4).
%
\def\MathF{\hbox{\rm I\kern-2pt F}}
\def\MathP{\hbox{\rm I\kern-2pt P}}
\def\MathR{\hbox{\rm I\kern-2pt R}}
\def\MathZ{\hbox{\sf Z\kern-4pt Z}}
\def\MathN{\hbox{\rm I\kern-2pt I\kern-3.1pt N}}
\def\MathC{\hbox{\rm \kern0.7pt\raise0.8pt\hbox{\footnotesize I}
\kern-4.2pt C}}
\def\MathQ{\hbox{\rm I\kern-6pt Q}}

%% Fails:
%% \def\MathC{\hbox{\rm \kern0.7pt\raise0.8pt\hbox{\specialeightrm I}
%% \kern-4.2pt C}}
\def\MathE{\hbox{{\rm I}\hskip -2pt {\rm E}}}

%%%%%%%%%%%%%%%%%%%%%%%%%%%%%%%%%%%%%%%%%%%%%%%%%%%%%%%%%%%%
% ANOTHER NICE FONT TO USE - for doubled letters

%%%%%%%%%%%%%%%%%%%%%%%%%%%%%%%%%%%%%%%%%%%%%%%%%%%%%%%%%%%%

%%%%%%%%%%%%%%%%%%%%%%%%%%%%%%%%%%%%%%%%%%%%%
%
% MATHEMATICAL NOTATION
%
%%%%%%%%%%%%%%%%%%%%%%%%%%%%%%%%%%%%%%%%%%%%%

% \tends  -  for limit notation (f(x)--->0)
% Avishai's version:::

% Malki's version:::
\newsavebox{\ttop}\newsavebox{\bbot}

%

% nice AMS TeX empty set

%\def\polylog{\mbox{polylog}\,}
%\def\polylog{\mbox{\it polylog}\,}

%%%%%%%%%%%%%%%%%%%%%%%%%%%%%%%%%%%%%%%%%%%%%
%
% PROBABILITY TERMINOLOGY
%
%%%%%%%%%%%%%%%%%%%%%%%%%%%%%%%%%%%%%%%%%%%%%
%\newcommand{\Prob}[1]{\Pr\left( #1 \right)}
\newcommand{\Exp}[1]{\MathE\left( #1 \right)}
\newcommand{\Prob}{\MathP}

%%%###########################################################

\denseformat

\begin{document}

\title{Wireless Expanders}

\author{Shirel Attali\thanks{The Weizmann Institute of Science. Email: {\tt shirel.attali@weizmann.ac.il}.}
\and
Merav Parter\thanks{The Weizmann Institute of Science. Email: {\tt merav.parter@weizmann.ac.il}.}
\and
David Peleg\thanks{The Weizmann Institute of Science. Email: {\tt david.peleg@weizmann.ac.il}.}
\and
Shay Solomon\thanks{IBM Research. Email: {\tt solo.shay@gmail.com}.}}

\date{\empty}

\begin{titlepage}
\def\thepage{}
\maketitle

%%%%%%%%%%%%%%%%%%%%%%

\maketitle

\begin{abstract}
This paper introduces an extended notion of expansion suitable for
radio networks. A graph $G=(V,E)$ is said to be an
$(\alpha_w, \beta_w)$-{\em wireless expander} if for every subset
$S \subseteq V$ s.t. $|S|\leq \alpha_w \cdot |V|$, there exists a subset
$S'\subseteq S$  s.t. there are at least $\beta_w \cdot |S|$ vertices in
$V\backslash S$ that are adjacent in $G$ to exactly one vertex in $S'$.
The main question we ask is the following: to what extent are ordinary expanders also good \emph{wireless} expanders? We answer this question in a nearly tight manner. On the positive side, we show that any $(\alpha, \beta)$-expander with maximum degree $\Delta$ and $\beta\geq 1/\Delta$ is also a  $(\alpha_w, \beta_w)$ wireless expander for $\beta_w =  \Omega(\beta / \log (2 \cdot \min\{\Delta / \beta, \Delta \cdot \beta\}))$.
Thus the wireless expansion can be smaller than the ordinary expansion by at most a factor that is logarithmic in
$\min\{\Delta / \beta, \Delta \cdot \beta\}$, which, in turn, depends on the \emph{average degree} rather than the maximum degree of the graph.
In particular, for low arboricity graphs (such as planar graphs), the wireless expansion matches the ordinary expansion up to a constant factor.
We complement this positive result by presenting an explicit construction of a ``bad'' $(\alpha, \beta)$-expander for which the wireless expansion is $\beta_w = O(\beta / \log (2 \cdot \min\{\Delta / \beta, \Delta \cdot \beta\})$.

We also analyze the theoretical properties of wireless expanders and their connection to unique neighbor expanders, and then demonstrate their applicability:
Our results (both the positive and the negative) yield improved bounds for the \emph{spokesmen election problem} that was introduced in the seminal paper of Chlamtac and Weinstein \cite{chlamtac1991wave} to devise efficient broadcasting for multihop radio networks.
Our negative result yields a significantly simpler proof than that from the seminal paper of Kushilevitz and Mansour \cite{kushilevitz1998omega}
for a lower bound on the broadcast time in radio networks.
\end{abstract}
\end{titlepage}

\pagenumbering{arabic}
%\newpage
%%%%%%%%%%%%%%%%%%%%%
\section{Introduction}
\label{Introduction}
%%%%%%%%%%%%%%%%
\subsection{Background and motivation}
An expander is a sparse graph that has strong connectivity properties
\cite{hoory2006expander}. There are several definitions for expanders,
with natural connections between them.
We focus on the following combinatorial definition.
\inline Expanders: Let $G=(V,E)$ be an undirected graph.
For a set $S\subset V$, let $\Gamma(S)$ denote the set of neighbors of the verices of $S$, and define $\EN=\Gamma(S)\setminus S$. We say that $G$ is an $(\alpha, \beta)$ {\em expander}, for positive parameters $\alpha$ and $\beta$, if $|\EN|\geq \beta \cdot |S|$ for every $S \subseteq V$ s.t. $|S|\leq \alpha \cdot |V|$.

One of the main advantages of expanders is that they enable fast and effective
dissemination of information from a small group of vertices to the outside
world. This property becomes less immediate when we consider using
%Our main interest in this study is in applying
the expansion property in the context of wireless communication networks.
Such networks can be represented by a specific kind of graphs,
called {\em radio networks} \cite{chlamtac1985broadcasting}.
A radio network is an undirected (multihop) network of processors
that communicate in synchronous rounds in the following manner.
In each step, a processor can either transmit or keep silent.
A processor receives a message in a given step if and only if it keeps silent
and precisely one of its neighbors transmits in this step.
If none of its neighbors transmits, it hears nothing.
If more than one neighbor (including itself) transmits in a given step,
then none of the messages is received.
In this case we say that a \emph {collision} occurred.
It is assumed that the effect at processor $u$ of more than one of its
neighbors transmitting is the same as of no neighbor transmitting,
i.e., a node cannot distinguish a collision from silence.

%As we are going to see next,
The usual definition
of expanders is not enough to ensure fast message propagation in radio networks.
Consider, for example, a radio network $C^+$ consisting of a complete graph $C$
with one more vertex $s_0$, the source, connected to two vertices $x$ and $y$
from $C$. Obviously this is a good expander, but in this case, after the first
step of broadcast, if all the vertices that received the message
(i.e., the three vertices $s_0$, $x$ and $y$) transmit it simultaneously
to all their neighbors, then no one will hear it.
This motivates considering another definition of expanders, namely,
{\em unique neighbor expanders} (or unique expanders, in short)\cite{AC02}.
%One of these tasks is message broadcasting. Broadcast is a process
%by which a message $M$, initiated by a processor $s_0$ (the source)
%is delivered to all other processors in the network.
%In ordinary (point to point message passing) networks, one may expect that
%if the graph is a good expander then broadcast will be performed fast
%(by requiring all the vertices that received the message to send it
%to all their neighbors).
%\vspace{-5pt}
\inline Unique neighbor expanders: Let $G=(V,E)$ be an undirected graph. We say that $G$ is
an $(\alpha_u, \beta_u)$-{\em unique neighbor expander} if for every $S \subseteq V$
s.t. $|S|\leq \alpha_u \cdot |V|$, there are at least $\beta_u \cdot |S|$
vertices in $V\backslash S$ that are adjacent in $G$ to exactly one vertex
in $S$.

Clearly, if $G$ is a unique expander with good parameters,
then broadcasting on it can be fast
(again, by requiring all the vertices that received the message
to send it to all their neighbors).
Unfortunately, it seems that unique neighbor expansion might be hard to come by.
For example, while the graph $C^+$ described above is a good (ordinary) expander, it is clearly not a good unique expander, as can be realized by considering the set $S=\{x,y,s_0\}$. (In general, ordinary expanders might have rather small unique neighbor expansion, as will be shown soon.) In addition, explicit constructions of unique expanders are rather scarce and known only for a limited set of parameters \cite{AC02,becker2016symmetric}.

The key observation triggering the current paper is that the property
required from unique expanders might be \emph{stronger} than necessary.
This is because there is no reason to require {\em all} the vertices
that received the message to send it. Rather, it may be enough to pick
a {\em subset} $X$ of this set, that has a large set of unique neighbors,
and require only the vertices of $X$ to transmit.
This may be an attractive alternative since such a property may be easier
to guarantee than unique neighbor expansion, and therefore may be achievable
with better parameters $\alpha$ and $\beta$.
(Note, e.g., that this property holds for our example graph $C^+$.)
This observation thus motivates our definition for a new variant of expanders.
%\vspace{-5pt}
\inline Wireless expanders: Let $G=(V,E)$ be an undirected graph. We say that
$G$ is an $(\alpha_w, \beta_w)$-{\em wireless expander} if for every
$S \subseteq V$
s.t. $|S|\leq \alpha_w \cdot |V|$, there exists a subset $S'\subseteq S$
s.t. there are at least $\beta_w \cdot |S|$ vertices in $V\backslash S$
that are adjacent to exactly one vertex in $S'$.

In this paper we are interested in investigating the properties of
wireless expanders and the relationships between these graphs
and the classes of ordinary expanders and unique neighbor expanders. We ask the following questions: by how much does the relaxed definition of wireless expanders (compared to unique neighbor expanders) help us in providing expanders with better parameters that are suitable for radio network communication? More specifically, given an $(\alpha, \beta)$-expander, can we prove that it is also
an $(\alpha_w, \beta_w)$-wireless expander with $\alpha_w= f(\alpha, \beta)$
and $\beta_w=g(\alpha, \beta)$, for some functions $f$ and $g$?

%The other direction is to investigate the usefulness of wireless expander
%graphs for various applications.
%In particular, we are interested in finding whether certain applications,
%such as broadcasting in a radio network,
%(see \cite{DBLP:journals/jcss/AlonBLP91},
%\cite{DBLP:journals/jcss/Bar-YehudaGI92},  \cite{CW91}),
%can be simplified or even improved on a wireless expander.

%%%%%%%%%%%%%%%%%%%%%%%%%%%%%%%
\subsection{Our Contribution}

We present several results relating the parameters
of the different notions of expanders.
We begin by investigating the relationships between ordinary expanders and
the more strict notion of unique neighbor expanders.
\begin{itemize}

\item Let $G=(V,E)$ be a $d$-regular graph that is an $(\alpha_u, \beta_u)$-unique neighbor expander, and let $\lambda=\lambda_2$ denote the second largest eigenvalue of its adjacency matrix, given by $a_{uv}=1$ if $(u,v)\in E$ and $a_{uv}=0$ otherwise. Then $G$ is an $(\alpha, \beta)$-expander with $\alpha= \alpha_u$ and $\beta\geq (1-1/d) \cdot\beta_u + (d- \lambda)/d \cdot (1-\alpha_u)$.

\item Suppose $G=(V,E)$ is an $(\alpha,\beta)$-expander with maximum degree $\Delta$. Then
it is also an $(\alpha_u ,\beta_u )$-unique expander with $\alpha_u=\alpha$, and $\beta_u\geq 2\beta-\Delta$. On the other hand, we show that there is an $(\alpha,\beta)$ bipartite expander whose unique expansion is $\beta_u\leq 2\beta-\Delta$.
\end{itemize}
We then turn to consider our new relaxed notion of wireless expander. Our key contribution is in providing nearly tight characterization for the relation between ordinary expanders and wireless expanders. On the positive side,
using the probabilistic method, we show:
\begin{mdframed}[hidealllines=true,backgroundcolor=gray!25]
%\vspace{-8pt}
\begin{theorem}[Positive Result] \label{thm:lb} For every $\Delta\geq 1$, $\beta \geq 1/\Delta$, every $(\alpha,\beta)$-expander $G$ with maximum degree $\Delta$ is a also an ($\alpha_w ,\beta_w$)-wireless expander with $\alpha_w \geq \alpha$ and
$$\beta_w = \Omega(\beta / \log (2\cdot \min\{\Delta / \beta, \Delta \cdot \beta\})).$$
%\vspace{-3pt}
\end{theorem}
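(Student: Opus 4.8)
The plan is to prove this via the probabilistic method, selecting $S'$ by a random sampling of $S$ and showing that the expected number of uniquely covered external vertices is large. Fix any $S$ with $|S|\le\alpha|V|$; by the expansion hypothesis $|\Gamma^-(S)|\ge\beta|S|$. For $v\in\Gamma^-(S)$ write $d(v)=|\Gamma(v)\cap S|\in[1,\Delta]$ for its \emph{back-degree} into $S$. If I form $S'$ by including each vertex of $S$ independently with probability $p$, then $v$ is adjacent to exactly one vertex of $S'$ with probability $d(v)\,p\,(1-p)^{d(v)-1}$, which is $\Theta(1)$ precisely when $p\approx 1/d(v)$. Thus a single sampling rate $p$ handles well all external vertices whose back-degree sits at one \emph{scale} $\approx 1/p$; the whole difficulty is that the back-degrees $d(v)$ may be spread over many scales.

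To cope with this spread I would run the sampling over a geometric grid of rates $p_i=2^{-i}$ and sum the expected coverage over $i$. Each $v$ contributes $\Omega(1)$ to the term at the scale $i\approx\log d(v)$, so $\sum_i \e{U_i}\ge \Omega(|\Gamma^-(S)|)\ge\Omega(\beta|S|)$, where $U_i$ is the number of vertices uniquely covered at rate $p_i$. Consequently some rate achieves $\e{U_i}\ge\Omega\!\big(\beta|S|/k\big)$, where $k$ is the number of grid points that actually carry mass; fixing such an $S'$ (the expectation is attained by some outcome) yields a wireless expander with $\beta_w=\Omega(\beta/k)$. Everything therefore reduces to bounding $k$, the number of relevant scales, by $O(\log\min\{\Delta/\beta,\Delta\beta\})$ rather than the trivial $O(\log\Delta)$.

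The two terms in the minimum come from two ways of confining the scales, and I would take whichever is smaller. For the $\Delta/\beta$ term I double count edges: $\sum_{v}d(v)=e(S,\Gamma^-(S))\le\Delta|S|$, so the \emph{average} back-degree is at most $\Delta/\beta$, and by Markov a constant fraction of $\Gamma^-(S)$ has $d(v)\le 2\Delta/\beta$. Restricting attention to these vertices confines the useful scales to $i\in[0,\log(2\Delta/\beta)]$, i.e.\ $k=O(\log(\Delta/\beta))$; note this depends on the average degree, not on $\Delta$. For the $\Delta\beta$ term (the interesting regime $\beta<1$) I would first \emph{thin} $S$ at rate $\approx\beta$: this rescales every back-degree to $\approx\beta\, d(v)\le\Delta\beta$, collapsing the window of relevant scales to width $\log(\Delta\beta)$, after which the grid argument above is applied to the thinned set.

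The hard part will be the thinning argument. After thinning, an external vertex of small original back-degree may lose all of its back-edges and drop out, so I must show that a constant fraction of $\Gamma^-(S)$ survives with a rescaled back-degree lying in the target window $[1,\Delta\beta]$, and that the surviving coverage is still measured favorably against the original $|S|$. I expect this to require a case analysis according to whether the mass of $\Gamma^-(S)$ sits at low or high back-degree, matching the thinned argument against the Markov argument so that in every regime the number of scales is $O(\log\min\{\Delta/\beta,\Delta\beta\})$. The remaining steps---passing from the expected coverage to the existence of one good $S'$ by concentration, and checking $\alpha_w\ge\alpha$---are routine.
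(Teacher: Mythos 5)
Your treatment of the $\Delta/\beta$ term is essentially the paper's own argument (Lemma~\ref{lem:logdelta}): bound the average back-degree by $\Delta/\beta$ via edge counting, keep only the half of $\Gamma^{-}(S)$ with at most twice-average back-degree, bucket it into $O(\log(2\Delta/\beta))$ dyadic degree classes, and sample $S$ at the rate matched to the largest class; your ``sum over a grid of rates and average'' is the same computation phrased differently. The genuine gap is in the $\Delta\cdot\beta$ term, i.e.\ the regime $\beta<1$, and it sits exactly where you flag ``the hard part'': the thinning step cannot be repaired along the lines you sketch. The claim you say you must show --- that after thinning $S$ at rate $\beta$ a constant fraction of $\Gamma^{-}(S)$ survives with back-degree in $[1,\Delta\beta]$ --- is false for bipartite graphs satisfying only the hypotheses your argument uses ($|\Gamma^{-}(S)|\ge\beta|S|$, maximum degree $\Delta$, no isolated vertices): a vertex of back-degree $d(v)\ll 1/\beta$ survives with probability only about $\beta\, d(v)$, so the surviving mass of the low-back-degree part is $\beta$ times its edge count, which can be $o(|\Gamma^{-}(S)|)$. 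The fallback of playing thinning against Markov also fails. Take $\beta=\log\Delta/\Delta$ and $|N|=\beta|S|$; let $|N|/\log\Delta$ vertices of $N$ have back-degree $\approx\Delta$ (these absorb the $\approx|S|$ edges needed so that no vertex of $S$ is isolated), and spread the remaining $(1-o(1))|N|$ vertices evenly over back-degree scales $1,2,4,\ldots,\Delta^{0.9}$ (they contribute only $\Delta^{-0.1}|S|=o(|S|)$ edges). Every single sampling rate then uniquely covers only $O(|N|/\log\Delta)$ vertices in expectation, and thinning at rate $\beta$ retains only $O(|N|/\log\Delta)$ vertices, whereas the theorem demands $\Omega(|N|/\log(2\Delta\beta))=\Omega(|N|/\log\log\Delta)$. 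This instance is of course not part of a genuine $(\alpha,\beta)$-expander --- the $(1-o(1))|S|$ vertices of $S$ adjacent only to the heavy part of $N$ reach only $|S|/\Delta\ll\beta|S|$ external neighbors --- but your proof never re-invokes the expansion hypothesis on subsets of $S$, so it cannot exploit this.

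That re-invocation is the missing idea, and it is how the paper handles $\beta<1$ (Lemma~\ref{lem:reductionsmallb}). Let $S'$ be the half of $S$ whose forward degree is at most twice the average $\delta_S\le\Delta\beta$; by the expansion of the \emph{subset} $S'$, the set $N'=\Gamma^{-}(S')$ still has at least $\beta|S|/2$ vertices. Then, in place of random thinning, greedily extract $S''\subseteq S'$ that still covers all of $N'$ with $|S''|\le|N'|$ (add a vertex of $S'$ only if it covers a not-yet-covered vertex of $N'$). This deterministic sparsification guarantees exactly what thinning cannot: every vertex of $N'$ keeps at least one neighbor, while the average back-degree of $N'$ into $S''$ drops to at most $e(S'',N')/|N'|\le 2\delta_S|S''|/|N'|\le 2\delta_S\le 2\Delta\beta$. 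Your $\beta\ge 1$ argument then applies verbatim to the pair $(S'',N')$ and yields $\Omega(\beta/\log(2\Delta\beta))\cdot|S|$ uniquely covered vertices. So the skeleton of your proof is right, but the second half needs the expansion property a second time together with a covering argument, not a thinning. (Minor point: no concentration is needed at the end; the first-moment method already gives one good outcome $S'$.)
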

\end{mdframed}
Our probabilistic argument has some similarity to the known \emph{decay} method \cite{Bar-YehudaGI87}, which is a standard technique for coping with collisions
in radio networks. Roughly speaking, in the decay protocol of \cite{Bar-YehudaGI87}, time is divided into phases of $\log n$ rounds and in the $i^{th}$ round of each phase, each node that holds a message transmits it with probability $2^{-i}$. Hence, each node that has a neighbor that holds a message, receives it within $O(\log n)$ phases. We use the idea of the decay method to show the existence of a subset $S' \subseteq S$ with a large unique neighborhood in $\Gamma(S)$.

An important feature of our argument is that it bounds the deviation of the wireless expansion from the ordinary expansion as a function of the \emph{average-degree} rather than the maximum degree.
As $\beta$ gets closer to $\Delta$ or to $1/\Delta$,
this finer dependence leads to significantly better results than what could be achieved using the standard decay argument;
our argument
%leads to stronger results
is also arguably simpler than the standard decay argument.
As a technical note, we use the probabilistic method to prove a lower bound of $\Omega(\beta / \log (2\cdot \Delta / \beta))$ on $\beta_w$, and then we push it up to
the bound of Theorem \ref{thm:lb} via a separate \emph{deterministic} argument.
As a corollary, for the important family of low \emph{arboricity graphs}, which includes planar graphs and more generally graphs excluding a fixed minor,
%we have $\min\{\Delta / \beta, \Delta \cdot \beta\} = O(1)$, hence
the wireless expansion matches the ordinary expansion up to a constant factor.
(Indeed, the arboricity is at least $\min\{\Delta / \beta, \Delta \cdot \beta\}$; see Section \ref{Notation and Definitions} for the definition of arboricity.)
In particular, this shows that radio broadcast in low arboricity graphs can be done much more efficiently than what was previously known!

Beyond the probabilistic argument, we also provide explicit deterministic arguments that obtain better parameters (by a constant factor); these are deferred to the appendix.

We also show that asymptotically, no tighter connection can be established:
\begin{mdframed}[hidealllines=true,backgroundcolor=gray!25]
%\vspace{-8pt}
\begin{theorem}[Negative Result] \label{thm:ub} There exists an $(\alpha,\beta)$-expander with maximum degree $\Delta$,
whose wireless expansion is $\beta_w = O(\beta / \log (2\cdot \min\{\Delta / \beta, \Delta \cdot \beta\})$.
\end{theorem}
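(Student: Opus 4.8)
The plan is to construct an explicit bipartite graph $H=(A\cup B,E)$ that is an $(\alpha,\beta)$-expander and to exhibit a single ``bad'' set, namely $S=A$, for which \emph{every} subset $S'\subseteq S$ has only $O(\beta|S|/\log k)$ unique neighbours, where $k=\min\{\Delta/\beta,\Delta\cdot\beta\}$. Since the wireless expansion is the largest $\beta_w$ that works for \emph{all} admissible sets, exhibiting one such $S$ (of size $|A|\le\alpha|V|$) immediately certifies $\beta_w=O(\beta/\log k)$, matching Theorem~\ref{thm:lb}. The core object is a \emph{multi-scale degree gadget}: split $B$ into $t=\lfloor\log_2 k\rfloor$ classes $B_1,\dots,B_t$, each of size $m=\Theta(\beta|A|/t)$, where every vertex of $B_j$ has exactly $d_j=2^{j}$ neighbours in $A$. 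The classes are \emph{balanced} on purpose, so that no choice of $S'$ is simultaneously well-tuned to more than a constant number of scales: the best $S'$ harvests unique neighbours from only $O(1)$ classes, i.e.\ $O(m)$ of them, a $\Theta(t)=\Theta(\log k)$ factor below $|B|=\Theta(\beta|A|)$.

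First I would fix the parameters and check the degree budget. The right-degrees satisfy $d_j\le 2^{t}\le k\le\Delta$, while the average left-degree is $\tfrac{1}{|A|}\sum_j m\,d_j=\Theta(mk/|A|)=\Theta(\beta k/\log k)$; substituting $k=\Delta/\beta$ when $\beta\ge 1$ and $k=\Delta\cdot\beta$ when $\beta<1$ gives $\Theta(\Delta/\log k)$ and $\Theta(\Delta\beta^{2}/\log k)$ respectively, both $\le\Delta$. Thus $H$ has maximum degree $O(\Delta)$ in either regime, and the two cases of the $\min$ correspond exactly to which of the two degree bounds is tight. The incidences inside each class are taken from an \emph{explicit} bipartite spectral expander between $A$ and $B_j$ (e.g.\ a Ramanujan or zig-zag incidence), so that each $B_j$ acts as a good disperser at scale $d_j$. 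I would then verify that $H$ is a genuine $(\alpha,\beta)$-expander: the disperser property lower-bounds $|\Gamma(S_0)\cap B|$ for every $S_0\subseteq A$ and symmetrically from the $B$-side, and by padding the vertex set with a disjoint bounded-degree expander I can drive $|A|/|V|$ below $\alpha$ and guarantee $\beta$-expansion for all sets of size $\le\alpha|V|$ without changing the unique-neighbour count of $A$ by more than a constant factor.

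The heart of the proof is the claim that for every $A'\subseteq A$ one has $\sum_{j=1}^{t} u_j(A')=O(m)$, where $u_j(A')$ counts the vertices of $B_j$ adjacent to exactly one vertex of $A'$. I would prove this scale by scale. Writing $s=|A'|$, $p=s/|A|$, and $x_j=s\,d_j/|A|$ for the expected number of $A'$-neighbours of a $B_j$-vertex, a spectral moment computation approximates the ``exactly-one'' count by its average, yielding
$$u_j(A')=O\!\left(m\,x_j\,e^{-\Omega(x_j)}\right)+(\text{spectral error}).$$
Because the degrees double across scales, the $x_j$ form a geometric sequence, so the bump $x\mapsto x\,e^{-\Omega(x)}$ is $\Theta(1)$ on only $O(1)$ scales and decays geometrically on either side of the window $x_j=\Theta(1)$; summing over $j$ gives $O(m)$. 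Taking $S=A$ then yields $\max_{S'\subseteq S}\,|\{v\in B:\ v\text{ has exactly one neighbour in }S'\}|=O(m)=O(\beta|S|/\log k)$, hence $\beta_w=O\!\left(\beta/\log(2\cdot\min\{\Delta/\beta,\Delta\cdot\beta\})\right)$, as required.

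The main obstacle is the uniform-over-$A'$ bound above: an adversary may tailor $A'$ to a particular scale, so the estimate must hold for worst-case (not random) $A'$ simultaneously across all $t$ scales. The delicate point is that ``exactly one neighbour'' is a \emph{non-monotone} event, and the naive bounds (touched $\le\min(m,mx_j)$) are too weak on the high-density scales $x_j\gg1$, where one must show that most touched right vertices are in fact multiply covered. Controlling the count of \emph{singly}-covered vertices by its expectation, via an inclusion–exclusion / high-moment argument whose error is governed by the spectral gap, is exactly where the explicitness of the per-scale expanders is essential, and where the error terms must be kept below the $\Theta(m)$ target at every scale. A secondary, more routine obstacle is the simultaneous bookkeeping of the expansion, maximum-degree and $\alpha$ constraints in both regimes $\beta\ge 1$ and $\beta<1$.
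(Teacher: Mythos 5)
Your high-level architecture (a union of $\Theta(\log k)$ equal-size degree classes $B_j$ with right-degrees doubling across scales, so that any single $S'$ can be ``tuned'' to only $O(1)$ scales) is in fact the same skeleton as the paper's core graph, where the vertices of $\NN$ at tree-level $i$ form exactly such a degree class of size $s$ and degree $s/2^i$. The difference is entirely in how the key inequality $\sum_j u_j(A')=O(m)$ for \emph{every} $A'$ is established, and this is where your proposal has a genuine gap. For the scales with $x_j\gg 1$ you need $u_j(A')$ to decay geometrically in $j$ for worst-case $A'$, and the spectral gap does not give this. Concretely, if $U\subseteq B_j$ is the set of singly-covered vertices, the only handle the expander mixing lemma gives you is $e(A',U)=|U|$ versus its expectation, i.e.\ $|U|\,(x_j-1)\le \lambda\sqrt{|A'|\,|U|}$, hence $|U|\le \lambda^2|A'|/(x_j-1)^2$. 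For an optimal (Ramanujan-type) biregular incidence, $\lambda^2=\Theta(d_j\cdot d_L)=\Theta(m\,d_j^2/|A|)$, and substituting $x_j=|A'|\,d_j/|A|$ makes the $d_j$-dependence cancel: the bound becomes $O(m\,|A|/|A'|)$, which is at least $m$ for every high-density scale, does not decay in $j$, and sums over the $\Theta(\log k)$ scales to $\Omega(|B|\cdot|A|/|A'|)$ --- useless. Higher moments of the biadjacency matrix do not rescue this for adversarial $A'$: ``exactly one neighbour'' is a second-order combinatorial statistic that the spectrum simply does not pin down, which is why the only previously known graphs with this property (Alon et al.) were obtained by the probabilistic method with a union bound over all $A'$, not from generic explicit expanders. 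So the step you flag as ``the main obstacle'' is not a technicality to be routinely handled; it is the whole theorem, and your proposed tool provably cannot close it.

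The paper avoids this by giving the degree classes a \emph{laminar} structure: the sets $\NN_v$ are indexed by the vertices $v$ of a perfect binary tree whose leaves are $S$, and $z\in S$ is joined to $\NN_v$ exactly when $v$ is an ancestor of $z$. Then for any $S'$ one proves by induction on the inverse-level $j$ that $|\UN_{S}(S')\cap\check\NN_v|\le 2^{j+1}-1$: if $S'$ hits both subtrees of $v$ then all of $\NN_v$ is multiply covered, and otherwise an entire subtree contributes nothing --- an argument that uses the combinatorial nesting of the neighbourhoods, not any eigenvalue. Two smaller points: (i) exhibiting a single bad set $S=A$ does suffice to upper-bound $\beta_w$, so that part of your plan is fine; but (ii) padding with a \emph{disjoint} expander cannot make the gadget a genuine $(\alpha,\beta)$-expander, since subsets of $B$ have all their neighbours inside the (smaller, when $\beta>1$) side $A$ and do not expand; the paper instead identifies $\NN^*$ with vertices of an honest expander $G$ and adds $S^*$ as new vertices on top, which is what actually restores two-sided expansion.
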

\end{mdframed}
The explicit construction of this bad graph example is perhaps the most technically challenging result of this paper.
Our explicit construction has interesting connections to related constructions that have been studied in the context of broadcast in radio networks \cite{kushilevitz1998omega,alon2014broadcast}. For instance, our ``core graph'' from Section \ref{core} is reminiscent of a fundamental construction from
\cite{alon2014broadcast}. However, while the construction of \cite{alon2014broadcast} is implicit (using the probabilistic method), our construction is explicit and can be viewed in some sense as the deterministic counterpart of \cite{alon2014broadcast}; moreover, our construction is arguably much simpler than that of \cite{alon2014broadcast}.
We view this explicit construction as the technical highlight of our work, and anticipate that it will find further applications.

An additional application of both our positive and negative results is to the \emph{Spokesman Election problem} introduced in the seminal paper of \cite{chlamtac1991wave}, where given a bipartite graph $G=(S,N,E)$, the goal is to compute a subset $S' \subseteq S$ with the maximum number of unique neighbors $\UN(S')$ in $N$.
More specifically, we provide tight bounds for this problem, which apply to any expansion and average degree parameters,
%i.e., for each choice of these parameters we have an example of a bipartite graph that provides a tight lower bound,
whereas the previous result of \cite{chlamtac1991wave} applies only to one specific (very large) expansion parameter
and only with respect to the maximum degree (rather than the average degree, which is a finer measure).
In Section \ref{rel}, we provide a detailed comparison to the bounds obtained by \cite{chlamtac1991wave}.

Finally, another application of our negative result, and of our explicit core graph in particular, is in the context of broadcast lower bounds in radio networks.
In their seminal paper, Kushilevitz and Mansour \cite{kushilevitz1998omega} proved that there exist networks
in which the expected time to broadcast a message is $\Omega(D \log(n/D))$, where $D$ is the network diameter and $n$ is the number of vertices,
and this lower bound is tight for any $D = \Omega(\log n)$ due to a highly nontrivial upper bound by Czumaj and Rytter \cite{CzumajR06}.
Since the upper bound of \cite{CzumajR06} holds with high probability, it   implies that the lower bound $\Omega(D \log(n/D))$ of \cite{kushilevitz1998omega} also holds
with high probability.
Newport \cite{DBLP:conf/wdag/Newport14} presented an interesting alternative proof to the one by  Kushilevitz and Mansour.
Although short and elegant, Newport's proof relies on two fundamental results in this area, due to Alon et al.\ \cite{DBLP:journals/jcss/AlonBLP91}
and Alon et al.\ \cite{alon2014broadcast} -- Lemma 3.1 in \cite{DBLP:conf/wdag/Newport14} -- whose proof is intricate.
Also, as with Kushilevitz and Mansour's proof, Newport only proves an \emph{expected} lower bound on the broadcast time,
with the understanding that a high probability bound follows from  \cite{CzumajR06}.
By unwinding the ingredients of Newport's proof, the resulting proof (especially for a high probability bound on the broadcast time) is long and intricate.
%On other other hand, our proof is self-contained, it only relies on our simple, explicit core graph construction.
%Moreover, and perhaps more importantly, we provide the **first direct proof for a high probability lower bound** on the broadcast time (in addition to an expected lower bound).
%We thus believe that our proof has important advantages over the previous ones, and we view it as one of our key results (it was a mistake to place part of this proof in the %appendix, we will change this).
%There is also a lower bound of $\Omega(\log^2 n)$ by \cite{DBLP:journals/jcss/AlonBLP91} on the broadcast time.
%which In the complementary regime of diameter $D = O(\log n)$, there is
%The proof of \cite
Using the properties of our explicit core graph construction, we derive a   simple and self-contained proof for the same lower bound, arguably much simpler than that of \cite{kushilevitz1998omega, {DBLP:conf/wdag/Newport14}}.
An important advantage of our proof over
\cite{kushilevitz1998omega, {DBLP:conf/wdag/Newport14}}
is that it gives a high probability bound on the broadcast time \emph{directly},
i.e., without having to take a detour through the upper bound of \cite{CzumajR06}.
%on the other hand, the previous high probability bound proof follows by \emph{combining} the lower bound of \cite{kushilevitz1998omega} (which holds in expectation)
%with the matching upper bound of \cite{CzumajR06} (which holds with high probability), and are much more intricate than ours.

Summarizing, besides the mathematical appeal of wireless expanders and their connections to well-studied types of expanders,
we demonstrate that they find natural applications in the well-studied area of radio networks.
We anticipate that a further study of wireless expanders will reveal additional applications, also outside the scope of radio networks, and we thus
believe it is of fundamental importance.
\subsection{Organization}
In Section\ \ref{sec:prel} we introduce the   notation and definitions used throughout.
We investigate the relations between ordinary expanders and unique neighbor expanders in Section \ref{sec:Relations}.
Section\ \ref{sec:wireless} is devoted to our new notion of wireless expanders,
where we present nearly tight characterization for the relation between ordinary expanders and wireless expanders.
We start (Section\ \ref{frame}) with describing our basic framework; the positive and negative results are presented in Section\ \ref{sec:positive}
and Section\ \ref{sec:neg}, respectively.
(As mentioned, some positive results are deferred to the appendix. These improve on the parameters provided in Section \ref{sec:positive} by constant factors,
using explicit deterministic arguments.)
Our results for the Spokesman Election problem \cite{chlamtac1991wave}
are given in Section \ref{rel}.
Finally, Section\ \ref{alt} is devoted to our alternative lower bound proof of $\Omega(D \log(n/D))$ on the broadcast time in radio networks.
%In this section we provide a simple proof for obtaining a tight lower bound of $\Omega(D \log(n/D))$ on the broadcast time in radio networks,
%which holds both in expectation and with high probability.

\newpage
\section{Preliminaries} \label{sec:prel}

%%%%%%%%%%%%%%%%
\subsection{Graph Notation}
\label{Notation and Definitions}
%%%%%%%%%%%%%%%%%%%%%%%%%%%
For an undirected graph $G=(V,E)$, vertex $v \in V$ and a subset $S \subseteq V$, denote the set of $v$'s neighbors in $G$ by $\Gamma(v)=\{u ~\mid~ (u,v) \in E\}$, and let $\Gamma(S)=\bigcup_{v \in S}  \Gamma(v)$ be the \emph{neighborhood} of a vertex set $S$ in $G$ (including neighbors that belong to $S$ itself), and $\EN=\Gamma(S)\setminus S$ be the set of neighbors external to $S$. Also define $\Gamma(v,S)=\Gamma(v) \cap S$ as the neighbors of $v$ in the subset $S$.
The \emph{expansion} of $S$ is the ratio $|\Gamma^-(S)| / |S|$.
The \emph{unique-neighborhood} of $S$, denoted by $\UN(S)$, is the set of vertices outside $S$ that have a unique neighbor from $S$. The \emph{unique-neighbor expansion} of $S$ is the ratio $|\UN(S)| / |S|$.
Let $S'$ be an arbitrary subset of $S$. The \emph{$S$-excluding neighborhood} of $S'$, denoted by $\Gamma_S(S')$, is the set of all vertices outside $S$ that have at least one neighbor from $S'$. Similarly, the \emph{$S$-excluding unique-neighborhood} of $S'$, denoted by $\UN_S(S')$, is the set of all vertices outside $S$ that have a unique neighbor from $S'$. In particular,  $\UN(S)=\UN_S(S)$.
The \emph{wireless expansion} of $S$ is the maximum ratio $|\UN_S(S')| / |S|$ over all subsets $S'$ of $S$.
For two sets $S,T \subset V$, let $e(S,T)$ be the set of edges connecting $S$ and $T$.
For vertex $v\in V$, let $\deg(v) = \deg_G(v)$ denote the degree of $v$ in $G$, i.e., the number of $v's$ neighbors, and let $\Delta(G)= \max\{\deg(v)~|~ v \in V\}$ be the maximum degree over all the vertices in $G$. For set $S\subset V$ and vertex $v\in V$, let $\deg(v,S)=\deg(v)\cap S$ be the number of $v's$ neighbors that are in $S$. For two vertices $v,u \in V$, let $d(u,v)$ be the distance between $u$ and $v$ (i.e., the length of the shortest path connecting them), and let $D=D(G)=\max \{d(u,v)~|~ u,v \in V\}$ be the diameter of the graph, i.e. the maximum distance between any two vertices.

We use the combinatorial definition for (vertex) expansion, which requires that
every (not too large) set of vertices of the graph has a relatively large
set of neighbors. Specifically, an $n$-vertex graph $G$ is called an \emph{$(\alpha,\beta)$ vertex expander}
 for positive parameters $\alpha$ and $\beta$, if every subset $S \subseteq V$ s.t. $|S|\leq \alpha n$ has many external neighbors, namely, $|\EN|\geq \beta \cdot |S|$. The \emph{(ordinary) expansion} $\beta(G)$ of $G$ is defined as
the minimum expansion over all vertex sets $S \subseteq V$ of size $|S| \le \alpha n$, namely,
$\beta(G) ~=~ \min\{|\EN|/|S| ~\vert~ S \subseteq V, |S| \le   \alpha n\}.$
A similar definition appears in the literature for bipartite graph, namely, a bipartite graph $G=(L,R, E)$ with sides $L$ and $R$, such that every edge from $E \subset L\times R$ connects one vertex of $L$ and one vertex of $R$ is called an $(\alpha,\beta)$ bipartite vertex expander if every subset $S\subset L$ s.t. $|S|\leq \alpha |L|$ has at least $\beta|S|$ neighbors in  $R$. It is usually assumed that the two sides $L$ and $R$ of the bipartition are of (roughly) the same size.
%%%%%%%%%%%%%%%%

A graph $G=(V,E)$ has \emph{arboricity} $\eta = \eta(G)$ if $$\eta~=~ \max_{U\subseteq V}\left\lceil\frac{|E(U)|}{|U|-1}\right\rceil,$$
where $E(U)=\left\{(u,v)\in E\mid u,v\in U\right\}$. Thus the arboricity is the same (up to a factor of 2) as the maximum
\emph{average degree} over all induced subgraphs of $G$. It is easy to see that for any $(\alpha,\beta)$-expander with maximum degree $\Delta$, the arboricity is
at least $\min\{\Delta / \beta, \Delta \cdot \beta\}$.

%$G[U] = (U,E(U))$

\subsection {Unique Neighbor and Wireless Expanders}
Let us now define formally the notions of unique and wireless expanders. Let $G=(V,E)$ be an $n$-vertex undirected graph. We say that $G$ is an $(\alpha_u, \beta_u)$-{\em unique expander} \cite{AC02} if for every $S \subseteq V$ s.t. $|S|\leq \alpha_u n$, there are at least $\beta_u \cdot |S|$ vertices in $V\backslash S$ that are adjacent to exactly one vertex in $S$, namely, $|\UN(S)|\geq \beta_u\cdot |S|$. The \emph{unique-neighbor expansion} $\beta_u(G)$ of $G$ is defined as
the minimum unique-neighbor expansion over all vertex sets $S \subseteq V$ with $|S| \le \alpha_u n$, namely,
$$\beta_u(G) ~=~ \min\{|\UN(S)|/|S| ~\vert~ S \subseteq V, |S| \le   \alpha_u n\}.$$

We say that $G$ is an $(\alpha_w, \beta_w)$-{\em wireless expander} if for every $S \subseteq V$ s.t. $|S|\leq \alpha_w n$, there exists a subset $S'\subseteq S$  s.t. there are at least $\beta_w \cdot |S|$ vertices in $V\backslash S$ that are adjacent in $G$ to exactly one vertex in $S'$, i.e., $|\UN_S(S')|\geq \beta_w\cdot |S|$. The \emph{wireless expansion} $\beta_w(G)$ of $G$ is defined as
the minimum wireless expansion over all sets $S \subseteq V$ with $|S| \le \alpha_w n$, namely,
$$\beta_w(G) ~=~ \min\{\max\{|\UN_S(S')|/|S| ~\vert~ S' \subseteq S\} ~\vert~ S \subseteq V, |S| \le   \alpha_w n\}.$$
In our arguments, we usually fix $\alpha$ and study the relations between the $\beta$-values for different notions of expanders.
The following connection is easy to verify.

 \begin {observation} \label{relations between the betas}
If $\alpha=\alpha_u= \alpha_w$, then $\beta(G)\geq \beta_w(G)\geq\beta_u(G)$.
\end {observation}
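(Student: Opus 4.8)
The plan is to establish the two inequalities $\beta(G) \geq \beta_w(G)$ and $\beta_w(G) \geq \beta_u(G)$ separately, using only the definitions of the three expansion quantities under the common assumption $\alpha = \alpha_u = \alpha_w$. Since all three quantities are defined as a minimum over the same family of sets $S$ (those with $|S| \leq \alpha n$), it suffices to compare, for each \emph{fixed} such $S$, the three per-set quantities: the ordinary expansion $|\EN|/|S|$, the wireless expansion $\max_{S' \subseteq S} |\UN_S(S')|/|S|$, and the unique-neighbor expansion $|\UN(S)|/|S|$. If I can show the corresponding pointwise inequalities hold for every admissible $S$, then taking minima over the same index set preserves the ordering.

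For the inequality $\beta_w(G) \geq \beta_u(G)$, I would fix an admissible $S$ and observe that $S$ itself is a valid choice of subset $S' \subseteq S$ in the definition of the wireless expansion. Since $\UN(S) = \UN_S(S)$ by the definition recalled in the preliminaries, the set $S' = S$ contributes exactly $|\UN(S)|/|S|$ to the maximum defining the per-set wireless expansion. Hence $\max_{S' \subseteq S} |\UN_S(S')|/|S| \geq |\UN(S)|/|S|$, giving the pointwise bound, and minimizing over $S$ yields $\beta_w(G) \geq \beta_u(G)$.

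For the inequality $\beta(G) \geq \beta_w(G)$, I would again fix an admissible $S$ and let $S'$ be the maximizer achieving the per-set wireless expansion, so that the witnessing set $\UN_S(S')$ consists of vertices outside $S$ each having exactly one neighbor in $S'$. Every such vertex lies in $V \setminus S$ and has a neighbor in $S' \subseteq S$, hence belongs to $\EN = \Gamma(S) \setminus S$. Therefore $\UN_S(S') \subseteq \EN$, so $|\UN_S(S')| \leq |\EN|$, which gives $\max_{S' \subseteq S} |\UN_S(S')|/|S| \leq |\EN|/|S|$. Minimizing both sides over the admissible sets $S$ yields $\beta(G) \geq \beta_w(G)$, completing the chain $\beta(G) \geq \beta_w(G) \geq \beta_u(G)$.

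I do not anticipate a genuine obstacle here, as the statement is essentially a bookkeeping consequence of the definitions; the only point requiring mild care is ensuring that all three minima range over the \emph{same} collection of sets $S$, which is exactly guaranteed by the hypothesis $\alpha = \alpha_u = \alpha_w$. Without this hypothesis the index sets would differ and the comparison of minima could fail, so I would flag this as the single place where the assumption is actually used.
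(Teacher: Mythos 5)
Your proof is correct and is exactly the routine verification the paper has in mind (the paper states the observation without proof, remarking only that it is ``easy to verify''): the pointwise comparison per admissible set $S$, using $\UN(S)=\UN_S(S)$ for one inequality and $\UN_S(S')\subseteq\EN$ for the other, followed by taking minima over the common index set. Nothing further is needed.
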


%%%%%%%%%%%%%%%%%%%%%%%%%%%%%%%%
%\vspace{-5pt}

%{Basic bounds on expansion}
%\subsection
\section{Relations between $\beta$ and $\beta_u$}
\label{sec:Relations}
%%%%%%%%%%%%%%%%%%%%%%%%%%%%%%%%

Let $G=(V,E)$ be a $d$-regular undirected graph and let $A=A_G=(a_{uv})_{u,v\in V}$ be its adjacency matrix given by $a_{uv}=1$ if $(u,v)\in E$ and $a_{uv}=0$ otherwise. Since $G$ is $d$-regular, the largest eigenvalue of $A$ is $d$, corresponding to the all-$1$ eigenvector (as $1/d\cdot A$ is a stochastic matrix). Let $\lambda=\lambda_2$ denote the second largest eigenvalue of $G$.
%The proof of the following lemma is deferred to Appendix \ref{app:relation}.
\begin{lemma}
\label{lem:uniq1}
If  a $d$-regular graph $G=(V,E)$ is an $(\alpha_u, \beta_u)$-unique expander,
then it also an $(\alpha, \beta)$-expander with $\alpha= \alpha_u$ and
$\beta\geq (1-1/d) \cdot\beta_u + (d- \lambda) \cdot (1-\alpha_u)/d$.
\end{lemma}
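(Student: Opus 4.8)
The plan is to lower-bound $|\EN|$ for an arbitrary set $S$ with $|S|\le \alpha_u n$ (recall $\alpha=\alpha_u$) by adding up two \emph{independent} sources of external neighbors: the unique neighbors handed to us by the unique-expansion hypothesis, and the sheer number of crossing edges forced by the spectral gap $d-\lambda$. The whole point is to combine these additively so as to recover the two summands in the claimed bound, rather than using either one in isolation.

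First I would recall the standard spectral lower bound on the cut size. Writing $L=dI-A$ for the Laplacian and $x=\mathbf{1}_S-(|S|/n)\,\mathbf{1}$ (the indicator of $S$ projected orthogonally to the all-ones eigenvector), one has $x^\top L x=\sum_{(u,v)\in E}(x_u-x_v)^2=e(S,V\setminus S)$, since only crossing edges contribute, each contributing exactly $1$. On the other hand, because $x\perp\mathbf{1}$ and the smallest nonzero eigenvalue of $L$ is $d-\lambda_2=d-\lambda$, the variational characterization gives $x^\top Lx\ge (d-\lambda)\|x\|^2$. Computing $\|x\|^2=|S|(n-|S|)/n$ then yields
$$ e(S,V\setminus S)\ \ge\ (d-\lambda)\,\frac{|S|\,(n-|S|)}{n}\ \ge\ (d-\lambda)(1-\alpha_u)\,|S|, $$
where the last step uses $|S|\le \alpha_u n$, so that $(n-|S|)/n\ge 1-\alpha_u$.

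Next I would split the external neighborhood as $\EN=\UN(S)\cup W$, where $W=\EN\setminus\UN(S)$ is the set of external vertices having \emph{at least two} neighbors in $S$. Counting crossing edges by their endpoint in $V\setminus S$ gives $e(S,V\setminus S)=\sum_{v\in\EN}\deg(v,S)=|\UN(S)|+\sum_{v\in W}\deg(v,S)\le |\UN(S)|+d\,|W|$, using $\deg(v,S)\le d$. Rearranging yields $|W|\ge\big(e(S,V\setminus S)-|\UN(S)|\big)/d$, and hence
$$ |\EN|=|\UN(S)|+|W|\ \ge\ \Big(1-\tfrac1d\Big)|\UN(S)|+\tfrac1d\,e(S,V\setminus S). $$
Since both coefficients are nonnegative, I may substitute the hypothesis $|\UN(S)|\ge\beta_u|S|$ and the spectral bound above, obtaining exactly $|\EN|\ge\big[(1-1/d)\beta_u+(d-\lambda)(1-\alpha_u)/d\big]|S|$, which is the assertion.

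The only genuinely nontrivial ingredient is the spectral cut bound, and even that is a textbook application of the variational characterization of $\lambda_2$; everything else is elementary edge counting. The subtle point to get right is the \emph{direction} of the edge-counting step: I use the upper bound $e(S,V\setminus S)\le |\UN(S)|+d\,|W|$ to extract a lower bound on $|W|$, so that the crude per-vertex degree $d$ — rather than the exact degree — is precisely what produces the stated factor $(1-1/d)$ in front of $\beta_u$. Getting a cleaner bound would require knowing more about the degrees of the multi-neighbor vertices in $W$, which the hypothesis does not supply, so this coefficient appears to be the best the argument can give.
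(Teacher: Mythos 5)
Your proof is correct and follows essentially the same route as the paper's: both combine the spectral lower bound $e(S,V\setminus S)\ge (d-\lambda)(1-\alpha_u)|S|$ with the decomposition of $\EN$ into unique neighbors and multi-neighbor vertices, and both use the crude per-vertex bound $\deg(v,S)\le d$ on the latter to produce the $(1-1/d)$ coefficient. The only cosmetic difference is that you derive the cut bound from the variational characterization of $\lambda_2$, whereas the paper simply cites the Alon--Spencer inequality $|e(A,B)|\ge (d-\lambda)|A||B|/|V|$.
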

%\APPENDA
%\def\APPENDA{
\begin{proof}
Alon and Spencer \cite{AlonS92} prove that
every partition of the set of vertices $V$ into two disjoint subsets $A$ and $B$ satisfies $|e(A,B)|\geq (d-\lambda)\cdot|A|\cdot|B|/|V|$. In our case (i.e. $A=S,~ B=V\setminus S= \bar{S}$,~ and $|S|\leq\alpha_u\cdot|V|$) we get that
\begin{eqnarray*}
|e(S,\bar{S})| &\geq& (d-\lambda)\cdot\frac{|S|\cdot|\bar{S}|}{|V|}
\\&\geq& (d-\lambda)\cdot\frac{|S|\cdot(|V|-\alpha_u\cdot |V|)}{|V|} \\
&=& (d-\lambda)\cdot|S|\cdot(1-\alpha_u).
\end{eqnarray*}
Moreover, by the expansion properties, there exists a set $U$ of at least $\beta_u\cdot|S|$ vertices in $\EN$ that have a unique neighbor in $S$. From uniqueness, we have $e(S,U)=|U|\geq \beta_u|S|$. Thus, there are at least $(d-\lambda)\cdot|S|\cdot(1-\alpha_u)-|U|$ edges in $e(S,\bar{S})$ that are not connect to the vertices in $U$ (i.e. in $e(S, \bar{S}\setminus U)$). Now, because $G$ is $d$-regular, we get that there exist at least $|U|+((d-\lambda)\cdot|S|\cdot(1-\alpha_u)-|U|)/d$ vertices in $\EN$. Hence, we get \begin{eqnarray*}
|\Gamma^-(S)| &\geq& |U|+\frac{((d-\lambda)\cdot|S|\cdot(1-\alpha_u)-|U|)}{d}\\
&=& \left(1-\frac{1}{d}\right)\cdot |U|+\frac{(d-\lambda)\cdot(1-\alpha_u)}{d}\cdot|S| \\
&\geq& \left(1-\frac{1}{d}\right)\cdot \beta_u |S|+\frac{(d-\lambda)\cdot(1-\alpha_u)}{d}\cdot|S|\\
&=& \left(\left(1-\frac{1}{d}\right)\cdot \beta_u+\frac{(d-\lambda)\cdot(1-\alpha_u)}{d}\right)\cdot|S|
\end{eqnarray*}
thus, $G$ is a $(\alpha,\beta)$-expander with $\alpha\geq\alpha_u$ and $\beta\geq (1-1/d)\cdot\beta_u+((d-\lambda)\cdot(1-\alpha_u)/d$.
\QED
\end{proof}

It is known (and easy to verify) that ordinary expanders whose expansion
is close to the (maximum) degree in the graph are also good unique expanders, or formally:
%The following lemma is provided for completeness.
%
%{\bf DP: Check if this lemma indeed appears in \cite{AC02}}

\begin{lemma}
%{\bf \cite{AC02}}
\label{unique_lowerbound1}
Suppose $G=(V,E)$ is an $(\alpha,\beta)$-expander with maximum degree $\Delta$. Then it is also a unique $(\alpha_u ,\beta_u )$-expander, with $\alpha_u=\alpha$
and $\beta_u\geq 2\beta-\Delta$.
\end{lemma}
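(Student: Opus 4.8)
The plan is to prove the claim through a single double-counting argument on the edges crossing the cut between $S$ and $\bar S = V \setminus S$. Fix an arbitrary $S \subseteq V$ with $|S| \le \alpha n$; since the range of admissible set sizes is unchanged we immediately get $\alpha_u = \alpha$, and it remains only to show that $|\UN(S)| \ge (2\beta - \Delta)\cdot|S|$. Let $e(S,\bar S)$ denote the number of edges with one endpoint in $S$ and the other in $\bar S$. I would bound this quantity both from above and from below and then compare the two estimates.

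For the upper bound, each vertex $v \in S$ contributes $\deg(v,\bar S) \le \deg(v) \le \Delta$ edges across the cut, so summing over $v \in S$ gives $e(S,\bar S) \le \Delta\cdot|S|$. For the lower bound, I would partition the external neighborhood $\Gamma^-(S)$ into the unique neighbors $\UN(S)$ and the remaining vertices $\Gamma^-(S)\setminus\UN(S)$. By definition, every vertex of $\UN(S)$ has exactly one neighbor in $S$ and hence contributes exactly one crossing edge, while every vertex of $\Gamma^-(S)\setminus\UN(S)$ has at least two neighbors in $S$ and therefore contributes at least two crossing edges. Counting the crossing edges according to their endpoint in $\bar S$ then yields
\[
e(S,\bar S) \ \ge\ |\UN(S)| + 2\bigl(|\Gamma^-(S)| - |\UN(S)|\bigr) \ =\ 2|\Gamma^-(S)| - |\UN(S)|.
\]

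Combining the two bounds gives $2|\Gamma^-(S)| - |\UN(S)| \le \Delta\cdot|S|$, that is, $|\UN(S)| \ge 2|\Gamma^-(S)| - \Delta\cdot|S|$. Finally I would invoke the $(\alpha,\beta)$-expansion of $G$, which guarantees $|\Gamma^-(S)| \ge \beta\cdot|S|$, to conclude $|\UN(S)| \ge (2\beta - \Delta)\cdot|S|$, as required.

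I do not expect any genuine obstacle here: the argument is a one-line cut-counting estimate, and the only point worth flagging is that the resulting bound is informative only when $\beta > \Delta/2$ (otherwise $2\beta - \Delta \le 0$ and the statement is vacuous). This is exactly the regime highlighted in the sentence preceding the lemma, namely that ordinary expanders convert into good unique-neighbor expanders precisely when the ordinary expansion is close to the maximum degree.
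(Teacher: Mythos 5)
Your proof is correct and follows essentially the same double-counting argument as the paper: the paper likewise bounds the number of edges between $S$ and $\Gamma^-(S)$ above by $\Delta\cdot|S|$ and below by $|\UN(S)|+2|\Gamma^-(S)\setminus\UN(S)|$, then invokes $|\Gamma^-(S)|\ge\beta\cdot|S|$. The only cosmetic difference is that the paper names the set of multiply-covered neighbors $\Gamma_0$ and first derives $|\Gamma_0|\le(\Delta-\beta)\cdot|S|$ before concluding, whereas you rearrange the same inequality directly.
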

{\bf Remark.} Substituting $\beta = (1-\varepsilon) \Delta$
(for $\varepsilon \le 1/2$), we obtain $\beta_u \ge  (1-2\varepsilon) \Delta$.

\commabs
\begin{proof}
Let $S$ be a vertex set with $|S|\leq\alpha\cdot|V|$. Then
$|\EN|\geq\beta\cdot|S|$. Let $\Gamma_0= \EN\setminus\UN(S)$ be the set
of vertices from $\EN$ that have at least two neighbors from $S$.
We have $|\UN(S)|+|\Gamma_0| = |\EN|\geq\beta\cdot|S|$. The number of edges between $S$ and $\EN$ satisfies $e(S,\EN)\leq\Delta\cdot|S|$ on the one hand, but $e(S,\EN)\geq|\UN(S)| + 2|\Gamma_0|$ on the other. Hence $|\Gamma_0| + \beta\cdot|S|\leq|\Gamma_1(S)| + 2|\Gamma_0|\leq \Delta\cdot|S|$, yielding $|\Gamma_0|\leq (\Delta - \beta)\cdot|S|$, and so \begin{eqnarray*}
|\UN(S)| = |\EN| - |\Gamma_0|\geq\beta\cdot|S| - (\Delta - \beta)\cdot|S| = (2\beta - \Delta)\cdot|S|. \inQED
\end{eqnarray*}
\end{proof}
\commabsend

The lower bound $2\beta-\Delta$ on the unique-neighbor expansion $\beta_u$
provided by Lemma \ref{unique_lowerbound1} is meaningful only when $\beta$ is
larger than $\Delta/2$. The following example shows that this lower bound
$2\beta-\Delta$ is  tight.
\begin{lemma} \label{bad unique expander}
For any $\Delta$ and $\beta$ such that $\Delta/2 \le \beta \le \Delta$, there is an $(\alpha, \beta)$ bipartite expander $G^{bad}=(S,\NN,E)$ with
maximum degree $\Delta$ whose unique expansion is $\beta_u\leq 2\beta-\Delta$.
\end{lemma}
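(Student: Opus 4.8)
The plan is to realize the tight bound by an explicit gadget in which each left vertex spends exactly $d_1 := 2\beta-\Delta$ of its edges on \emph{private} (degree-one) right vertices and its remaining $d_2 := 2(\Delta-\beta)$ edges on \emph{shared} (degree-two) right vertices. The hypothesis $\Delta/2 \le \beta \le \Delta$ is precisely what guarantees $d_1,d_2 \ge 0$, and note $d_1+d_2=\Delta$; I will assume $\Delta$ and $2\beta$ are integers (otherwise one rounds, affecting the bound only in lower-order terms). Concretely, let $S$ be the left side on $n$ vertices; give every $v\in S$ its own set of $d_1$ pendant neighbors in $N$, and fix any $d_2$-regular graph $H$ on the vertex set $S$, subdividing each edge of $H$ by a fresh right vertex (so each such ``shared'' vertex of $N$ has exactly two neighbors in $S$). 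Every left vertex then has degree $d_1+d_2=\Delta$ and every right vertex degree $1$ or $2$, so the maximum degree is $\Delta$, as required.

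First I would check that this is an $(\alpha,\beta)$-expander for \emph{every} $\alpha\le 1$, by a one-line double count that is independent of the structure of $H$. For a subset $T\subseteq S$, the right-neighbors of $T$ consist of its $d_1|T|$ private vertices together with one vertex per edge of $H$ incident to $T$. Writing $a$ for the number of $H$-edges lying inside $T$, the handshake identity summed over $T$ shows that the number of $H$-edges incident to $T$ equals $d_2|T|-a$, and since $a\le d_2|T|/2$ this is at least $d_2|T|/2$. Hence $|\Gamma^-(T)|\ge d_1|T|+d_2|T|/2=(2\beta-\Delta)+(\Delta-\beta)$ times $|T|$, i.e. $\ge \beta|T|$, so the ordinary expansion is at least $\beta$ on all subsets regardless of $H$.

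To pin down the unique expansion I would take $H$ to be a disjoint union of small $d_2$-regular components, e.g. copies of the clique $K_{d_2+1}$, and let the bad set $T$ be a single component (so $|T|=d_2+1\le \alpha n$ for $n$ large). Because $T$ is a \emph{full} component of $H$, every shared right vertex adjacent to $T$ has both of its endpoints inside $T$ and is therefore not a unique neighbor; only the private vertices survive. Thus $|\UN(T)|=d_1|T|=(2\beta-\Delta)|T|$, which gives $\beta_u(G^{bad})\le 2\beta-\Delta$ and proves the lemma. In fact the same double count shows $|\UN(T)|\ge d_1|T|$ for every $T$, so this set is extremal and the bound is attained with equality.

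The construction itself is simple; the points needing care are the arithmetic/parity conditions — that $d_1=2\beta-\Delta$ and $d_2=2(\Delta-\beta)$ are nonnegative integers and that a $d_2$-regular graph $H$ on $S$ exists (immediate for a disjoint union of $K_{d_2+1}$'s, with $n$ a multiple of $d_2+1$) — and the verification that the global expansion bound holds for \emph{all} subsets, including those cutting across the components of $H$. The main conceptual obstacle is meeting the two competing requirements at once: the expansion must stay $\ge\beta$ everywhere, while some set attains unique expansion exactly $2\beta-\Delta$. The insight that makes this painless is that the worst case for ordinary expansion (a set maximizing internal $H$-edges, i.e. a union of components) is exactly the case in which all shared neighbors are doubly covered and hence non-unique.
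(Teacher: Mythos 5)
Your construction is correct, and it realizes the same underlying template as the paper's --- each left vertex spends $2\beta-\Delta$ edges on private (degree-one) right vertices and $2(\Delta-\beta)$ edges on right vertices of degree exactly two --- but the realization of the sharing pattern is genuinely different. The paper arranges $S$ in a cycle $v_1,\dots,v_s$ and lets consecutive vertices share a block of $\Delta-\beta$ common neighbors; in your language this is the special case where $H$ is the $s$-cycle with each edge taken with multiplicity $\Delta-\beta$, and the extremal set exhibiting $|\UN(T)|=(2\beta-\Delta)|T|$ is $T=S$ itself (the unique ``full component''). Your parametrization by an arbitrary $d_2$-regular graph $H$ buys two things. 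First, choosing $H$ to be a disjoint union of copies of $K_{d_2+1}$ makes the extremal set small ($|T|=2(\Delta-\beta)+1$), so it comfortably satisfies the constraint $|T|\le\alpha n$ for large $n$; in the paper's version the bad set is the entire left side, which sits awkwardly with that constraint and is only implicitly excused by the gadget being plugged into a larger graph later. Second, your handshake double count ($|\Gamma^-(T)|\ge d_1|T|+d_2|T|/2=\beta|T|$ for \emph{every} $T$, independent of the structure of $H$) actually verifies the ordinary expansion claim, which the paper asserts without proof (``Noting that the ordinary expansion is $\beta$ completes the proof''). The integrality caveats you flag ($2\beta\in\mathbb{Z}$, $n$ a multiple of $d_2+1$) are of the same nature as the implicit ones in the paper and are handled appropriately.
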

\begin{proof}
Construct the graph $G^{bad}$ as follows. Let $S = \{v_1,\ldots,v_s\}$, with $s = |S|$, and suppose that each vertex $v_i \in S$ has exactly $\Delta$ neighbors, all of which are in $\NN$.
(For technical convenience, we define $ v_0 = v_s$, $ v_1 =v_{s+1}$; that is, the vertices $v_1$ and $v_s$ are not different than the other vertices
(they should not be viewed as ``endpoints'', but rather part of an implicit ``cycle'').
Moreover, for each $i = 1,\ldots,s$, the vertices $v_i$ and $v_{i+1}$ have exactly $\Delta-\beta$ common neighbors;
that is, $|\Gamma(v_i) \cap \Gamma(v_{i+1})| = \Delta-\beta$.
More concretely, writing $\Gamma(v_i) = \{v^1_i,\ldots,v^\Delta_i\}$, we have that $$\Gamma(v_i) \cap \Gamma(v_{i+1}) =
\{v^{\beta+1}_i,\ldots,v^\Delta_i\} = \{v^1_{i+1},\ldots,v^{\Delta-\beta}_{i+1}\}.$$
In other words, the ``last'' $\Delta-\beta$  neighbors $v^{\beta+1}_i,\ldots,v^\Delta_i$ of $v_i$ are the ``first'' $\Delta-\beta$  neighbors $v^1_{i+1},\ldots,v^{\Delta-\beta}_{i+1}$ of $v_{i+1}$, respectively.  (See Figure \ref{fig:unique1} for an illustration.)

%%%%%%%%%%%%%%%%
\begin{figure}[htb]
\begin{center}
\includegraphics[scale=0.68]{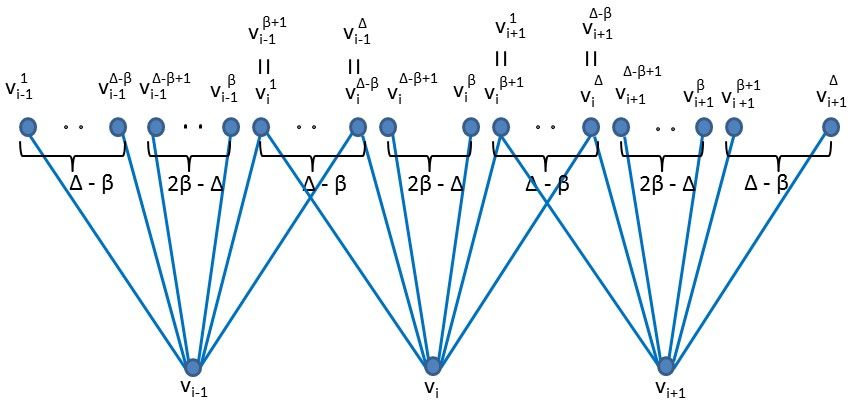}
\end{center}
\caption{\sf An illustration of a worst-case scenario for the unique-neighbor expansion.}
\label{fig:unique1}
\end{figure}
%%%%%%%%%%%%%%%

This means that for each $i = 1,\ldots,s$, the first (resp., last) $\Delta-\beta$ neighbors of $v_i$ are also neighbors of $v_{i-1}$ (resp,  $v_{i+1}$).
%whereas its last $d-\beta$ neighbors are also neighbors of.
The remaining $\Delta - 2(\Delta-\beta) = 2\beta - \Delta$ neighbors of $v_i$, however, are uniquely covered by $v_i$.
%As for vertices $v_1$ and $v_s$, exactly $\beta$ of their neighbors are uniquely covered.
It follows that the number of vertices in the neighborhood of $S$ that are uniquely covered by vertices from $S$ is equal to $s(2\beta-\Delta)$.
Consequently, the unique neighbor expansion $\beta_u$ is $2\beta-\Delta$, as claimed.
Noting that the ordinary expansion is $\beta$ completes the proof of the lemma. \QED
\end{proof}
%}%APPENDB

{\bf Remarks.~}
(1) The meaning of Lemma \ref{bad unique expander} is that a graph with high (ordinary) expansion may have unique neighbor expansion of zero. For example, in the graph $G^{bad}$ described in the proof of Lemma \ref{bad unique expander}, the unique-neighbor expansion is $2\beta -\Delta$, but the wireless expansion is at least $\max\{2\beta-\Delta, \Delta/2\}$.
To see that, let $S'$ be a subset of $S$ and suppose $S'=S_1\cup S_2\cup \ldots \cup S_k$ such that each $S_i$ is a sequence of consecutive vertices, i.e., using the previous notations, for $S_i$ of size $l$, $S_i=\{v_j,..,v_{j+l}\}$ for some index $1\leq j\leq s$. Suppose also that between every two sets $S_i$ and $S_j$ there is at least one vertex that is not in $S'$ (in other words, we can't expand $S_i$ to be a longer secuence in $S'$). Therefore, to compute $\beta_w$, it is enough to compute the expansion parameter for each $S_i=\{v_j,\ldots,v_{j+l}\}$. Consider two options for choosing the set $S''\subset S_i$. The first choice is to take $S''=S_i$. Then we get an expansion of $f(l)=(l\Delta-2(l-1)(\Delta-\beta))/l=((2-l)\Delta+2(l-1)\beta)/l$. The second choice is to take into $S''$ every second vertex in the sequence of $S_i$. Then we get an expansion of $g(l)=l\Delta/(2l)$ if $l$ is even, and $g(l)=(l+1)\Delta/(2l)$ if $l$ is odd. (In the case where $S'=S$ we get in the first choice an expansion of $f(l)=l(2\beta-\Delta)/l=2\beta-\Delta$ and in the second an expansion of $g(l)=(l-1)\Delta/(2l)$). Thus, $\beta_w\geq \min\{\max\{g(l)~,~f(l)\}\mid l>0\}$. As $f(l)$ and $g(l)$ are both decreasing functions, we get that $\beta_w\geq \max\{\lim_{l \to \infty} g(l)~,~\lim_{l \to \infty} f(l)\}=\max\left\{2\beta-\Delta~,~ \frac{\Delta}{2}\right\}.$
This calculation also shows that if $\beta=\Delta/2$, then  the unique-neighbor expansion becomes 0, but the wireless expansion becomes $\Delta/2$.
\vspace{6pt}
\\
\noindent
(2) Although the bipartite graph used in the proof of Lemma \ref{bad unique expander} is an ordinary bipartite expander (according to the definition given in Section \ref{Notation and Definitions}),
note that the sizes of the two sides $S$ and $N$ differ by a factor of $\beta$.
%$\NN$ is greater than that of $S$ by a factor of $\log 2|S|$.
Also, it does not provide an ordinary non-bipartite expander, because the expansion is achieved only on one side, from $S$ towards $N$.
Nevertheless, one can plug this ``bad'' bipartite graph on top of an ordinary $(\alpha,\beta)$-expander with a possibly good unique-neighbor expansion, so that the graph resulting from this tweak
is an ordinary $(\alpha,\beta)$-expander with a unique-neighbor expansion bounded by $2\beta - \Delta$.
Notice, however, that the maximum degree in the resulting graph, denoted by $\Delta'$, may be as large as the sum of the maximum degrees of the ``bad'' bipartite graph
and the $(\alpha,\beta)$-expander that we started from. For example, if $\Delta' = 2\Delta$, then the unique-neighbor expansion of the resulting graph is bounded by $2\beta - \Delta = 2\beta - \Delta'/2$.
%Plugging the generalized core graph on top of $G$ should be done carefully, otherwise
%A-priori, it is unclear that one may significantly reduce the wireless expansion of the graph without reducing the ordinary expansion at a similar rate.
%In particular, it is not enough to guarantee that the core graph has good expansion, because this guarantee does not apply to the vertices of $\NN$.
%This is where the inverse-expansion of the core graph (as defined above) comes into play.
Since we apply a similar tweak in Section \ref{sec:neg} (in the context of wireless expansion rather than unique expansion),
we omit the exact details of this rather simple tweak from the extended abstract.
%\vspace{-5pt}
%%%%%%%%%%%%%%%%%%%%%%
\section{Bounds on Wireless Expansion} \label{sec:wireless}

\subsection{Our Framework} \label{frame}

Consider an arbitrary (ordinary) $(\alpha,\beta)$-expander $G$.
As shown in Section \ref{sec:Relations}, the unique-neighbor expansion $\beta_u$
provided by $G$ may be zero even if the ordinary expansion $\beta$ is high.
In what follows we demonstrate that the wireless expansion $\beta_w(G)$ of $G$
cannot be much lower than its ordinary expansion $\beta(G$).
Moreover, we prove asymptotically tight bounds on the ratio
$\beta(G) / \beta_w(G)$.
This yields a strong separation between the unique-neighbor expansion and
the wireless expansion, which provides a natural motivation for studying
wireless expanders, particularly in applications
where we are given a \emph{fixed} expander network (that cannot be changed).

First let us observe that by Obs. \ref{relations between the betas},
Lemma \ref{unique_lowerbound1} yields the following bound on $\beta_w$.
\begin{lemma}\label{from unique to wireless}
Suppose $G=(V,E)$ is an $(\alpha,\beta)$-expander with maximum degree $\Delta$.
Then it is also a wireless $(\alpha_w ,\beta_w )$-expander
with $\alpha_w=\alpha$ and $\beta_w\geq 2\beta-\Delta$.
\end{lemma}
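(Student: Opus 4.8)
The plan is to derive this statement as an immediate corollary of the two results that directly precede it, without any new argument of substance. I would first apply Lemma \ref{unique_lowerbound1} to pass from ordinary expansion to unique-neighbor expansion, and then chain it with Observation \ref{relations between the betas} to pass from unique-neighbor expansion to wireless expansion, holding the threshold parameter $\alpha$ fixed throughout.

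Concretely, the first step is to note that, since $G$ is an $(\alpha,\beta)$-expander with maximum degree $\Delta$, Lemma \ref{unique_lowerbound1} already certifies that $G$ is an $(\alpha_u,\beta_u)$-unique expander with $\alpha_u=\alpha$ and $\beta_u\geq 2\beta-\Delta$. The second step is to set $\alpha_w=\alpha_u=\alpha$ and invoke the right-hand inequality $\beta_w(G)\geq\beta_u(G)$ of Observation \ref{relations between the betas}, which then gives $\beta_w(G)\geq\beta_u(G)\geq 2\beta-\Delta$, exactly the claimed bound.

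The only point that deserves a sentence of justification is why $\beta_w(G)\geq\beta_u(G)$ once the threshold parameters agree; this is the sole (still elementary) ingredient, and it is really the content of Observation \ref{relations between the betas}. The reason is that for any fixed $S$, the wireless expansion of $S$ is defined as the \emph{maximum} of $|\UN_S(S')|/|S|$ over all $S'\subseteq S$, and the particular choice $S'=S$ is always admissible. Since $\UN_S(S)=\UN(S)$, that choice already realizes the unique-neighbor ratio $|\UN(S)|/|S|$, so the maximum over subsets can only be at least as large; taking the minimum over all admissible sets $S$ preserves the inequality.

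I do not expect any genuine obstacle here, since the statement is purely a composition of Lemma \ref{unique_lowerbound1} with the easy half of Observation \ref{relations between the betas}. The real content lies elsewhere: the bound $\beta_w\geq 2\beta-\Delta$ obtained this way is weak, as it is informative only when $\beta>\Delta/2$ and may even be negative otherwise. The subsequent sections are devoted precisely to showing that the true wireless expansion is in fact much larger, namely $\Omega(\beta/\log(2\cdot\min\{\Delta/\beta,\Delta\cdot\beta\}))$ as in Theorem \ref{thm:lb}.
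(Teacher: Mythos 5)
Your proposal is correct and is exactly the paper's derivation: the paper obtains this lemma by combining Lemma \ref{unique_lowerbound1} with Observation \ref{relations between the betas}, precisely as you do. Your extra sentence justifying $\beta_w(G)\geq\beta_u(G)$ via the admissible choice $S'=S$ is a faithful unpacking of that observation and adds nothing that conflicts with the paper.
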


Throughout what follows, we simplify the discussion by focusing attention to an arbitrary bipartite graph $G_S = (S, \NN, E_S)$ with sides $S$ and $\NN$, such that $|\NN| \geq \beta \cdot |S|$.
We assume that no vertex of $G_S$ is isolated, i.e., all vertex degrees are at least 1.

Note that this bipartition can be thought of as representing all edges in the original graph $G$ that connect an arbitrary vertex set $S$ with its neighborhood $\NN = \EN$. While in $G$ there might be edges internal to $S$ and/or $\NN$, ignoring these edges has no effect whatsoever on the expansion bounds.
\par Our goal is to show the existence of a subset $S'$ of $S$ in the graph $G_S$, whose $S$-excluding unique-neighborhood $\UN_{S}(S')$ is not much smaller than the entire neighborhood $\NN$ of $S$.
%Fix an arbitrary vertex set $S$ in $G$ with $|S| \le \alpha n$.
%From now on we restrict our attention to the \emph{induced} bipartite subgraph $G_S = (S \cup \Gamma, E_S)$ of $G$ with sides $S$ and $\Gamma := \Gamma(S)$,
%where $E(S) = \{(u,v) ~\vert~ u \in S, v \in \Gamma\}$.
%Since the expansion of $S$ is at least $\beta$, hence $|\Gamma(S)| \ge \beta \cdot |S|$.  %, or equivalently $|\Gamma(S)| / |S| \ge \beta$.
Of course, this would imply that the wireless expansion of an arbitrary set $S$ in $G$ (of any size) is close to its ordinary expansion,
yielding the required result.
%In what follows we define $s=|S|$, $\gamma= |N|$.
%%%%%%%%%%%%%%%%
%\vspace{-5pt}
\subsection{Positive Results: Ordinary Expanders are Good Wireless Expanders}
\label{sec:positive}
Let $\delta_S$ (resp., $\delta_N$) be the average degree of the set $S$ (resp., $N$) in the graph $G_S$. That is, $\delta_S=\sum_{u \in S}\deg(u,N)/|S|$ and $\delta_N=\sum_{u' \in N}\deg(u',S)/|N|$. Clearly, $\delta_N, \delta_S \ge 1$.
In this section, we show that $\beta_w$ can be bounded from below as a function of $\min\{\delta_S,\delta_N\}$.

We begin by considering an $(\alpha,\beta)$-expander $G$ for $\beta\geq 1$. We now show:
\begin{lemma}
\label{lem:logdelta}
For every $\beta,\Delta\geq 1$, there exists a subset $S^* \subseteq S$, satisfying that \\$|\UN_{S}(S^*)|=\Omega(|N|/\log 2\delta_N)=\Omega(\beta / \log 2\delta_N)\cdot |S|$. Hence, $\beta_w=\Omega(\beta / \log 2(\Delta/\beta))$.
\end{lemma}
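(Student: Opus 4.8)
The plan is to make the \emph{decay} heuristic quantitative by sampling $S'$ at a probability matched to the degree scale of the $\NN$-vertices I wish to cover uniquely. Recall that if each vertex of $S$ is placed into a random set $S'$ independently with probability $p$, then a fixed vertex $u'\in\NN$ of degree $d=\deg(u',S)$ lies in $\UN_S(S')$ with probability exactly $d\,p\,(1-p)^{d-1}$, and this is a positive constant precisely when $p\approx 1/d$. I would therefore run, for each integer $i$ with $0\le i\le \lceil \log 2\delta_N\rceil$, a separate experiment: let $S_i\subseteq S$ include each vertex of $S$ independently with probability $p_i=2^{-i}$. Vertices of $\NN$ of degree about $2^i$ will then be uniquely covered by $S_i$ with constant probability.

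To organize this I would partition $\NN$ by degree, setting $B_0=\{u'\mid \deg(u',S)=1\}$ and $B_i=\{u'\mid 2^{i-1}<\deg(u',S)\le 2^i\}$ for $i\ge 1$. For $u'\in B_i$ with degree $d\in(2^{i-1},2^i]$ and $p_i=2^{-i}$, I would bound $d\,p_i>\tfrac12$ and $(1-p_i)^{d-1}\ge (1-2^{-i})^{2^i}\ge \tfrac14$ (using that $(1-2^{-i})^{2^i}$ is increasing in $i$ with value $\tfrac14$ at $i=1$), so $u'\in\UN_S(S_i)$ with probability at least $\tfrac18$; the degree-one class $B_0$, covered with probability $1$ under $p_0=1$, is handled trivially. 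By linearity of expectation this yields $\Expect{|\UN_S(S_i)\cap B_i|}\ge \tfrac18|B_i|$.

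The reason only $O(\log 2\delta_N)$ experiments are needed is a Markov pruning on the high-degree side of $\NN$: since the average $\NN$-degree equals $\delta_N$, at least half of $\NN$ consists of vertices of degree at most $2\delta_N$, and all of these lie in $\bigcup_{i=0}^{\lceil\log 2\delta_N\rceil}B_i$. Summing the per-bucket bound over this range gives $\sum_i \Expect{|\UN_S(S_i)\cap B_i|}\ge \tfrac18\cdot\tfrac{|\NN|}{2}$, a sum of only $O(\log 2\delta_N)$ terms; averaging produces an index $i^*$ with $\Expect{|\UN_S(S_{i^*})\cap B_{i^*}|}=\Omega(|\NN|/\log 2\delta_N)$. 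The probabilistic method then fixes a concrete realization $S^*:=S_{i^*}$ with $|\UN_S(S^*)|=\Omega(|\NN|/\log 2\delta_N)$. Using $|\NN|\ge\beta|S|$ rewrites this as $\Omega(\beta/\log 2\delta_N)\cdot|S|$, i.e.\ $\beta_w=\Omega(\beta/\log 2\delta_N)$; and since $\delta_N=|E_S|/|\NN|\le |E_S|/(\beta|S|)=\delta_S/\beta\le \Delta/\beta$, we get $\log 2\delta_N\le \log 2(\Delta/\beta)$, which delivers the stated $\beta_w=\Omega(\beta/\log 2(\Delta/\beta))$.

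The main obstacle I anticipate is not any single estimate but keeping the bucketing-and-pruning interface clean: one must verify that discarding the high-degree half of $\NN$ still leaves $\Omega(|\NN|)$ candidate unique neighbors concentrated in just $O(\log 2\delta_N)$ degree classes, so that the averaging over classes costs only a logarithmic factor rather than a factor of $\log\Delta$. The only place demanding genuine care is the constant lower bound $(1-2^{-i})^{2^i}\ge \tfrac14$ and its monotonicity at small $i$; the remainder is linearity of expectation, Markov's inequality, and a pigeonhole step.
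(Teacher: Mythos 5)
Your proposal is correct and follows essentially the same route as the paper's proof: a Markov pruning to the vertices of $\NN$ of degree at most $2\delta_N$, a dyadic degree bucketing into $O(\log 2\delta_N)$ classes, sampling $S'$ at rate $2^{-i}$ matched to the bucket so that each vertex is uniquely covered with constant probability, and then the probabilistic method plus $\delta_N \le \Delta/\beta$. The only cosmetic difference is that the paper first selects the largest bucket by pigeonhole and samples once for it, whereas you sample for every bucket and average afterwards; the estimates and conclusion are the same.
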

\begin{proof}
Since $\beta\geq 1$, we have $|S|\leq |N|$, $1 \le \delta_N\leq \delta_S$ and $\delta_N \leq \Delta/\beta$.
The proof relies on the probabilistic method.
First, consider the set $N'$ of all vertices from $N$ with degree at most $2\delta_N$. Note that $|N'|\geq |N|/2$ and that all vertices of $N'$ have positive degree.
We now divide the subset $N'$ into $k=\lfloor \log 2\delta_N \rfloor$ subsets depending on their degree in $S$, where the $i^{th}$ subset $N_i$ consists of all vertices $u \in N'$ with $\deg(u,S)\in [2^{i},2^{i+1})$. Let $N_{j}$ be the largest subset among these $k$ subsets. We have that $|N_{j}|\geq |N|/k=\Omega(|N|/\log 2\delta_N) = \Omega(|N| / \log 2(\Delta/\beta))$.
We next show that there exists a subset $S^* \subseteq S$ such that $\UN_{S}(S^*)$ contains a constant fraction of the vertices of $N_j$.

Consider a random subset $S' \subseteq S$ obtained by sampling each vertex $u \in S$ independently with probability $1/2^{j}$.
 For every vertex $u \in N_{j}$, let $X(u) \in \{0,1\}$ be the indicator random variable that takes value 1 if $u$ has exactly one neighbor in $S'$. As $\deg(u,S) \in [2^j,2^{j+1})$, we have that
\begin{eqnarray*}
\Exp(X(u))&=&\Prob(X(u)=1)=\deg(u,S)/2^{j} \cdot (1-1/2^{j})^{\deg(u,S)-1}
\\&\geq&
 (1-1/2^{j})^{2^{j+1}-1}\geq e^{-3}~.
\end{eqnarray*}
Hence, $\sum_{u \in N_{j}}\Exp(X(u))=\Omega(|N_j|)=\Omega(\beta |S|/\log 2(\Delta/\beta))$.
We get that the expected number of vertices in $N$ that are uniquely covered by a random subset $S'$ is $\Omega(\beta |S|/\log 2(\Delta/\beta))$. Hence, there exists a subset $S^* \subseteq S$ with \\$|\UN_{S}(S^*)|=\Omega(\beta / \log 2(\Delta/\beta)) \cdot |S|$. The lemma follows. \QED
\end{proof}
In Appendix \ref{sec:detbounds}, we provide a sequence of deterministic arguments that obtain better bounds for $\beta_w$ (by constant factors) compared to the probabilistic argument shown above.

We now turn to consider the case $\beta<1$. In this case the bound on the wireless expansion depends on $\delta_S$, namely, on the average degree in the larger set $S$. We show:
\begin{lemma}
\label{lem:reductionsmallb}
For every $\Delta\geq 1$ and $\beta \in [1/\Delta,1)$, there exists a subset $S^* \subseteq S$, satisfying that $|\UN_{S}(S^*)|=\Omega(\beta / \log \delta_S)\cdot |S|$. Since $\delta_S \leq \beta\cdot\Delta$, we have $\beta_w=\Omega(\beta / \log 2(\Delta \cdot \beta))$.
\end{lemma}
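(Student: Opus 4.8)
The plan is to reduce the case $\beta < 1$ to the already-proven case $\beta \geq 1$ (Lemma \ref{lem:logdelta}) by passing to a carefully chosen subgraph of $G_S$ in which the roles of the average degree are balanced. The key observation is that when $\beta < 1$, we have $|\NN| < |S|$, so the side $S$ is now the \emph{larger} side and $\delta_S \geq \delta_N$. The analogue of Lemma \ref{lem:logdelta} naturally wants to bucket vertices of $S$ (rather than $\NN$) by their degree and sample from the smaller side $\NN$, producing a unique-neighbor set inside $S$ rather than inside $\NN$. But that is the wrong direction: we need unique neighbors lying \emph{outside} $S$, i.e.\ in $\NN$. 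So the real work is to show that even when $S$ is large, we can still find $S^* \subseteq S$ whose unique neighborhood in $\NN$ is $\Omega(\beta / \log \delta_S) \cdot |S|$.

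First I would mirror the proof of Lemma \ref{lem:logdelta} but bucket the vertices of $S$ instead of $\NN$. Concretely, restrict to the set $S'$ of vertices of $S$ whose degree (into $\NN$) is at most $2\delta_S$; this retains at least half of $S$ and hence a constant fraction of the degree mass. Partition $S'$ into $k = \lfloor \log 2\delta_S \rfloor$ degree-classes $S_i = \{u \in S' : \deg(u,\NN) \in [2^i, 2^{i+1})\}$, and let $S_j$ be the largest, so $|S_j| = \Omega(|S| / \log 2\delta_S)$. Now sample each vertex of $S_j$ independently with probability $1/2^{j}$ to form the candidate $S^*$. For a vertex $u \in S_j$, the probability that $u$ is isolated in $S^*$ (no sampled neighbor among its own is not the point) — rather, I want to count vertices of $\NN$ uniquely covered. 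The clean move is to argue that each sampled vertex $u \in S_j$ is, with constant probability, the \emph{sole} sampled vertex adjacent to a large fraction of its own neighbors in $\NN$; summing the expected number of such uniquely-covered $\NN$-vertices over $u \in S_j$ gives $\Omega(|S_j| \cdot \text{(typical degree)}) = \Omega(|S|)$ unique neighbors. Since $|\NN| \geq \beta|S|$ and the degrees are $\Theta(2^j) \leq 2\delta_S$, this yields $|\UN_S(S^*)| = \Omega(\beta/\log\delta_S)\cdot|S|$ in expectation, so some fixed $S^*$ achieves it.

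The main obstacle I anticipate is the dependence structure on the $\NN$-side: a vertex $w \in \NN$ is uniquely covered by $S^*$ only if exactly one of its neighbors in $S$ is sampled, and $w$ may have neighbors in several different degree-buckets $S_i$, not just in $S_j$. Sampling only from $S_j$ at rate $1/2^j$ controls the contribution of $S_j$-neighbors cleanly, but the neighbors of $w$ lying \emph{outside} $S_j$ are not sampled at all, which actually helps — they never collide. So the genuinely delicate point is to ensure that when we count, for each sampled $u \in S_j$, the neighbors $w \in \NN$ for which $u$ is the unique sampled neighbor, we are not double-counting across different $u$'s, and that a constant fraction of each sampled $u$'s neighbors really are covered uniquely. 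I would handle this by a second-moment / inclusion argument: for $u \in S_j$ and $w \in \Gamma(u)$, the event ``$u$ sampled, and no other neighbor of $w$ in $S_j$ sampled'' has probability $\Omega(1/2^j)$ using the cap $\deg(w,S_j) \leq \deg(w,S) \leq \delta_S \cdot O(1)$ on the typical $\NN$-degree (after a preliminary pruning of high-degree $\NN$-vertices analogous to the $N'$ step). Summing over the $\Theta(2^j)$ neighbors of each $u$ recovers the $\Omega(1)$ expected unique neighbors per sampled vertex.

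Once Lemma \ref{lem:reductionsmallb} is established, combining it with Lemma \ref{lem:logdelta} and the deterministic push described in the introduction gives the full lower bound $\beta_w = \Omega(\beta/\log(2\cdot\min\{\Delta/\beta,\Delta\cdot\beta\}))$ of Theorem \ref{thm:lb}, since $\delta_N \leq \Delta/\beta$ governs the $\beta \geq 1$ regime and $\delta_S \leq \Delta\cdot\beta$ governs the $\beta < 1$ regime, and the minimum of the two average-degree bounds is what appears. An alternative, possibly cleaner, route that I would keep in reserve is a direct symmetry reduction: swap the roles of $S$ and $\NN$ to reduce the $\beta<1$ instance to a $\beta \geq 1$ instance, apply Lemma \ref{lem:logdelta} verbatim, and then translate the resulting unique neighborhood back; the bookkeeping there is that the unique neighbors we want must sit in $\NN = V\setminus S$, so the swap must be done at the level of the bipartite graph $G_S$ while tracking which side plays the role of the ``outside'' set.
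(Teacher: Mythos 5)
Your main plan has a genuine gap: you bucket and sample according to the wrong side's degrees. Whether a vertex $w \in N$ is uniquely covered is governed by how many of \emph{its} neighbors in $S$ are sampled, i.e.\ by $\deg(w,S)$; after pruning high-degree vertices of $N$ this is only bounded by $O(\delta_N)$, which for $\beta<1$ can be as large as $\delta_S/\beta$ --- not $O(\delta_S)$ as you assert (a pruning threshold of $O(\delta_S)$ could delete essentially all of $N$, since $\delta_N$ is the \emph{average} $N$-degree). So sampling the bucket $S_j$ at rate $1/2^j$, with $2^j$ tied to the degrees of the $S$-vertices, gives a collision-avoidance probability $(1-2^{-j})^{\deg(w,S_j)-1}$ that need not be bounded below by a constant. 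Concretely, let every vertex of $S$ have degree $1$ and every vertex of $N$ have degree $1/\beta$ (with $|N|=\beta|S|$, a legitimate instance of the framework): then $\delta_S=1$, all of $S$ lands in the bucket $j=0$, your sampling rate is $1$, the entire bucket is selected, and \emph{no} vertex of $N$ is uniquely covered --- yet the lemma demands $\Omega(|N|)$ unique neighbors, achieved by taking one representative of $S$ per vertex of $N$. Your reserve idea of swapping the roles of $S$ and $N$ does not repair this, because unique coverage is not symmetric: Lemma \ref{lem:logdelta} applied to the swapped instance produces a subset of $N$ uniquely covering many vertices of $S$, which does not translate back into a subset of $S$ uniquely covering vertices of $N$.

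The missing idea, which is essentially the entirety of the paper's proof, is to \emph{shrink $S$} rather than re-bucket it. After restricting to $S'=\{u\in S: \deg(u,N)\le 2\delta_S\}$ (so $|S'|\ge|S|/2$ and $|N'|:=|\Gamma^-(S')|\ge\beta|S|/2$ by expansion), greedily extract $S''\subseteq S'$ by scanning $S'$ and keeping a vertex only if it covers a not-yet-covered vertex of $N'$. This gives $\Gamma^-(S'')=N'$ and $|S''|\le|N'|$, so the induced bipartite graph $G'$ on $(S'',N')$ has expansion ratio at least $1$ and, crucially, average degree on the $N'$ side at most $|E(G')|/|N'|\le 2\delta_S\,|S''|/|N'|\le 2\delta_S$. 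Now Lemma \ref{lem:logdelta} applies verbatim to $G'$ and yields $S^*\subseteq S''$ with $|\UN_{S}(S^*)|=\Omega(|N'|/\log 4\delta_S)=\Omega(\beta/\log 2\delta_S)\cdot|S|$, and $\delta_S\le\beta\Delta$ gives the stated bound. This covering-subset step is exactly what converts the $N$-side degree bound from $O(\delta_N)$ to $O(\delta_S)$, which is the quantity the lemma's logarithm must depend on; without it you can only recover the weaker $\Omega(\beta/\log 2\delta_N)$ bound.
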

\begin{proof}
Let $S' \subseteq S$ be the set of all vertices $u\in S$ with $\deg(u,N)\leq 2\delta_S$, and note that $|S'|\geq |S|/2$.
Let $N'=\Gamma^{-}(S')$ be the set of neighbors of $S'$ in $N$. By the expansion of $G$, we have $|N'|\geq \beta \cdot |S'|\geq \beta|S|/2$. We now claim that there exists a subset $S'' \subseteq S'$ satisfying   $\Gamma^-(S'')=N'$ and $|S''|\leq |N'|$. To see this, initially set $S''$ to be empty. Iterate over the vertices of $S'$ and add a vertex $u \in S'$ to $S''$ only if it covers a new vertex of $N'$ (i.e.,  it has a new neighbor in $N'$ that has not been covered before). Then $|S''|\leq |N'|$ and hence in the induced bipartite graph $G'$ with sides $S''$ and $N'$, the expansion measure $\beta$', with $\beta'=|N'|/|S''|$, is at least $1$. The average degree of a vertex $u \in N'$ in the graph $G'$ is bounded by $|E(G')|/|N'|\leq 2\delta_S \cdot |S''|/|N'|\leq 2\delta_S$. Employing the argument of Lemma \ref{lem:logdelta} on the bipartite graph $G'$, we get that
there exists a subset $S^* \subseteq S''$ satisfying $|\UN_{S''}(S^*)|=\Omega(|N'| / \log 4\delta_S)=\Omega(\beta/ \log 2\delta_S)|S|$. Since $\delta_S \leq \Delta \cdot \beta$, it follows that $\beta_w=\Omega(\beta / \log 2(\Delta \cdot \beta))$.  \QED
\end{proof}
Theorem \ref{thm:lb} follows from Lemmas \ref{lem:logdelta} and \ref{lem:reductionsmallb}.

%\\
%\vspace{10pt}
%\noindent
\subsubsection{Relation to the Spokesman Election problem \cite{chlamtac1991wave}} \label{rel}
%\subsection{Positive results}
%[[S: need to rewrite]]
Motivated by broadcasting in multihop radio networks, Chalmtac and Weinstien \cite{chlamtac1991wave} defined the \emph{spokesmen election problem}. In this problem, given a bipartite graph $G=(S,N,E)$, the goal is to compute a subset $S' \subseteq S$ with the maximum number of unique neighbors $\UN(S')$ in $N$.
This problem was shown in \cite{chlamtac1985broadcasting} to be NP-hard. In \cite{chlamtac1991wave}, an approximation scheme is presented that computes a subset $S'\subseteq S$ with $|\UN(S')|\geq |N|/\log |S|$, and this approximation scheme was then used to   devise efficient broadcasting algorithms for multihop radio networks.

The bounds provided in Lemmas \ref{lem:logdelta} and \ref{lem:reductionsmallb} refine and strengthen upon the bound of \cite{chlamtac1991wave}.
%Although $\UN(S')$ can be smaller than $N$,
Our bounds show that $|\UN(S')|$ cannot be smaller than $|N|$ by more than a factor that is logarithmic in $2\min\{\delta_N,\delta_S\}$,
which depends on the \emph{average degree} in $G$, whereas the bound of \ref{lem:reductionsmallb} did not preclude the possibility of $|\UN(S')|$ being smaller than $|N|$
by a factor of $\log |S|$.
%these only showed there could be a deviation that is logarithmic in $|S|$.  % rather than on the .  % rather than on the size of its vertex set or even the maximum degree.
Note that $\min\{\delta_N,\delta_S\}$ is always upper bounded by $|S|$, but can be much smaller than it.
In particular, $\min\{\delta_N,\delta_S\}$ is always low in low arboricity graphs (even if the maximum degree is huge), regardless of  $|S|$.  % is arbitrarily large.

We remark that our randomized approach of choosing the subset $S'\subseteq S$ is extremely simple, and in particular, it yields a much simpler solution to the Spokesman Election problem than that of \cite{chlamtac1991wave}.
Since the solution to this problem was used in \cite{chlamtac1991wave} to devise efficient broadcasting algorithms for multihop radio networks,
our solution can be used to obtain simpler broadcasting algorithms for multihop radio networks than those of  \cite{chlamtac1991wave}.

In the next section (Section \ref{sec:neg}), we show that our positive results for $(\alpha,\beta)$-expanders are essentially the best that one can hope for, by providing a ``bad'' expander example. A bad graph expander example for the related Spokesman Election problem was given in \cite{chlamtac1991wave},
but our graph example is stronger than that of
 \cite{chlamtac1991wave} in several ways, and is based on completely different ideas.
%  (our construction, though, is totally different than that of \cite{chlamtac1991wave}). Specifically,
The graph example of \cite{chlamtac1991wave} is tailored for the somewhat degenerate case where $|N|=\Omega(|S|!)$, whence $N$ is exponentially larger than $S$,
thus the expansion of the bad graph (and the degree) is huge.
%The result of \cite{chlamtac1991wave} does not provide any lower bound
%preclude the existence  Whenever the expansion and degree is lower than that, their
%Their example does not provide any lower bound for expanders in which the degree is should have a reasonably low degree, their example is
 %and then the expansion is obviously low.
%the expansion is to get a .
In addition, in their example, one cannot uniquely cover more than $|N|/\log(|S|) = |N|/\log\log |N|$ vertices of $N$, leaving a big gap between their positive and negative results.
%but this does not match their positive result,
%showing that one can always cover $|N|/\log(|N|)$ vertices of $N$.
%Indeed, since in their example $|N|=\Omega(|S|!)$, they only show that one cannot uniquely cover more than $|N|/\log\log |N|$ vertices of $N$,
%and there is a big gap between
%covering $|N|/\log |N|$ vertices in $N$.
Our bad graph example, in contrast, works for any expansion parameter $\beta$. Moreover, similarly to our positive result, the bounds implied by our negative result
depend on  the average degree of the graph rather than the maximum degree or the size of $S$.
In particular, by taking $\beta$ to be constant and $\Delta$ to be sufficiently large, our graph example shows that one cannot cover more than $|N|/\log |N|$ vertices of $N$, which not only matches our positive result, but also closes the gap left by \cite{chlamtac1991wave}.

%Our negative results also refine and strengthen upon negative results for the spokesman election problem from \cite{chlamtac1991wave}, as shown next.
%\subsection{Negative results} \label{nega}
%%%%%%%%%%%%%%%%

\subsection{Negative Results: Worst-Case Expanders} \label{sec:neg}
In this section we present a ``bad graph '' expander construction.  The description of our construction is given in three stages.
%\begin{enumerate}
First, in Section \ref{core} we construct a bipartite graph $G_S = (S, \NN, E_S)$ with sides $S$ and $\NN$
% where $|\NN| \Theta(|S| \log |S|)$.
that satisfies two somewhat contradictory requirements:
On the one hand, for every subset $S'$ of $S$, $|\Gamma(S')| \ge \log 2|S| \cdot |S'|$.
Hence the ordinary expansion of $G_S$, denoted by $\beta$, is at least $\log 2|S|$.
%There is also an expansion in the symmetric direction, from $\NN$ towards $S$, which we refer to as \emph{inverse-expansion},
%and it is at least as large as $1/\beta$.
%That is, for subset $\NN'$ of $\NN$,  $|\Gamma(\NN')| \ge (1/\log 2s) \cdot |\NN'|$
On the other hand, for every subset $S'$ of $S$, $|\UN_{S}(S')| \le (2 / \log 2|S|) \cdot |\NN|$.
Hence the wireless expansion of $G_S$, denoted $\beta_w$, satisfies $\beta_w \le \beta(2 / \log2|S|)$.
Although this graph is an ordinary bipartite expander (according to the definition given in Section \ref{Notation and Definitions}),
note that the size of $\NN$ is greater than that of $S$ by a factor of $\log 2|S|$.
Also, it does not provide an ordinary non-bipartite expander, because the expansion is achieved only on one side, from $S$ towards $\NN$.
Nevertheless, it provides the core of our worst-case expander, and is henceforth referred to as the \emph{core graph}.
Next, in Section \ref{expand} we describe a generalized core graph
$G^*_S=(S^*,\NN^*,E^*_S)$ with an arbitrary expansion $\beta^*$, while preserving the same upper bound on the wireless expansion.
Finally, in Section \ref{plug} we plug the generalized core graph on top of an ordinary expander $G(V,E)$ with a possibly good wireless expansion, such that $\NN^*\subseteq V$ and $S^*\cap V=\emptyset$, and demonstrate that the resulting graph $\tilde G=(V\cup S^*, E\cup E^*_S)$ is an ordinary expander with a similar expansion but a poor wireless expansion.
%Plugging the generalized core graph on top of $G$ should be done carefully, otherwise
%A-priori, it is unclear that one may significantly reduce the wireless expansion of the graph without reducing the ordinary expansion at a similar rate.
%In particular, it is not enough to guarantee that the core graph has good expansion, because this guarantee does not apply to the vertices of $\NN$.
%This is where the inverse-expansion of the core graph (as defined above) comes into play.
While the generalized core graph is bipartite, the ordinary expander $G$ that we started from does not have to be bipartite.
%Our modification can make sure to preserve the ``bipartiteness'', i.e,
If the original expander $G$ is bipartite,
we can ensure that the expander resulting from our modification will  also be bipartite.
%\end{enumerate}

\subsubsection{The Core Graph} \label{core}

\begin{lemma} \label{worstcase1}
For any integer $s \ge 1$, there is a bipartite graph $G_S = (S, \NN, E_S)$ such that:
\begin{enumerate}
\item $s:= |S|$ and $|\NN| = s \log 2s$.
\item Each vertex in $S$ has degree $2s-1$.
%(Hence both the maximum degree $\Delta_S$ and the average degree $\delta_S$ of a vertex in $S$ is $2s-1$.)  %, and the average degree. (This is   the maximum degree in $G^*_S$.)
\item The maximum degree $\Delta_\NN$ of a vertex in $\NN$ is $s$, and the average degree $\delta_\NN$ of a vertex in $\NN$ is at most $2s/\log 2s$.
\item For every subset $S'$ of $S$, $|\Gamma(S')| \ge \log 2s \cdot |S'|$. (Hence the ordinary expansion, denoted $\beta$, is at least $\log 2s$.)
%\item For every subset $\NN'$ of $\NN$, $|\Gamma(\NN')| \ge (1/\log 2s) \cdot |\NN'|$. (Hence the inverse-expansion is at least $1/\log 2s$.)
\item For every subset $S'$ of $S$, $|\UN_{S}(S')| \le 2s = (2 / \log 2s) \cdot |\NN|$.
(Hence the wireless expansion, denoted $\beta_w$, satisfies $\beta_w \le \beta(2 / \log2s)$.)
\end{enumerate}
\end{lemma}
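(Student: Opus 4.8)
The plan is to build $G_S$ \emph{level by level}, using a laminar family of partitions of $S$ whose blocks double in size. Identify $S$ with $\{0,1,\ldots,s-1\}$ and assume for clarity that $s$ is a power of two, so that $k:=\log 2s$ is an integer (the general case follows by a routine rounding of the block sizes). For each $i=0,1,\ldots,k-1$, partition $S$ into $s/2^i$ consecutive blocks of size $2^i$, and for every such block $B$ introduce a group of $2^i$ fresh vertices forming a layer $\NN_i$, each of them made adjacent to \emph{all} of $B$ (so each block induces a complete bipartite graph $K_{2^i,2^i}$). Set $\NN=\bigcup_{i=0}^{k-1}\NN_i$. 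The partitions are nested: every level-$i$ block is contained in a level-$(i+1)$ block, which will be crucial later.

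First I would dispatch the ``easy'' items 1--3 by direct counting. Each level contributes $|\NN_i|=(s/2^i)\cdot 2^i=s$ vertices, so $|\NN|=sk=s\log 2s$; a vertex of $S$ lies in exactly one block per level and is joined to its $2^i$ group-vertices there, so its degree is $\sum_{i=0}^{k-1}2^i=2s-1$; and a vertex of $\NN_i$ has degree exactly $2^i$, so the maximum $\NN$-degree is $2^{k-1}=s$ while the average is $s(2s-1)/(s\log 2s)\le 2s/\log 2s$. For item 4, fix $S'\subseteq S$ with $|S'|=t$ and work inside a single level $i$: the $t$ vertices of $S'$ occupy at least $t/2^i$ blocks, and each occupied block donates all $2^i$ of its group-vertices to $\Gamma(S')$, so level $i$ alone contributes at least $2^i\cdot(t/2^i)=t$ neighbors; summing over the $k$ levels gives $|\Gamma(S')|\ge kt=\log 2s\cdot|S'|$, as required.

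The heart of the argument, and the step I expect to be the main obstacle, is item 5. The key observation is that a group-vertex attached to a block $B$ sees exactly $|B\cap S'|$ vertices of $S'$; hence it is a unique neighbor of $S'$ precisely when $|B\cap S'|=1$, and in that case \emph{all} $2^i$ group-vertices of $B$ are unique neighbors. Writing $b_i(S')$ for the number of level-$i$ blocks meeting $S'$ in exactly one vertex, this yields the exact identity $|\UN_{S}(S')|=\sum_{i=0}^{k-1}2^i\,b_i(S')$. I would then exploit the laminar structure through a recursion on block size: for a block $I$ of size $2^m$ with halves $I_L,I_R$, the contribution $g_m(S')$ of levels $0,\ldots,m$ restricted to $I$ (for $S'\subseteq I$) satisfies $g_m(S')=g_{m-1}(S'\cap I_L)+g_{m-1}(S'\cap I_R)+2^m\cdot[\,|S'|=1\,]$. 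An easy induction then shows $\max_{S'}g_m=2^{m+1}-1$, with the maximum attained by a single vertex. The intuition is that to collect the large reward $2^m$ from the top level one must confine $S'$ to a single half, but doing so forfeits the symmetric reward on the other half and recurses into an interval with a proportionally smaller budget; splitting $S'$ between the halves kills the top-level term entirely, so spreading $S'$ never beats a singleton. Taking $m=k-1$ gives $|\UN_{S}(S')|\le 2^{k}-1=2s-1\le 2s$ for every $S'$, which is item 5 and completes the proof; moreover this bound is essentially attained, since any single vertex of $S$ has $2s-1$ neighbors, all of them unique.
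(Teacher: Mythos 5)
Your construction is exactly the paper's: the laminar family of consecutive blocks doubling in size is precisely the perfect binary tree $T_S$ used there (a level-$i$ block is the set of descendant leaves of a tree node, and your group vertices are the sets $\NN_v$), and your recursion $g_m(S')=g_{m-1}(S'\cap I_L)+g_{m-1}(S'\cap I_R)+2^m\cdot[\,|S'|=1\,]$ with maximum $2^{m+1}-1$ is the same induction the paper carries out on the tree (splitting on whether $S'$ hits one or both subtrees). The proof is correct and essentially identical in substance; your treatment of item 4 via $2^i\cdot\lceil t/2^i\rceil\ge t$ per level is a slightly cleaner packaging of the paper's two-case argument.
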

\begin{proof}
%Let
%$\Gamma = \{v_1,\ldots,v_\gamma\}$, $S = \{z_1,\ldots,z_s\}$, where
%$\gamma = |\NN|, s = |S|$, where $s$ is the minimum integer power of 2 greater than $\gamma/(2\log \gamma)$.
%Since $\gamma/(2\log \gamma) < s \le \gamma/\log \gamma$ and $\log \gamma \ge 2$, we have
%\begin{equation} \label{sizeofgamma1} \frac{\gamma}{4} ~<~ \frac{\gamma}{2\log \gamma} \log \left(\frac{\gamma}{\log \gamma}\right) ~<~     s \log 2s
%~\le~ \frac{\gamma}{\log \gamma} \log \left(\frac{2\gamma}{\log \gamma}\right) ~\le~ \gamma.
%\end{equation}
We assume for simplicity that $s$ is an integer power of 2, which may effect the bounds in the statements of the lemma by at most a small constant.
To describe the edge set $E_S$ of $G_S$, consider a perfect binary tree $T_S$ with $s$ leaves (and $s-1$ internal vertices).
We identify each leaf $z$ of $T_S$ with a unique vertex of $S$.
%For an internal vertex $v$ of $T$ with children $v_L$ and $v_R$, we have $S(v) = S(v_L) \cup S(v_R)$.
Each vertex $v$ of $T_S$ is associated with a set $\NN_v$ of vertices from $\NN$; all these vertex sets are pairwise disjoint, and we have $\NN= \bigcup_{v \in T_S} \NN_v$.
For a vertex $v$ at level $i$ of the tree, $i = 0,1,\ldots,\log s$, the set $\NN_v$ contains $s/2^i$ vertices.
Thus the sizes of these vertex sets decrease geometrically with the level, starting with the set $\NN_{rt}$ at the root $rt$  that consists of $s$ vertices,
and ending with singletons at the leaves.
%\\
%\def\TextBox{
Denote by $\NN_i$ the union of the sets $\NN_v$ over all $i$-level vertices in $T_S$. For all $i = 0,1,\ldots,\log s$, we have $|\NN_i| = s$,
hence $|\NN| = s \log 2s$.
%By Equation (\ref{sizeofgamma1}), $\gamma/4 < |\NN| < \gamma$.
For a leaf $z$ in $T_S$, let $A(z)$ denote the set of its ancestors in $T_S$ (including $z$ itself), and let $\hat \NN_z = \bigcup_{w \in A(z)}\NN_w$.
Define $E(z) = \{(z,v) ~\vert~ v \in \hat \NN_z\}$. Then $E_S = \bigcup_{z \in S} E(z)$.
(See Fig. \ref{fig:basiclb-new} for an illustration.)
%(See Figure \ref{fig:basiclb} for an illustration.)
%TextBox
%%%%%%%%%%%%%%%%%%%%%%%%%%%%%%%%%%%%%
%\vspace{-0.3cm}

%\hspace{-16pt}
%%\parbox{3in}{
%\begin{minipage}{0.3\textwidth}    % decrease size of minipage
%\TextBox
%\end{minipage}
%}
%%\
%%\  \
%%\ ~~ \
%\parbox{3.5in}{

%}

%\def\AltMethod{
%\TextBox
%%%%%%%%%%%%%%%%%%%%%%%%%%%%%%%%%%%%%%%%%%
%\setlength{\columnsep}{10pt}%
%\begin{wrapfigure}{r}{6.5cm}
%\centering
\begin{figure}[htb]
\begin{center}
\includegraphics[width=0.85\textwidth]{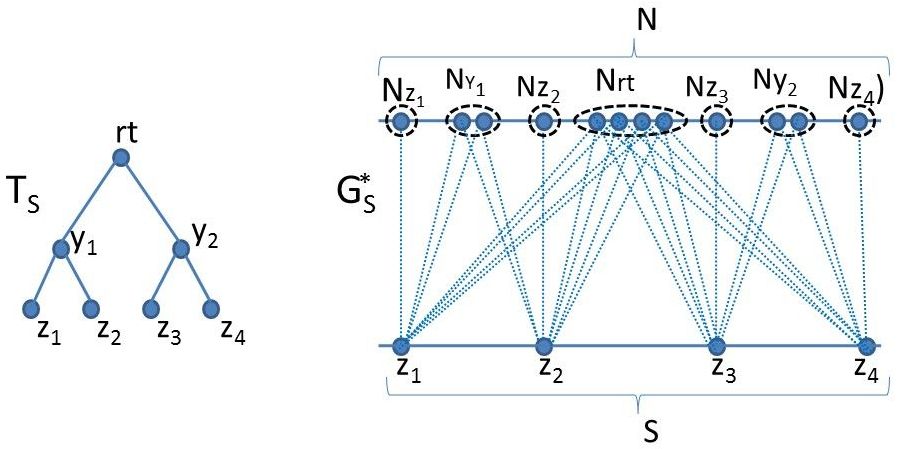}
\end{center}
\caption{
\label{fig:basiclb-new}}
\end{figure}

%	\caption{{\footnotesize An illustration of a graph $G_S$ with worst possible (up to constants) wireless expansion, and the underlying binary tree skeleton $T_S$ used for its construction.}}
%	\label{fig:basiclb}
%\end{wrapfigure}
%%\ignore{
%%%%%%%%%%%%%%%%%
%\begin{figure}[htb]
%\begin{center}
%\includegraphics[scale=0.5]{figs/basiclb.jpg}
%\end{center}
%\caption{\sf An illustration of a graph $G_S$ with worst possible (up to constants) wireless expansion, and the underlying binary tree skeleton $T_S$ used for its construction.}
%\label{fig:basiclb}
%\end{figure}
%%%%%%%%%%%%%%%%
%%}
%
%\setlength{\columnsep}{18pt}%
%\begin{wrapfigure}{r}{7cm}
%\centering
%\includegraphics[width=0.35\textwidth]{figs/basiclb.jpg}
	%\caption{{\footnotesize An illustration of a graph $G_S$ with worst possible (up to constants) wireless expansion, and the underlying binary tree skeleton $T_S$ used for its construction.}}
	%\label{fig:basiclb}
%\end{wrapfigure}
%%%%%%%%%%%%%%%%%%%%%%%%%%%%%%%%%%%%%%%%%%%%%%%%%%%%%%%%%%%%%%%%%%%%%%%%

\begin{observation} \label{ob:basic}
There is an edge between vertex $z \in S$ and vertex $v \in \NN$ iff the unique vertex $w$ in $T_S$ such that $v \in \NN_w$ is an ancestor of $z$ in $T_S$.
\end{observation}
Note that the degree of each vertex $z \in S$, namely $|E(z)|$, is equal to $\sum_{i=0}^{\log s} 2^i = 2s-1$.
On the other hand, the degrees of vertices in $\NN$ are not uniform. For a vertex $v$ in $T_S$, each vertex in $\NN_v$ is incident on  the descendant leaves of $v$.
This means that if $v$ is at level $i$ of $T_S$, then all vertices in $\NN_v$ have degree $2^{\log s-i}=s/2^i$. Hence, the maximum degree $\Delta_\NN$ of a vertex in $\NN$ is $s$ and the average degree $\delta_\NN$ of a vertex in $\NN$ is given by
\begin{eqnarray*}
\delta_\NN &=& \frac{1}{|\NN|} (\sum_{i=0}^{\log s} |\NN_i| (s/2^i))
\\&=& \frac{1}{|\NN|}(\sum_{i=0}^{\log s} \frac{s^2}{2^i})
\le \frac{2s^2}{s\log 2s} ~=~ \frac{2s}{\log 2s}~.
\end{eqnarray*}
Next, we lower bound the   expansion $\beta$ of the graph $G_S$.
Fix an arbitrary set $S' \subseteq S$ of size $k$, for any $1 \le k \le s$, and consider the set of $k$ leaves in $T_S$ identified with $S'$, denoted by $s_1, \ldots,s_k$.
Recall that the level of the root $rt$ is 0, the level of its children is 1, etc., the level of the leaves of $T_S$ is  $\log s$;
in what follows we say that a vertex has \emph{inverse-level} $j$ if its level in $T_S$ is $\log s - j$.
For each vertex $v$ at inverse-level $j$ in $T_S$, the associated vertex set $\NN_v$ has size $2^j$.
Next, we distinguish between inverse-levels at most  $\lfloor \log k \rfloor$ and higher inverse-levels.
For any inverse-level $0 \le j \le \lfloor \log k \rfloor$, the number of ancestors of the $k$ leaves $s_1,\ldots,s_k$ in the tree $T_S$ is at least $k/2^j$,
hence the union of the corresponding vertex sets
%associated with the ancestors of the $k$ leaves $s_1,\ldots,s_k$ inverse-level $j$
 is of size at least $k$. (The lower bound is realized when the $k$ leaves are consecutive to each other in $T_S$.)
For each inverse-level higher than $\lfloor \log k \rfloor$, the number of ancestors of the $k$ leaves $s_1,\ldots,s_k$ may be as small as 1,
but the vertex set associated with such an ancestor is of size at least $k$.
It follows that the union of the corresponding vertex sets at each level is lower bounded by $k$, and so
the union of the vertex sets of all ancestors of the $k$ leaves $s_1, \ldots,s_k$ over all levels is at least $(\log s + 1) \cdot k$.
%is  $\Theta(\log (2\Delta_0/\beta_0))$, i.e., $G^*_S$ is an (ordinary) $(1,\beta)$-expander for $\beta=\Theta(\log (2\Delta_0/\beta_0))$.
%\end{claim}
%\Proof
%Let $T_S$ be the perfect binary tree as described in the construction of $G^*_S$ and let $S=\{s_1,...,s_s\}$.
%Recall that we identify each leaf $z$ of $T_S$ with a unique vertex of $S$. For technical convenience, we define $s_1$ to be the vertex corresponding to the left-most leaf in %$T_S$ and $s_s$ to be the vertex corresponding to the right-most leaf in $T_S$.
By Observation \ref{ob:basic}, all the vertices in this union are neighbors of the vertices in $S'$, thereby yielding $|\Gamma(S')| \ge (\log s + 1) \cdot k = \log 2s \cdot |S'|$.
It follows that $\beta \ge \log 2s$.

\ignore{
We proceed to lower bounding the inverse-expansion of the graph $G_S$.
Fix an arbitrary set $\NN'$ of $\NN$ and some level $i \in \{0,\ldots,\log s\}$, and consider the vertex set $\NN'_i := \NN' \cap \NN_i$, i.e., the intersection of the set $\NN'$ and
all the sets $\NN_v$ over all $i$-vertices in $T_S$. For any such vertex $v$,
denote by $L_v$ the set of vertices in $S$ that are associated with the descendant leaves of $v$ in $T_S$, and note that $|L_v| = |N_v| = s/2^i$.
By Observation \ref{ob:basic}, all vertices in $N_v$ are neighbors of all vertices in $L_v$.
Moreover, for distinct vertices $v$ and $w$ at level $i$, $L_v \cap L_w = N_v \cap N_w = \emptyset$.
It follows that $|\Gamma(\NN'_i)| \ge |\NN'_i|$.
Although the sets $\Gamma(\NN'_i)$ may intersect, $i \in \{0,\ldots,\log s\}$, the $\NN'_i$ sets are vertex-disjoint,
from which we conclude that
$$|\Gamma(\NN')| ~\ge~   (1/\log 2s) \sum_{i \ge 0} |\Gamma(\NN'_i)|
~\ge~  (1/\log 2s) \sum_{i \ge 0} |\NN'_i| ~=~ (1/\log 2s) |\NN'|.$$
}

It remains to upper bound the wireless expansion $\beta_w$ of the graph $G_S$.
Fix an arbitrary set $S' \subseteq S$, and recall that $\UN_{S}(S')$ denotes the set of \emph{all} vertices outside $S$ that have a single neighbor from $S'$.
For a vertex $v$ in $T_S$, let $D(v)$ denote the set of its descendants in $T_S$ (including $v$ itself), and let $\check \NN_v = \bigcup_{w \in D(v)}\NN_w$.
%Recall that the level of $rt$ is 0, the level of its children is 1, etc., the level of the leaves of $T_S$ is  $\log s$;
%As before, we say that a vertex has \emph{inverse-level} $j$ if its level in $T_S$ is $\log s - j$.
We argue that for any vertex $v$ at inverse-level $j$, for $j = 0,1,\ldots,\log s$, it holds that $|\UN_{S}(S') \cap \check \NN_v| \le 2^{j+1}-1$.
%where $j = \log s - i$.
The proof is by induction on  $j$.
\emph{Basis $j = 0$.} In this case $v$ is a leaf, hence $\check \NN_v = \NN_v = \{v\}$, and so
$|\UN_{S}(S') \cap \check \NN_v| \le 1 = 2^{j+1} - 1$.
\emph{Induction step: Assume the correctness of the statement for all smaller values of $j$, and prove it for $j$.}
Consider an arbitrary vertex $v$ at level $j$, and denote its left and right children by $v_L$ and $v_R$, respectively.
Suppose first that $S'$ contains at least one leaf $z_L$ from the subtree of $v_L$ and at least one leaf $z_R$ from the subtree of $v_R$.
By Observation \ref{ob:basic}, every vertex in $\NN_v$ is incident to both $z_L$ and $z_R$, hence no vertex of $\NN_v$ belongs to $\UN_{S}(S')$.
It follows that $\UN_{S}(S') \cap \check \NN_v ~=~ (\UN_{S}(S') \cap \check \NN_{v_L}) \cup (\UN_{S}(S') \cap \check \NN_{v_R}).$
By the induction hypothesis, we conclude that
\begin{eqnarray*}
|\UN_{S}(S') \cap \check \NN_v| &=& |\UN_{S}(S') \cap \check \NN_{v_L}| + |\UN_{S}(S') \cap \check \NN_{v_R}|
\\&\le& 2 \cdot (2^{j} -1) ~\le~ 2^{j+1} - 1~.
\end{eqnarray*}
We henceforth assume that no leaf in the subtree of either $v_L$ or $v_R$, without loss of generality $v_L$, belongs to $S'$.
Hence, by Observation \ref{ob:basic} again, no vertex of $\check \NN_{v_L}$ belongs to $\Gamma(S') \supseteq \UN_{S}(S')$, which gives $\UN_{S}(S') \cap \check \NN_v ~=~ (\UN_{S}(S') \cap \NN_v) \cup (\UN_{S}(S') \cap \check \NN_{v_R}).$
Obviously $|(\UN_{S}(S') \cap \NN_v)| \le |\NN_v| = 2^{j}$.
By the induction hypothesis, we obtain $|\UN_{S}(S') \cap \check \NN_v| ~=~ |\UN_{S}(S') \cap \NN_v| + |\UN_{S}(S') \cap \check \NN_{v_R}|
~\le~ 2^{j} + 2^{j}-1 ~=~ 2^{j+1}-1.$
This completes the proof of the induction.

Since $\check \NN_{rt} = \NN$,
applying the induction statement for the root $rt$ of $T_S$ yields
  $$|\UN_{S}(S')| ~=~ |\UN_{S}(S') \cap \check \NN_{rt}| ~\le~ 2^{\log s+1} -1  ~\le~ 2s ~=~ (2 / \log 2s) \cdot |\NN|.$$
  It follows that $\beta_w \le \beta(2 / \log2s)$, which completes the proof of the lemma.  \QED
 \end{proof}

%{\bf Remark.} The proof of this Lemma reveals additional information: Taking $S'$ to be any singleton of $S$ maximizes $|\UN_{S}(S')|$.
%That is, if $S'$ is an arbitrary singleton of $S$ then we have $|\UN_{S}(S')| = 2s-1$,
%but we have $|\UN_{S}(S')| \le 2s-2$ for any subset $S'$ of $S$ with $|S'| \ge 2$.
%\vspace{-10pt}
\subsubsection{The Core Graph with Arbitrary Expansion} \label{expand}

Notice that the expansion of the graph provided by Lemma \ref{worstcase1} is   logarithmic in the size of its vertex set
and also in the maximum and average degree   (both in $S$ and in $\NN$).
In what follows we show how to construct a \emph{generalized core graph} that has an arbitrary expansion.
\begin{lemma} \label{coregen}
For any integer $\Delta^* \ge 1$ and any $\beta^*$ satisfying $(2e) / \Delta^* \le \beta^* \le \Delta^* / (2e)$ (where $e$ is the base of the natural logarithm),
there exists a bipartite graph $G^*_S = (S^*, \NN^*, E^*_S)$ with sides $S^*$ and $\NN^*$ of maximum degree $\Delta^*$, such that
\begin{enumerate}
\item  $|S^*| \le \Delta^*/2$, $|\NN^*| = \beta^* \cdot |S^*|$.
%$|\NN^*| < (\Delta/2) \cdot \log \Delta$.
%\item   Each vertex in $\check S$ has degree $2s-1$.
%(Hence both the maximum degree $\Delta_S$ and the average degree $\delta_S$ of a vertex in $S$ is $2s-1$.)  %, and the average degree. (This is   the maximum degree in $G^*_S$.)
%\item The maximum degree $\Delta_{\NN}$ of a vertex in $\NN$ is $s \cdot (\log 2s/\beta)$, and the average degree $\delta_{\NN}$ of a vertex in $\NN$ is at most
%$2s/\beta$.
\item For every subset $S'$ of $S^*$, $|\Gamma(S')| \ge \beta^* \cdot |S'|$. (Thus, ordinary expansion is at least $\beta^*$.)
%\item For every subset $\NN'$ of $\NN^*$, $|\Gamma(\NN')| \ge (1/\beta) \cdot |\NN'|$. (Hence the inverse-expansion is at least $1/\beta$.)
\item For every subset $S'$ of $S^*$, \\$|\UN_{S^*}(S')| \le (4 / \log (\min\{\Delta^*/ \beta^*, \Delta^*\cdot \beta^*\})) \cdot |\NN^*|$.
%(Hence the wireless expansion is at most $2 / \log2s$.)
(Hence the wireless expansion, denoted $\beta_w$, satisfies \\$\beta_w \le \beta^*(4 / \log (\min\{\Delta^*/ \beta^*, \Delta^*\cdot \beta^*\}))$.)
\end{enumerate}
\end{lemma}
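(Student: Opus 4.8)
The plan is to build the generalized core graph directly from the core graph $G_S=(S,\NN,E_S)$ of Lemma~\ref{worstcase1} by applying two structural ``inflation'' operations that let us dial the ordinary expansion to an arbitrary target $\beta^*$ while only mildly degrading the wireless gap. Throughout write $m=\min\{\Delta^*/\beta^*,\Delta^*\cdot\beta^*\}$ and $c=\log 2s$, so the base graph has $|\NN|=sc$, ordinary expansion at least $c$, leaf degree $2s-1$, maximum $\NN$-degree $s$, and the crucial property $|\UN_{S}(S')|\le 2s$ for \emph{every} $S'\subseteq S$ (items (4)--(5) of Lemma~\ref{worstcase1}). The guiding idea is that the core graph already realizes the extremal wireless-to-ordinary ratio $2/c$; all we must do is re-scale the two sides to reach expansion $\beta^*$ and maximum degree at most $\Delta^*$ while keeping $c=\Theta(\log m)$.

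The two operations are: (i) \emph{$\NN$-inflation} by an integer factor $b$, replacing each $v\in\NN$ by $b$ twins with $v$'s adjacency; and (ii) \emph{$S$-inflation} by an integer factor $a$, replacing each leaf $z\in S$ by $a$ twins with $z$'s adjacency. For the result $G^*_S=(S^*,\NN^*,E^*_S)$ we have $|S^*|=as$, $|\NN^*|=bsc$, leaf degree $b(2s-1)$, maximum $\NN^*$-degree $as$, hence overall maximum degree $\max\{b(2s-1),as\}$. For the ordinary expansion, any $S'\subseteq S^*$ has a \emph{support} $K\subseteq S$ (the original leaves with at least one twin in $S'$); since twins share adjacency, $|\Gamma(S')|=b\,|\Gamma(K)|\ge b|K|c$ while $|S'|\le a|K|$, so $|\Gamma(S')|/|S'|\ge bc/a$. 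For the wireless bound I would use a projection argument: the $b$ twins of a core-vertex are uniquely covered by $S'$ together or not at all, and if such a $v$ is uniquely covered, its unique $S'$-neighbor is a single twin of a single leaf $z$, forcing $z$ into the set $S''\subseteq S$ of leaves with exactly one twin in $S'$ and forcing $v\in\UN_{S}(S'')$; hence $|\UN_{S^*}(S')|\le b\,|\UN_{S}(S'')|\le 2bs$ by Lemma~\ref{worstcase1}(5), giving $|\UN_{S^*}(S')|/|\NN^*|\le 2bs/(bsc)=2/c$ for every $S'$, independently of $a$ and $b$.

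It remains to pick $s,a,b$. Choose $c=\lfloor\tfrac12\log m\rfloor$, so $2^c\approx\sqrt m$ and $2/c\le 4/\log m$; when $\beta^*\ge c$ take $a=1$, $b=\lceil\beta^*/c\rceil$ (pure $\NN$-inflation), and otherwise take $b=1$, $a=\lceil c/\beta^*\rceil$ (pure $S$-inflation). A direct check shows that with $2^c\approx\sqrt m$ both the leaf degree $b(2s-1)$ and the root $\NN^*$-degree $as$ stay at most $\Delta^*$, that $|S^*|=as\le\Delta^*/2$, and that $bc/a\ge\beta^*$ in both regimes; the exact identity $|\NN^*|=\beta^*|S^*|$ is then restored by deleting the few surplus lowest-level (degree-$a$) vertices of $\NN^*$, which keeps the per-subset expansion comfortably above $\beta^*$ and can only decrease $|\UN_{S^*}(\cdot)|$. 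Since $|\UN_{S^*}(S')|/|\NN^*|\le 2/c\le 4/\log m$ for every $S'$, item (3) follows, and the hypothesis $(2e)/\Delta^*\le\beta^*\le\Delta^*/(2e)$ ensures $m\ge 2e$, so $c\ge 1$ and all parameters are well-defined. The crux of the argument, and the reason for the factor $4$ rather than the core graph's $2$, is precisely this two-sided degree cap: reaching a small $\beta^*$ demands $S$-inflation, which drives the root $\NN^*$-degree up to $as$, while reaching a large $\beta^*$ demands $\NN$-inflation, which drives the leaf degree up to $\approx 2bs$; balancing both against $\Delta^*$ forces $2^c\approx\sqrt m$ rather than $m$, i.e.\ $c\approx\tfrac12\log m$, which exactly halves the available $c$ and hence doubles the wireless gap.
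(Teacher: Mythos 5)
Your construction is, at its core, the same as the paper's: you take the core graph of Lemma~\ref{worstcase1} and apply a one-sided ``inflation'' --- duplicating $\NN$-vertices when $\beta^*$ exceeds the core's natural expansion $\log 2s$, duplicating $S$-vertices when it falls below --- which is exactly what Lemmas~\ref{worstcase2} and~\ref{worstcase3} do, and your projection argument for why $|\UN_{S^*}(S')|\le 2bs$ survives inflation is a (welcome, slightly more careful) justification of what those lemmas assert. The genuine difference is the parameterization: the paper chooses $s$ to \emph{saturate} the degree cap, solving $\Delta^*=2s\beta^*/\log 2s$ (resp.\ $\Delta^*=2s'\log 2s'/\beta^*$) for $s$ (resp.\ $s'$), which makes one of the two degree constraints automatic and in fact yields the better constant $2/\log(\min\{\Delta^*/\beta^*,\Delta^*\beta^*\})$ in the large-$\beta^*$ case; you instead fix $2s\approx\sqrt{m}$ uniformly and verify both degree caps by hand, paying the factor $4$ in both regimes. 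Your closing observation about why balancing the two degree caps forces $c\approx\tfrac12\log m$ is a nice conceptual account of where the paper's factor $4$ comes from (in the paper it arises from $\min\{\cdot\}\le\Delta^*\beta^*=2s'\log 2s'\le(2s')^2$).

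That said, three steps as written do not hold up, all of them rounding issues that the paper sidesteps by assuming the relevant ratios are integers. First, with $c=\lfloor\tfrac12\log m\rfloor$ you get $c\le\tfrac12\log m$, hence $2/c\ge 4/\log m$ --- the inequality you need points the other way (e.g.\ $\log m=5$ gives $2/c=1>0.8$); you need the ceiling here, and then the degree checks must be redone (they still go through, but with little slack near $m=2e$). Second, in the small-$\beta^*$ regime the choice $a=\lceil c/\beta^*\rceil$ gives expansion $c/a\le\beta^*$, with strict inequality unless $c/\beta^*$ is an integer (e.g.\ $c=3$, $\beta^*=2$ gives expansion $3/2$), so item (2) of the lemma fails; you need the floor there. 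Third, the claim that deleting the surplus $|\NN^*|-\beta^*|S^*|$ vertices ``keeps the per-subset expansion comfortably above $\beta^*$'' is not justified: the surplus can be as large as $s(c-1)$ vertices while a single leaf's neighborhood has only about $2bs$ vertices, so for singleton subsets $S'$ an unlucky deletion pattern can destroy the expansion guarantee, and nothing in your argument controls where the deleted vertices sit relative to $\Gamma(S')$. The clean fixes are either to adopt the paper's divisibility assumption, or to settle for $\beta^*|S^*|\le|\NN^*|\le 2\beta^*|S^*|$, which costs only a constant in the downstream use of the lemma (Claim~\ref{cl:helperwire}).
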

%\begin{proof}
%\def\APPENDLEMMACORGEN{
To prove Lemma \ref{coregen}, we first present
the following two lemmas which generalize Lemma \ref{worstcase1} to get an arbitrary expansion.
\begin{lemma} \label{worstcase2}
For any integer $s \ge 1$ and any $\beta > \log 2s$, there exists a bipartite graph $\hat G_S = (S, \hat \NN, \hat E_S)$ such that
\begin{enumerate}
\item $s:= |S|$ and $|\hat \NN| = s \cdot \beta$.
\item Each vertex in $S$ has degree $(2s-1) \cdot (\beta / \log 2s)$.
%(Hence both the maximum degree $\Delta_S$ and the average degree $\delta_S$ of a vertex in $S$ is $2s-1$.)  %, and the average degree. (This is   the maximum degree in $G^*_S$.)
\item The maximum degree $\Delta_{\hat \NN}$ of a vertex in $\hat \NN$ is $s$, and the average degree $\delta_{\hat \NN}$ of a vertex in $\hat \NN$ is at most $2s/\log 2s$.
\item For every subset $S'$ of $S$, $|\Gamma(S')| \ge \beta \cdot |S'|$. (Hence the ordinary expansion is at least $\beta$.)
%\item For every subset $\NN'$ of $\hat \NN$, $|\Gamma(\NN')| \ge (1/\beta) \cdot |\NN'|$. (Hence the inverse-expansion is at least $1/\beta$.)
\item For every subset $S'$ of $S$, $|\UN_{S}(S')| \le  2s \cdot (\beta / \log 2 s) = (2 / \log 2s) \cdot |\hat \NN|$.
%(Hence the wireless expansion is at most $2 / \log2s$.)
(Hence the wireless expansion, denoted $\beta_w$, satisfies $\beta_w \le \beta(2 / \log2s)$.)
\end{enumerate}
\end{lemma}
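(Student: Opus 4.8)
The plan is to obtain $\hat G_S$ from the core graph $G_S = (S, \NN, E_S)$ of Lemma \ref{worstcase1} by a simple \emph{blow-up} of the $\NN$-side: keep $S$ untouched and replace each vertex $v \in \NN$ by $c := \beta / \log 2s$ identical ``twin'' copies, each inheriting exactly the neighborhood $\Gamma(v) \subseteq S$ of $v$. Formally, I would set $\hat \NN = \NN \times \{1, \ldots, c\}$ and declare $(z, (v,i)) \in \hat E_S$ precisely when $(z, v) \in E_S$. Since $\beta > \log 2s$ by assumption we have $c > 1$, so this genuinely inflates the neighbor side. (For simplicity I would take $c$ to be an integer, absorbing any rounding error into the constants exactly as the proof of Lemma \ref{worstcase1} does for the assumption that $s$ is a power of two.)

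Most of the five required properties then follow by a one-line scaling argument from the corresponding property of $G_S$. Indeed $|\hat \NN| = c \cdot |\NN| = (\beta/\log 2s) \cdot s \log 2s = s\beta$, giving property (1); every edge of $G_S$ is duplicated $c$ times, so each $z \in S$ has its degree multiplied by $c$, namely $(2s-1)c = (2s-1)(\beta/\log 2s)$, giving property (2). For the ordinary expansion (4), for any $S' \subseteq S$ the $c$ copies of a neighbor of $S'$ are again neighbors of $S'$, so $\hat \Gamma(S')$ consists of exactly the $c$ copies of each vertex of $\Gamma(S')$; hence $|\hat \Gamma(S')| = c \, |\Gamma(S')| \ge c \cdot \log 2s \cdot |S'| = \beta |S'|$.

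The crux --- the one place where I would argue rather than merely rescale --- is property (5), and the key observation is that \emph{unique-neighbor status is preserved under twin duplication}. A copy $(v,i)$ has neighborhood $\Gamma(v)$ in $S$, so $(v,i) \in \UN_{S}(S')$ if and only if $|\Gamma(v) \cap S'| = 1$, i.e.\ if and only if $v$ itself is a unique neighbor of $S'$ in $G_S$. Consequently each original unique neighbor contributes exactly $c$ copies and each non-unique neighbor contributes none, so the unique-neighborhood in $\hat G_S$ has size exactly $c$ times its value in $G_S$; by property (5) of Lemma \ref{worstcase1} the latter is at most $2s$, whence $|\UN_{S}(S')| \le c \cdot 2s = 2s(\beta/\log 2s) = (2/\log 2s)\,|\hat \NN|$. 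The degree bounds in (3) are immediate for the same reason: a copy $(v,i)$ has the same $S$-degree as $v$, so the maximum $\hat \NN$-degree is unchanged at $s$, and since the edge count and $|\NN|$ both scale by $c$ the average degree satisfies $\delta_{\hat \NN} = \delta_{\NN} \le 2s/\log 2s$.

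I do not anticipate a genuine mathematical obstacle here: the whole construction is a black-box reduction to Lemma \ref{worstcase1}, and every claim reduces to the invariance of neighborhoods (hence of unique-neighborhoods and of expansion) under the vertex-duplication operation. The only real care needed is the bookkeeping around the integrality of $c = \beta/\log 2s$ and the exactness of the stated counts, which I would dispose of by rounding and noting that this perturbs all the bounds by at most a constant factor, consistent with the conventions already adopted in the proof of Lemma \ref{worstcase1}.
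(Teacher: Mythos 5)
Your proposal is correct and follows essentially the same route as the paper: the paper likewise constructs $\hat G_S$ by creating $k = \beta/\log 2s$ copies of each vertex of $\NN$ (assuming $k$ is an integer for simplicity), with each copy inheriting its original's neighborhood, and then derives all five properties by the same scaling observations, including the fact that $|\UN_S(S')|$ is multiplied exactly by $k$ under twin duplication.
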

\begin{proof}
We assume for simplicity that $k = \beta / \log 2s$ is an integer,
and modify the construction used to prove Lemma \ref{worstcase1} by creating $k$ copies $v_1, \ldots,v_k$ for each vertex $v$ in $\NN$.
Thus each vertex set $\NN_v$ is ``expanded'' by a factor of $k$; denote the expanded vertex set by $\hat \NN_v$.
The vertex set $\hat \NN$ of $\hat G_S$ is the union of all copies of all vertices in $\NN$, or in other words,
it is the union of all the expanded vertex sets, i.e., $\hat \NN = \bigcup_{v \in T_S} \hat \NN_v$.
The edge set $\hat E_{S}$ of $\hat G_S$ is obtained by translating each edge $(v,u)$ in the original graph $G_S$, where $v \in \NN$,
into the $k$ edges $(v_1,u),\ldots,(v_k,u)$ in $\hat G_S$.
%simply ``expanding'' each vertex set $\NN_v$ by a factor of $\beta / \log 2s$.
%Denoting the \emph{expanded vertex set} by $\hat \NN_v$, the vertex set $\hat \NN$ is obtained as the union of all the expanded vertex sets,
%i.e.,
Other than this modification, the construction remains intact.
Note that $S$ remains unchanged, and the degree of vertices in $\hat \NN$ is the same as the degree of vertices in $\NN$ in the original graph $G_S$ (both the maximum and average degree).
On the other hand, we now have $|\hat \NN| = (s \log 2s) \cdot (\beta / \log 2s) = s \cdot \beta$.
Moreover, the expansion increases from at least $\log 2s$ to at least $\beta$,
%the inverse-expansion decreases from at least $1/\log 2s$ to at least $1/\beta$,
and the degree of vertices in $S$ increases from $2s-1$ to $(2s-1) \cdot (\beta / \log 2s)$.
Finally, note that for every subset $S'$ of $S$, $|\UN_{S}(S')|$ increases by a factor of $\beta / \log 2s$,
hence $|\UN_{S}(S')|$ is at most $2s \cdot (\beta / \log 2 s) = (2 / \log 2s) \cdot |\hat \NN|$,
thus the wireless expansion $\beta_w$ satisfies $\beta_w \le \beta(2 / \log2s)$. \QED
\end{proof}
\begin{lemma} \label{worstcase3}
For any integer $s \ge 1$ and any $\beta \le \log 2s$, there exists a bipartite graph $\check G_S = (\check S, \NN, \check E_S)$ with sides $\check S$ and $\NN$,
%with sides $S_0$ and $\NN_0$,
such that
\begin{enumerate}
\item $|\check S| =  s \cdot (\log 2s / \beta)$ and $|\NN| = s \log 2s$.
\item   Each vertex in $\check S$ has degree $2s-1$.
%(Hence both the maximum degree $\Delta_S$ and the average degree $\delta_S$ of a vertex in $S$ is $2s-1$.)  %, and the average degree. (This is   the maximum degree in $G^*_S$.)
\item The maximum degree $\Delta_{\NN}$ of a vertex in $\NN$ is $s \cdot (\log 2s/\beta)$, and the average degree $\delta_{\NN}$ of a vertex in $\NN$ is at most
$2s/\beta$.
\item For every subset $S'$ of $\check S$, $|\Gamma(S')| \ge \beta \cdot |S'|$. (Hence the ordinary expansion is at least $\beta$.)
%\item For every subset $\NN'$ of $\NN$, $|\Gamma(\NN')| \ge (1/\beta) \cdot |\NN'|$. (Hence the inverse-expansion is at least $1/\beta$.)
\item For every subset $S'$ of $\check S$, $|\UN_{\check S}(S')| \le 2s   = (2 / \log 2s) \cdot |\NN|$.
%(Hence the wireless expansion is at most $2 / \log2s$.)
(Hence the wireless expansion, denoted $\beta_w$, satisfies $\beta_w \le \beta(2 / \log2s)$.)
\end{enumerate}
\end{lemma}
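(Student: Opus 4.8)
The plan is to \emph{dualize} the replication argument from the proof of Lemma~\ref{worstcase2}: there each vertex of $\NN$ was copied $\beta/\log 2s$ times in order to \emph{raise} the expansion above $\log 2s$, whereas here I would copy each vertex of $S$ in order to \emph{lower} the expansion down to the target value $\beta \le \log 2s$. Concretely, assume for simplicity that $k := \log 2s / \beta$ is an integer, start from the core graph $G_S = (S, \NN, E_S)$ of Lemma~\ref{worstcase1}, and create $k$ identical copies $z^{(1)},\ldots,z^{(k)}$ of every vertex $z \in S$ (i.e.\ of every leaf of $T_S$), leaving $\NN$ and the tree $T_S$ untouched. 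Each copy $z^{(j)}$ inherits the neighborhood $\hat \NN_z = \bigcup_{w \in A(z)} \NN_w$ of $z$; equivalently, each edge $(z,v) \in E_S$ is translated into the $k$ edges $(z^{(1)},v),\ldots,(z^{(k)},v)$. This yields $\check S := \{z^{(j)} \mid z \in S,\ 1 \le j \le k\}$ with $|\check S| = sk = s(\log 2s/\beta)$ and $|\NN| = s\log 2s$, establishing Property~1.

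The degree statements (Properties~2 and~3) are then immediate. Each copy has the same degree $2s-1$ as the leaf it copies, giving Property~2. Every vertex of $\NN$ has its degree multiplied by exactly $k$ (each incident edge is replaced by $k$ edges), so both the maximum and the average degree in $\NN$ scale by $k$; invoking the corresponding bounds of Lemma~\ref{worstcase1} gives $\Delta_\NN = k\cdot s = s(\log 2s/\beta)$ and $\delta_\NN \le k \cdot (2s/\log 2s) = 2s/\beta$, which is Property~3.

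For the ordinary expansion (Property~4), I would fix an arbitrary $S' \subseteq \check S$ and let $L' \subseteq S$ be the set of leaves having at least one copy in $S'$. Since all copies of a given leaf share the same neighborhood, $\Gamma(S') = \bigcup_{z \in L'} \hat \NN_z$ coincides with the neighborhood of $L'$ in the original core graph, so Property~4 of Lemma~\ref{worstcase1} gives $|\Gamma(S')| \ge \log 2s \cdot |L'|$. Because each leaf contributes at most $k$ copies, $|S'| \le k|L'|$, hence $|\Gamma(S')| \ge \log 2s \cdot |S'|/k = \beta \cdot |S'|$, as required.

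The step needing the most care is the wireless upper bound (Property~5), and I expect it to be the main obstacle, since one must verify that replicating $S$ can only \emph{shrink} the unique-neighborhood. Fixing $S' \subseteq \check S$, set $T := \{z \in S \mid \text{some copy of } z \text{ lies in } S'\}$. The key observation is that a vertex $v \in \NN_w$ is adjacent precisely to all copies of the descendant leaves of $w$, so its number of neighbors in $S'$ equals the total number of copies (over all leaves of $w$'s subtree) that lie in $S'$. If this number equals $1$, i.e.\ $v \in \UN_{\check S}(S')$, then exactly one descendant leaf of $w$ lies in $T$, so $v$ is a unique neighbor of $T$ in the original core graph. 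This yields the containment $\UN_{\check S}(S') \subseteq \UN_{S}(T)$, whence Property~5 of Lemma~\ref{worstcase1} gives $|\UN_{\check S}(S')| \le |\UN_{S}(T)| \le 2s = (2/\log 2s)\cdot|\NN|$, and thus $\beta_w \le \beta(2/\log 2s)$. Once this containment is in place everything reduces cleanly to Lemma~\ref{worstcase1}, and rounding $k$ to an integer affects the bounds by at most a constant factor.
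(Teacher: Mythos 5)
Your proposal is correct and follows essentially the same route as the paper: both construct $\check G_S$ by making $k=\log 2s/\beta$ copies of each vertex of $S$ in the core graph of Lemma \ref{worstcase1} and then transfer its four key properties. The paper states the transfer of Properties 4 and 5 without detail, whereas you supply the (correct) verifications via the leaf-set $L'$ and the containment $\UN_{\check S}(S')\subseteq \UN_{S}(T)$, but the underlying argument is the same.
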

\begin{proof}
We assume for simplicity that $k = \log 2s / \beta$ is an integer,
and modify the construction used to prove Lemma \ref{worstcase1} by creating $k$ copies $v_1, \ldots,v_k$ for each vertex $v$ in $S$.
The vertex set $\check S$ of $\check G_S$ is the union of all copies of all vertices in $S$,
and the edge set $\check E_{S}$ is obtained by translating each edge $(v,u)$ in the original graph $G_S$, where $v \in S$,
into the $k$ edges $(v_1,u),\ldots,(v_k,u)$ in $\check G_S$.
%^Denoting the \emph{expanded vertex set} by $S''_v$, the vertex set $\NN'$ is obtained as the union of all the expanded vertex sets, i.e., $\NN = \bigcup_{v \in T_S} \NN'_v$.
Other than this modification, the construction remains intact.
Note that $\NN$ remains unchanged, and the degree of vertices in $\check S$  is the same as the degree of vertices in $S$ in the original graph $G_S$ (both the maximum and average degree).
%Note that $S$ remains unchanged, and the degree of vertices in $\Gamma$ remains unchanged too.
On the other hand, we now have $|\check S| = s \cdot (\log 2s / \beta)$.
Moreover, the expansion decreases from at least $\log 2s$ to at least $\beta$,
%the inverse-expansion increases from at least $1/\log 2s$ to at least $1/\beta$,
and the degree of vertices in $\NN$ increases by a factor of $\log 2s/\beta$.
%Finally, we argue that the upper bound on the wireless expansion remains unchanged. To see this, note that
%for every subset $S'$ of $\check S$, $|\UN_{\check S}(S')|$ remains at most $2s = (2 / \log 2s) \cdot |\NN|$.
Finally, note that for every subset $S'$ of $\check S$, $|\UN_{\check S}(S')|$ remains at most $2s = (2 / \log 2s) \cdot |\NN|$,
%hence $|\UN_{S}(S')|$ is at most $2s \cdot (\beta / \log 2 s) = (2 / \log 2s) \cdot |\hat \NN|$,
thus the wireless expansion $\beta_w$ remains unchanged, satisfying $\beta_w \le \beta(2 / \log2s)$.
\QED
\end{proof}

We are now ready to complete to proof of Lemma \ref{coregen}.
\begin{proof}[Lemma \ref{coregen}]
Since $\beta^* \le \Delta^* / (2e)$, we may write $\Delta^* = 2s \cdot (\beta^* / \log 2s)$, for $s \ge e$.
Suppose first that $\beta^* > \log 2s$.
In this case we take $G^*_S$ to be the graph provided by Lemma \ref{worstcase2} for $\lceil s \rceil$ and $\beta^*=\beta$;
we assume for simplicity that $s$ is an integer, but this assumption has a negligible effect.
The maximum degree  in the graph is $(2s-1) \cdot (\beta^* / \log 2s)$,
which is bounded by $\Delta^* := 2s \cdot (\beta^* / \log 2s)$.
This in particular yields $\Delta^* \ge 2s$, and so $|S^*| = s \le \Delta^*/2$.
We also have $|\NN^*| = \beta^* \cdot |S^*|$.  % < (\Delta/2) \cdot \log 2S \le \Delta/2 \cdot \log \Delta$.
The second assertion follows immediately from Lemma \ref{worstcase2}(4).  % and \ref{worstcase2}(5), respectively.
It remains to prove the third assertion.
Lemma \ref{worstcase2}(5) implies that for every subset $S'$ of $S^*$,
$|\UN_{S^*}(S')| \le  2s \cdot (\beta^* / \log 2 s) = (2 / \log 2s) \cdot |\NN^*|$.
Observe that $$\min\{\Delta^*/ \beta^*, \Delta^*\cdot \beta^*\} ~=~ \Delta^*/ \beta^* ~=~ 2s / \log 2s ~\le~ 2s.$$
Hence $2 / \log 2s \le 2 / \log (\min\{\Delta^*/ \beta^*, \Delta^*\cdot \beta^*\})$, which implies that
\begin{eqnarray*}
|\UN_{S^*}(S')| &\le&  (2 / \log 2s) \cdot |\NN^*| \\&\le& (2 / \log (\min\{\Delta^*/ \beta^*, \Delta^*\cdot \beta^*\})) \cdot |\NN^*|~.
\end{eqnarray*}
We henceforth assume that  $\beta^* \le \log 2s$.
Since $\beta^* \ge (2e) / \Delta^*$, we may write $\Delta^* = 2s' \cdot (\log 2s'/\beta^*)$, for $s' \ge e/2$.
Next, we argue that $\beta^* \le \log 2s'$.
Since $\beta^* \le \log 2s$ and as $\Delta^*$ is equal to both $2s \cdot (\beta^* / \log 2s)$ and $2s' \cdot (\log 2s'/\beta)$,
it follows that
$$((2s')/(2s)) \log(2s') \log (2s) ~=~ (\beta^*)^2 ~\le~ \log^2 (2s).$$
%\end{equation}
Thus $(2s') \cdot \log (2s') \le (2s) \cdot \log (2s)$,
%\begin{equation} \label{basic1}
%
%\end{equation}
and so $s' \le s$.
Next, we prove that $(2s') / \log (2s') \le (2s) / \log (2s)$
%We claim that $(2s') / \log (2s') \le (2s) / \log (2s)$.
by taking logarithms for both hand sides and noting that the function $f(x) = x - \log x$ is monotone increasing for $x > \log e$
and that $s \ge s' \ge e/2$.
Rearranging, we get \\$(\beta^*)^2 = ((2s')/(2s)) \log(2s') \log (2s) \le \log^2 (2s')$, thus $\beta^* \le \log 2s'$.

In this case we take $G^*_S$ to be the graph provided by Lemma \ref{worstcase3} for $\lceil s' \rceil$ and $\beta^*=\beta$;
we again assume for simplicity that $s$ is an integer, but this assumption has a negligible effect.
The maximum degree in the graph is $\max\{2s'-1, s' \cdot (\log 2s'/\beta)\}$, which is bounded by $\Delta^* := 2s' \cdot (\log 2s'/\beta^*)$.
Note that
%\ge \Delta' \ge s' \cdot (\log 2s'/\beta) \ge 2s'-1$,
 $|S^*| = s' \cdot (\log 2s' / \beta^*) = \Delta^*/2$ and $|\NN^*| = s' \log 2s' = \beta^* \cdot |S^*|$.
%\le (\Delta/2) \cdot \log \Delta$.
%The second assertion follows immediately from Lemma \ref{worstcase2}(4).
The second assertion follows immediately from Lemma \ref{worstcase3}(4).   % and \ref{worstcase2}(5), respectively.
It remains to prove the third assertion.
Lemma \ref{worstcase3}(5) implies that for every subset $S'$ of $S^*$,
$|\UN_{S^*}(S')| \le 2s'   = (2 / \log 2s') \cdot |\NN^*|$.
Observe that $$\min\{\Delta^*/ \beta^*, \Delta^*\cdot \beta^*\} ~\le~ \Delta^* \cdot \beta^* ~=~  2s' \cdot \log 2s'.$$
Hence
\begin{eqnarray*}
2 / \log 2s' &=& 4 / \log ((2s')^2) ~\le~ 4/ \log (2s' \cdot \log 2s') \\&\le&
4 / \log(\min\{\Delta^*/ \beta^*, \Delta^*\cdot \beta^*\}),
\end{eqnarray*}
which implies that
\begin{eqnarray*}
|\UN_{S^*}(S')| &\le&  (2 / \log 2s') \cdot |\NN^*| \\&\le& (4 / \log (\min\{\Delta^*/ \beta^*, \Delta^*\cdot \beta^*\})) \cdot |\NN^*|. \inQED
\end{eqnarray*}
\end{proof}

%\APPENDLEMMACORGEN

%\vspace{-10pt}

\subsubsection{Worst-Case Expanders} \label{plug}

%In Section \ref{basicop} we show how to reduce the wireless expansion of an ordinary expander, while essentially preserving its ordinary expansion.
%In Section \ref{genop} we provide a symmetric modification, where the resulting expander contains a vertex set that can be uniquely covered
%only partially by any other vertex set.
%\subsubsection{Reducing the wireless expansion} \label{basicop}
Let $G$ be an arbitrary $(\alpha,\beta)$-expander on $n$ vertices with maximum degree $\Delta$, and let $0 < \epsilon < 1/2$ be a ``blow-up'' parameter.
That is, $\epsilon$ will determine the extent by which the parameters of interest
blow up due to the modification that we perform on the original graph $G$ to obtain poor wireless expansion.
There is a tradeoff between the wireless expansion and the other parameters:
The stronger our upper bound on the wireless expansion is, the larger the blow-up in the other parameters becomes.

For technical reasons, we require that $\Delta \cdot \beta \ge 1/(1-\epsilon^2)$.
We start by constructing the generalized core graph $G^*_S = (S^*, \NN^*, E^*_S)$ provided by Lemma \ref{coregen} for $\Delta^* = \epsilon \cdot \Delta$
and expansion $\beta^* = \beta/\epsilon$,
%By Lemma \ref{coregen},
thus yielding $|S^*| \le \Delta^*/2 = \epsilon(\Delta/2)$ and $|\NN^*| =  \beta^* \cdot |S^*| = (\beta /\epsilon) \cdot |S^*|$.
Our worst-case expander $\tilde G$ is obtained by plugging $G^*_S$ on top of $G$.
The vertices of $S^*$ are not part of the original vertex set of $G$, but are rather new vertices added to it.
The vertices of $\NN^*$ are chosen arbitrarily from $V(G)$.
%(To optimize, we may choose the ones of minimum degree.)
%This optimization does not help in   general, and will not change the asymptotic bounds in any case.)

\noindent
{\bf Remark.} If $G$ is a bipartite expander, expanding from the left side $L$ to the right side $R$,
and if we want $\tilde G$ to remain bipartite and to expand from $\tilde L$ to $\tilde R$, then $\tilde L$ will be defined as the union of $L$ and $S^*$,
and $\tilde R$ will be defined as the union of $R$ and a dummy vertex set of the same size as $S^*$, to guarantee that $|\tilde L| = |\tilde R|$.

%We start by adding an arbitrary set $\NN$ of $\epsilon(\beta \cdot \Delta/2)$ vertices to $G$.  %, with $s$ to be determined later (but it is close asymptotically to $\Delta$).
%Next, we choose an arbitrary set $S$ of $|\NN|/\beta = \epsilon(\Delta/2)$ vertices in $G$.

\ignore{
Finally, we add the edges of the generalized core graph $G^*_S$ to $G$,
where the vertex sets $S^*$ and $\NN^*$ of $G^*_S$ are determined by choosing a set of $|S^*|$ arbitrary vertices of $S$
and a set of $|\NN^*|$ arbitrary vertices of $\NN$, respectively.
%as they will serve as the vertex sets $S^*$ and $\NN^*$.
%and translate each edge
%Consequently, we may 1-to-1 mappings between the vertices of $S^*$ to the vertices of $S$ and between the vertices of $\
To justify this last step, note that the size of $S$ is at least as large as that of $S^*$.
Moreover, since the expansion of $G$ is $\beta$, we have $|\NN| \ge \beta \cdot |S|$, hence the size of $\NN$ is at least as large as that of $\NN^*$.
Consequently, determining the vertex sets $S^*$ and $\NN^*$ as subsets of $S$ and
$\NN$, respectively, is well-defined.
}
In what follows we analyze the properties of $\tilde G$.
Denoting the number of vertices in $\tilde G$ by $\tilde n$, we have
$n ~\le~ \tilde n ~\le~ n + 2|S^*| ~\le~ n + 2\epsilon(\Delta/2) ~\le~ (1+\epsilon) \cdot n.$
Write $\tilde \Delta = (1+\epsilon) \cdot \Delta$, and note that the maximum degree in $\tilde G$ is bounded by
$\Delta  + \Delta^* ~\le~ \Delta + \epsilon \cdot \Delta ~=~ \tilde \Delta.$

\begin{claim}\label{cl:ordinary}
$\tilde G$ is an ordinary $(\tilde \alpha, \tilde \beta)$-expander, where $\tilde \beta = (1-\epsilon)\cdot \beta, \tilde \alpha = (1-\epsilon) \cdot \alpha$.
\end{claim}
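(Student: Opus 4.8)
The plan is to verify the defining inequality of an $(\tilde\alpha,\tilde\beta)$-expander directly, reducing it to the expansion guarantees already available for $G$ and for the generalized core graph $G^*_S$ (Lemma \ref{coregen}). Fix an arbitrary $T\subseteq V(\tilde G)$ with $|T|\le\tilde\alpha\tilde n$ and write $T=T_V\cup T_S$, where $T_V=T\cap V(G)$ and $T_S=T\cap S^*$; these are disjoint since $S^*\cap V(G)=\emptyset$. The first (routine) step is a size bound: as $\tilde n\le(1+\epsilon)n$ and $\tilde\alpha=(1-\epsilon)\alpha$, we get $|T_V|\le|T|\le(1-\epsilon^2)\alpha n\le\alpha n$, so $T_V$ is admissible for the expansion of $G$. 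Hence $|\Gamma^{-}_G(T_V)|\ge\beta\,|T_V|$, and since $\Gamma^{-}_G(T_V)\subseteq V(G)\setminus T_V=V(G)\setminus T$, every such vertex is a genuine external neighbor of $T$ in $\tilde G$; this already yields $|\Gamma^{-}_{\tilde G}(T)|\ge\beta\,|T_V|$.

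Next I would feed in the core graph. By Lemma \ref{coregen}(2), $|\Gamma_{G^*_S}(T_S)|\ge\beta^*|T_S|=(\beta/\epsilon)\,|T_S|$, and these neighbors all lie in $\NN^*\subseteq V(G)$; those outside $T_V$ are again external neighbors of $T$. I would then split according to how large a fraction of $T$ is occupied by $T_S$. If $|T_S|\le\epsilon|T|$ then $|T_V|\ge(1-\epsilon)|T|$, and the bound above already gives $|\Gamma^{-}_{\tilde G}(T)|\ge\beta\,|T_V|\ge(1-\epsilon)\beta\,|T|=\tilde\beta\,|T|$, settling this case. If $|T_S|>\epsilon|T|$, I would instead lean on the core expansion, exploiting that $\beta^*=\beta/\epsilon$ exceeds $\beta$ by the factor $1/\epsilon$, which is what lets the core contribution carry the whole set even after discarding the (at most $|T_V|$) neighbors of $T_S$ that fall inside $T_V$. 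The hypotheses $|S^*|\le\Delta^*/2=\epsilon\Delta/2$ and $\Delta\beta\ge 1/(1-\epsilon^2)$ (equivalently $\tilde\Delta\tilde\beta\ge1$) would be used here to dispose of low-order terms.

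The step I expect to be the main obstacle is exactly this second regime. The two natural lower bounds, $|\Gamma^{-}_{\tilde G}(T)|\ge\beta|T_V|$ and $|\Gamma^{-}_{\tilde G}(T)|\ge|\Gamma_{G^*_S}(T_S)\setminus T_V|\ge\beta^*|T_S|-|T_V|$, interact badly: the two neighbor sets both live in $V(G)$ and may overlap, and some neighbors of $T_S$ can be absorbed into $T_V$ when $T_V\cap\NN^*$ is large. Taking only the maximum of the two bounds loses a constant factor and yields the claim merely for $\beta\ge(1-\epsilon)/\epsilon$. Closing the gap for all admissible $\beta$ is the heart of the matter, and I would attack it by exploiting the extra structure that the black-box bounds discard: $\Gamma^{-}_G(T_V)$ has size at least $\beta|T_V|$ and may spread across all of $V(G)$, whereas the core-neighbors of $T_S$ are confined to the set $\NN^*$, and, in addition, every vertex of $T_V\cap\NN^*$ sends core-edges back into $S^*$, producing external neighbors on the $S^*$ side that are automatically disjoint from everything counted inside $V(G)$. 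Balancing these contributions, using the $1/\epsilon$ gap between $\beta^*$ and $\beta$ to absorb the overlap, and invoking $|S^*|\le\epsilon\Delta/2$ together with $\Delta\beta\ge 1/(1-\epsilon^2)$ to kill low-order terms, should give $|\Gamma^{-}_{\tilde G}(T)|\ge(1-\epsilon)\beta|T|$; the surviving estimates are routine arithmetic in $\epsilon$.
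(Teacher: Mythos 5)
Your first case ($|T\cap S^*|\le\epsilon|T|$, falling back on the expansion of $G$ applied to $T\setminus S^*$) is exactly the paper's argument for that regime and is complete. The trouble is the complementary case. There the paper's proof is a one-liner: $|\Gamma^-(X)|\ge|\Gamma^-(X\cap S^*)|\ge(\beta/\epsilon)\cdot|X\cap S^*|\ge\beta\cdot|X|$. In other words, it counts every core-neighbor of $X\cap S^*$ as an external neighbor of all of $X$ --- precisely the step you (rightly) decline to take, since those neighbors live in $\NN^*\subseteq V(G)$ and nothing prevents $X$ from containing a large part of $\NN^*$. So you have correctly isolated the delicate point that the paper's own proof passes over.

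But your proposal does not resolve it: in the regime $|T_S|>\epsilon|T|$ you only list ingredients (``balancing these contributions \dots should give'') and never derive the inequality, and this is a genuine gap rather than routine arithmetic. Your two black-box bounds give at best $\max\{\beta|T_V|,\ (\beta/\epsilon)|T_S|-|T_V|\}$, which, as you note, fails for small $\beta$; the extra sources you invoke do not obviously rescue it. For instance, when $T\supseteq S^*$ there are no external neighbors on the $S^*$ side at all, and taking $\beta=1$, $|T_S|=|S^*|=1.5\epsilon|T|$ and $T_V\subseteq\NN^*$ (with $\epsilon$ small), nothing in the hypotheses prevents $\Gamma^-_G(T_V)$ from containing all of $\NN^*\setminus T_V$ and having size exactly $\beta|T_V|=(1-1.5\epsilon)|T|$, which is strictly below $(1-\epsilon)\beta|T|$. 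To close the case one therefore needs either a genuinely new idea, a judicious (rather than arbitrary) placement of $\NN^*$ inside $V(G)$, or a weaker constant in place of $1-\epsilon$ --- the last of which would still suffice for Theorem \ref{thm:ub} and Corollary \ref{raman}, since only $\tilde\beta=\Theta(\beta)$ is ever used downstream. As written, the critical case is open in your argument.
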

\begin{proof}
Since $\tilde n < (1+\epsilon)\cdot n$
and as $\tilde \alpha = (1-\epsilon) \cdot \alpha$, it follows that
$\tilde \alpha \cdot \tilde n ~\le~ (1-\epsilon)\alpha \cdot (1+\epsilon) \cdot n = (1-\epsilon^2) \alpha \cdot n <  \alpha \cdot n.$
Consider an arbitrary set $X$ of at most $\tilde \alpha \cdot \tilde n  \le \alpha \cdot n$ vertices from $\tilde G$.
By Lemma \ref{coregen}(2), the expansion in $G^*_S$ is at least $\beta^* = \beta/\epsilon$,
hence $|\Gamma^-(X \cap S^*)| \ge (\beta/\epsilon) \cdot |X \cap S^*|$.
%, and let $S' = $.
If
%at least an $\epsilon$-fraction of the vertices of $X$ belong to $S^*$, i.e.,
$|X \cap S^*| \ge \epsilon \cdot |X|$,
then
%Since the expansion of the core graph $G^*_S$ is
 we have
%yields $|\Gamma(S')| \ge (\beta / \epsilon) \cdot |S'|$, for every subset $S'$ of $S^*$.
%we have
$|\Gamma^-(X)| ~\ge~ |\Gamma^-(X \cap S^*)| ~\ge~
(\beta/\epsilon) \cdot |X \cap S^*| ~\ge~ (\beta/\epsilon) \cdot (\epsilon \cdot |X|) ~=~ \beta \cdot |X| ~>~ \tilde \beta \cdot |X|.$
Otherwise, $|X \setminus S^*| \ge (1-\epsilon) \cdot |X|$, and as the expansion in $G$ is at least $\beta$, we have
$|\Gamma^-(X)| ~\ge~ |\Gamma^-(X \setminus S^*)| ~\ge~ \beta \cdot |X \setminus S^*|
~\ge~ \beta \cdot (1-\epsilon) \cdot |X| ~=~ \tilde \beta \cdot |X|. $
\QED
\end{proof}
%}%\APPENDORD

Recall that  $\Delta \cdot \beta \ge 1/(1-\epsilon^2)$, and note that
$\tilde \Delta \cdot \tilde \beta = (1+\epsilon)\Delta \cdot (1-\epsilon)\beta \ge 1$.
We also have that $\tilde \Delta / \tilde \beta > \Delta / \beta \ge 1$. Hence  the term $\log (\min\{\tilde \Delta/ \tilde \beta, \tilde \Delta \cdot \tilde \beta\})$ is non-negative,
and the upper bound $O(\tilde \beta / (\epsilon^3 \cdot \log (\min\{\tilde \Delta/ \tilde \beta, \tilde \Delta \cdot \tilde \beta\})))$ in the following claim is well-defined.
%In Appendix \ref{app:helper}, we show:

\begin{claim}\label{cl:helperwire}
The wireless expansion $\tilde \beta_w$ of $\tilde G$ satisfies
$\tilde \beta_w = O(\tilde \beta / (\epsilon^3 \cdot \log (\min\{\tilde \Delta/ \tilde \beta, \tilde \Delta \cdot \tilde \beta\}))).$
\end{claim}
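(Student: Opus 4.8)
The plan is to upper-bound $\tilde\beta_w$ by exhibiting a single ``bad'' admissible set and showing that no subset of it has a large excluding unique-neighborhood. The natural choice is $S := S^*$, the left side of the plugged-in generalized core graph. This is admissible for the wireless-expansion definition as long as $|S^*| \le \tilde\alpha\tilde n$; since $|S^*| \le \epsilon\Delta/2$ and $\tilde\alpha\tilde n = (1-\epsilon)\alpha\tilde n \ge (1-\epsilon)\alpha n$, this holds whenever $n$ is large enough relative to $\Delta$, which we assume. The crucial structural point is that the only edges of $\tilde G$ incident to $S^*$ are the core-graph edges $E^*_S$, because the vertices of $S^*$ are fresh (disjoint from $V(G)$) while $E \subseteq V \times V$. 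Hence $\Gamma_{\tilde G}(S^*) = \NN^*$, every vertex of $V \setminus \NN^*$ has no neighbor in $S^*$ whatsoever, and for each $S' \subseteq S^*$ the set $\UN_{S^*}(S')$ is exactly the same whether it is computed in $\tilde G$ or in the core graph $G^*_S$ (and in particular is contained in $\NN^*$).

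Therefore Lemma \ref{coregen}(3) applies verbatim and gives, for every $S' \subseteq S^*$, the bound $|\UN_{S^*}(S')| \le (4/\log M)\cdot|\NN^*|$, where $M := \min\{\Delta^*/\beta^*,\ \Delta^*\cdot\beta^*\}$. Dividing by $|S^*|$ and using $|\NN^*| = \beta^*|S^*|$ together with $\beta^* = \beta/\epsilon$, I obtain
$$\tilde\beta_w \;\le\; \max_{S' \subseteq S^*} \frac{|\UN_{S^*}(S')|}{|S^*|} \;\le\; \frac{4\beta^*}{\log M} \;=\; \frac{4\beta}{\epsilon\,\log M}.$$

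It remains to rewrite $M$ and $\beta$ through the quantities of the claim. Substituting $\Delta^* = \epsilon\Delta$ and $\beta^* = \beta/\epsilon$ gives $M = \min\{\epsilon^2\Delta/\beta,\ \Delta\beta\}$, whereas $\tilde M := \min\{\tilde\Delta/\tilde\beta,\ \tilde\Delta\tilde\beta\} = \min\{\tfrac{1+\epsilon}{1-\epsilon}\tfrac{\Delta}{\beta},\ (1-\epsilon^2)\Delta\beta\}$, and $\tilde\beta = (1-\epsilon)\beta = \Theta(\beta)$. The heart of the argument is the logarithmic comparison $\log M = \Omega(\epsilon^2\log\tilde M)$, which is precisely what upgrades the $1/\epsilon$ factor above to the claimed $1/\epsilon^3$. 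I would prove it by cases on which term attains $M$. If $M = \Delta\beta$, then $\tilde M \le (1-\epsilon^2)\Delta\beta \le M$, so $\log M \ge \log\tilde M$ directly (both logs being positive, since the applicability conditions of Lemma \ref{coregen}, namely $\Delta^*/\beta^* \ge 2e$ and $\Delta^*\beta^* \ge 2e$, force $M,\tilde M > 1$). The delicate case is $M = \epsilon^2\Delta/\beta$: here $\tilde M \le \tfrac{1+\epsilon}{1-\epsilon}\cdot M/\epsilon^2$, whence $\log\tilde M \le \log M + 2\log(1/\epsilon) + O(1)$, and combining this with $M = \epsilon^2\Delta/\beta \ge 2e$ (equivalently $\log(\Delta/\beta) \ge 2\log(1/\epsilon) + \Omega(1)$) yields $\log M = \Omega(\epsilon^2\log\tilde M)$ after an elementary estimate in the parameter $\log(\Delta/\beta)$. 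Feeding this back gives $\tilde\beta_w \le \frac{4\beta}{\epsilon\log M} = O\!\left(\frac{\tilde\beta}{\epsilon^3\log\tilde M}\right)$, as claimed.

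I expect this last logarithmic comparison to be the main obstacle: the structural identification of $\UN_{S^*}(S')$ across the two graphs and the substitution of parameters are routine, but establishing $\log M = \Omega(\epsilon^2\log\tilde M)$ requires carefully exploiting both the $\epsilon$-slack in the exponent and the $2e$ lower bounds guaranteed by the generalized core construction, and it is here that the somewhat unusual $\epsilon^3$ in the statement originates.
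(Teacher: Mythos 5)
Your proposal follows essentially the same route as the paper: restrict attention to the set $S^*$, observe that every edge of $\tilde G$ incident to $S^*$ lies in $E^*_S$ so that $\UN_{S^*}(S')$ is the same in $\tilde G$ as in $G^*_S$, invoke Lemma \ref{coregen}(3), and then translate $\min\{\Delta^*/\beta^*,\Delta^*\beta^*\}$ into $\min\{\tilde\Delta/\tilde\beta,\tilde\Delta\tilde\beta\}$ at the cost of an extra $\epsilon^2$ in front of the logarithm. Your case analysis for the logarithmic comparison is sound (and in the case $M=\epsilon^2\Delta/\beta$ it actually yields the stronger bound $\log M=\Omega(\log\tilde M/\log(1/\epsilon))$, since $\log M\ge\log(2e)$ while $\log\tilde M\le\log M+2\log(1/\epsilon)+O(1)$).

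The one genuine omission is the degenerate parameter regime. You invoke ``the applicability conditions of Lemma \ref{coregen}, namely $\Delta^*/\beta^*\ge 2e$ and $\Delta^*\beta^*\ge 2e$,'' as if they were guaranteed, but the only hypothesis imposed in this section is $\Delta\cdot\beta\ge 1/(1-\epsilon^2)$, which does not imply $\Delta^*\beta^*=\Delta\beta\ge 2e$, let alone $\Delta^*/\beta^*=\epsilon^2\Delta/\beta\ge 2e$. The paper dispatches this regime first: since $\tilde\beta_w\le\beta\le 2\tilde\beta$ trivially, the claim holds vacuously whenever $\epsilon^3\log(\min\{\tilde\Delta/\tilde\beta,\tilde\Delta\tilde\beta\})<2$, and the complementary assumption forces $\tilde\Delta/\tilde\beta,\ \tilde\Delta\tilde\beta\ge 2^{2/\epsilon^3}$, from which the $2e$ conditions (and hence the legitimacy of the core-graph construction) follow. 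Adding that two-line reduction closes the gap; everything else in your argument, including the admissibility remark $|S^*|\le\tilde\alpha\tilde n$ (which the paper leaves implicit), is fine.
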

\begin{proof}
%\Proof
%By definition, the wireless expansion of $\tilde G$ is upper bounded by the wireless expansion of
%any subgraph.
Note that $\tilde \beta_w$ is trivially upper bounded by $\beta$,
thus the claim holds vacuously whenever
$\epsilon^3 \cdot \log (\min\{\tilde \Delta/ \tilde \beta, \tilde \Delta \cdot \tilde \beta\}) < 2$.
We may henceforth assume that
%$1 / (\epsilon^3 \cdot \log (\min\{\tilde \Delta/ \tilde \beta, \tilde \Delta \cdot \tilde \beta\})) = O(1)$, %or equivalently,
$\epsilon^3 \cdot \log (\min\{\tilde \Delta/ \tilde \beta, \tilde \Delta \cdot \tilde \beta\}) ~\ge~ 2,$
which implies that both $\tilde \Delta/ \tilde \beta$ and $\tilde \Delta \cdot \tilde \beta$ are at least $2^{2/\epsilon^3}$.
Since $\epsilon < 1/2$,
it follows that
$$\Delta^* \cdot \beta^* ~=~ \Delta \cdot \beta ~\ge~ (\tilde \Delta/(1+\epsilon))  \cdot (\tilde \beta/(1-\epsilon))
~\ge~ \tilde \Delta \cdot \tilde \beta ~\ge~ 2^{2/\epsilon^3} ~\ge~ 2e$$
and
\begin{eqnarray*}
\Delta^* / \beta^* &=& \epsilon^2 (\Delta / \beta) ~\ge~ \epsilon^2 (\tilde \Delta/(1+\epsilon)) / (\tilde \beta/(1-\epsilon))
\\ &=& \epsilon^2((1-\epsilon)/(1+\epsilon)) \cdot (\tilde \Delta / \tilde \beta) \\&\ge& \epsilon^2((1-\epsilon)/(1+\epsilon)) \cdot 2^{2/\epsilon^3} ~\ge~ 2e.
\end{eqnarray*}
In particular, we have $(2e) / \Delta^* \le \beta^* \le \Delta^* / (2e)$, as required in Lemma \ref{coregen}.
Since all edges adjacent to the vertices of $S^*$ belong to the core graph $G^*_S$ with parameters $\Delta^*$ and $\beta^*$,
Lemma \ref{coregen}(3) implies that for every subset $S'$ of $S^*$,
\begin{eqnarray*}
|\UN_{S^*}(S')| &\le& (4 / \log (\min\{\Delta^*/ \beta^*, \Delta^* \cdot \beta^*\})) \cdot |\NN^*|
\\ &\le& (4(1+\epsilon)/(\epsilon^2(1-\epsilon) \cdot \log (\min\{\tilde \Delta/ \tilde \beta, \tilde \Delta \cdot \tilde \beta\}))) \cdot |\NN^*|.
\\ &\le& (12/(\epsilon^2 \cdot \log (\min\{\tilde \Delta/ \tilde \beta, \tilde \Delta \cdot \tilde \beta\}))) \cdot |\NN^*|.
\\ &=& (12/(\epsilon^3 \cdot \log (\min\{\tilde \Delta/ \tilde \beta, \tilde \Delta \cdot \tilde \beta\}))) \cdot \beta \cdot |S^*|.
\\ &\le& (24/(\epsilon^3 \cdot \log (\min\{\tilde \Delta/ \tilde \beta, \tilde \Delta \cdot \tilde \beta\}))) \cdot \tilde \beta \cdot |S^*|.
\end{eqnarray*}
(It is easily verified that the third and last inequalities hold for $\epsilon < 1/2$.)
The bottom-line constant 24 can be improved; we did not try to optimize it.
\QED
\end{proof}
%(Hence the wireless expansion is at most $2 / \log2s$.)
%\APPENDCLHELPW

We derive the following corollary, which implies the existence of expanders with worst possible wireless expansion.
The bound on the wireless expansion is tight in the entire range of parameters, disregarding constants and dependencies on $\epsilon$.

\begin{corollary} \label{raman}
For any $n,\Delta,\beta$ and $0 < \epsilon <1/2$ such that $\Delta \cdot \beta \ge 1/(1-\epsilon^2)$,
if there exists an ordinary $(\alpha,\beta)$-expander $G$ on $n$ vertices with maximum degree $\Delta$,
%then for any $0 < \epsilon < 1$,
then there exists an $(\tilde \alpha,\tilde \beta)$-expander $\tilde G$ on $\tilde n$ vertices with maximum degree $\tilde \Delta$
and wireless expansion $\tilde \beta_w$, where:
%\begin{enumerate}
(1) $\Delta \le \tilde \Delta \le (1+\epsilon) \cdot \Delta$;
(2) $n \le \tilde n \le (1+\epsilon) \cdot n$;
(3) $\tilde \beta = (1-\epsilon)\cdot \beta$;
(4) $\tilde \alpha = (1-\epsilon)\cdot \alpha$; and
(5) $\tilde \beta_w = O(\tilde \beta / (\epsilon^3 \cdot \log (\min\{\tilde \Delta/ \tilde \beta, \tilde \Delta \cdot \tilde \beta\})))$.
\end{corollary}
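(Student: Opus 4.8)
The plan is to assemble the corollary directly from the explicit construction together with Claims \ref{cl:ordinary} and \ref{cl:helperwire}, since all of the substantive work has already been done. First I would fix the construction exactly as described above: given the $(\alpha,\beta)$-expander $G$ on $n$ vertices with maximum degree $\Delta$, I would invoke Lemma \ref{coregen} with parameters $\Delta^* = \epsilon \cdot \Delta$ and $\beta^* = \beta/\epsilon$ to obtain the generalized core graph $G^*_S = (S^*, \NN^*, E^*_S)$, and then form $\tilde G$ by adding the fresh vertex set $S^*$ together with the edge set $E^*_S$, where $\NN^*$ is identified with an arbitrary subset of $V(G)$. This is well-defined precisely because $|S^*| \le \Delta^*/2$ and $|\NN^*| = \beta^* \cdot |S^*| \le \beta \cdot |S|$ are small enough to embed.

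Next I would establish properties (1) and (2) by a direct count. Since $|S^*| \le \Delta^*/2 = \epsilon(\Delta/2)$, the only new vertices are those of $S^*$ (plus, in the bipartite case, an equal-size dummy set on the right), so $n \le \tilde n \le n + 2|S^*| \le (1+\epsilon)n$. For the degree bound, every edge incident to a vertex of $S^*$ lies in $G^*_S$, whose maximum degree is $\Delta^*$, so the maximum degree of $\tilde G$ is at most $\Delta + \Delta^* = (1+\epsilon)\Delta = \tilde \Delta$; the lower bound $\tilde \Delta \ge \Delta$ is immediate because $\tilde G$ contains $G$ as a subgraph and adding edges only increases degrees.

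Properties (3) and (4) are exactly the content of Claim \ref{cl:ordinary}, which already shows that $\tilde G$ is an ordinary $(\tilde \alpha, \tilde \beta)$-expander with $\tilde \beta = (1-\epsilon)\beta$ and $\tilde \alpha = (1-\epsilon)\alpha$, so I would simply cite it. Property (5) is exactly Claim \ref{cl:helperwire}. The one point worth flagging explicitly is the role of the hypothesis $\Delta \cdot \beta \ge 1/(1-\epsilon^2)$: this is what guarantees $\tilde \Delta \cdot \tilde \beta \ge 1$, so that $\log(\min\{\tilde \Delta/\tilde \beta, \tilde \Delta \cdot \tilde \beta\})$ is nonnegative and the stated bound is well-defined; the same hypothesis, together with $\epsilon < 1/2$, is what was used inside the proof of Claim \ref{cl:helperwire} to verify the precondition $(2e)/\Delta^* \le \beta^* \le \Delta^*/(2e)$ of Lemma \ref{coregen}.

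Because all the genuine difficulty resides in Lemma \ref{coregen} and the two claims, this corollary is an assembly statement rather than a fresh argument, and I do not expect a real obstacle. The only thing requiring care is the bookkeeping: ensuring that the substitutions $\Delta^* = \epsilon \cdot \Delta$ and $\beta^* = \beta/\epsilon$ are used consistently across the size bounds and both claims, and that each blow-up factor $(1\pm\epsilon)$ is tracked in the correct direction so that the final parameter relations match the five claimed items verbatim.
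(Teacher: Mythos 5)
Your proposal is correct and follows essentially the same route as the paper: the paper's own proof of Corollary \ref{raman} is precisely the assembly of the construction in Section \ref{plug} (plugging the generalized core graph with $\Delta^* = \epsilon\Delta$, $\beta^* = \beta/\epsilon$ onto $G$), the vertex/degree counts, and Claims \ref{cl:ordinary} and \ref{cl:helperwire}. Your remarks on the role of the hypothesis $\Delta\cdot\beta \ge 1/(1-\epsilon^2)$ match the paper's discussion preceding Claim \ref{cl:helperwire}.
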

One may use Corollary \ref{raman} in conjunction with known constructions of explicit expanders (such as Ramanujan graphs), which achieve near-optimal expansion for any degree parameter.
Taking $\epsilon$ to be a sufficiently small constant thus completes the proof of Theorem \ref{thm:ub}.

\section{A tight lower bound on the broadcast time in radio networks} \label{alt}
In this section we provide a simple proof for obtaining a tight lower bound of $\Omega(D \log(n/D))$ on the broadcast time in radio networks,
which holds both in expectation and with high probability.

Consider our core bipartite graph $G_S = (S, \NN, E_S)$ from Lemma \ref{worstcase1}, with sides $S$ and $\NN$, where $s = |S|$ and $|\NN| = s \log 2s$.
Suppose that we connect an additional vertex $rt$ to all vertices of $S$ and initiate a (radio) broadcast at $rt$ in the resulting graph.
By Lemma \ref{worstcase1}(5),
%it's impossible
one cannot uniquely cover more than $2s$ vertices (i.e., a $(2 / (\log 2s))$-fraction) of $\NN$ using any subset $S' \subseteq S$.
It follows that at any round after the first, the broadcast may reach at most $2s$ new vertices of $\NN$, which yields the following corollary.
\begin{corollary} \label{detcor}
The number of rounds needed for the broadcast to reach a $(2i / (\log 2s))$-fraction of $\NN$ is at least $1+i$, for any $0 \le i \le ((\log 2s) / 2)$.
%Thus the number of rounds needed for the broadcast to reach all vertices in $\NN$ is at least $1 + ((\log 2s) / 2)$.
\end{corollary}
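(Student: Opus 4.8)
The plan is to track how many vertices of $\NN$ can become informed in each round, combining the bipartite structure of $G_S$ with the unique-neighbor bound of Lemma~\ref{worstcase1}(5). First I would analyze the initial round. Since $rt$ is the only vertex holding the message, it is the only transmitter, and as its neighborhood is exactly $S$, every vertex of $S$ hears the message while no vertex of $\NN$ does (the neighbors of $rt$ are disjoint from $\NN$). Hence after the first round exactly the set $\{rt\}\cup S$ is informed and $0$ vertices of $\NN$ are informed. This establishes the base case of the counting and, importantly, pins down that the message first reaches $\NN$ only in round $2$ or later.

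Next I would establish the key per-round bound. Fix any later round and let $S'\subseteq S$ be the set of transmitters lying in $S$. Because $G_S$ is bipartite, $\NN$ is an independent set, and $rt$ has no neighbor in $\NN$; thus the only transmissions a vertex $v\in\NN$ can possibly hear come from its neighbors in $S$. By the radio model, $v$ hears the message this round iff exactly one of its neighbors transmits, i.e.\ iff $v$ has exactly one neighbor in $S'$. Therefore the set of $\NN$-vertices that hear the message in this round is precisely the $S$-excluding unique-neighborhood of $S'$, whose size is at most $2s$ by Lemma~\ref{worstcase1}(5). Consequently, at most $2s$ new vertices of $\NN$ can become informed in any single round after the first.

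From here the counting is immediate: after $1+t$ rounds the total number of informed vertices of $\NN$ is at most $t\cdot 2s$. Recalling that $|\NN| = s\log 2s$, a $(2i/\log 2s)$-fraction of $\NN$ equals exactly $2is$ vertices (and the hypothesis $0\le i\le (\log 2s)/2$ guarantees $2is\le|\NN|$, so this fraction is legitimate). To reach $2is$ informed vertices of $\NN$ we therefore need $t\cdot 2s \ge 2is$, i.e.\ $t\ge i$, so the round index $1+t$ is at least $1+i$, as claimed.

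The argument is short, and the only genuinely substantive ingredient is the unique-neighbor upper bound of Lemma~\ref{worstcase1}(5). The single point one must be careful about is precisely \emph{why} transmissions by already-informed vertices of $\NN$, or by $rt$, cannot help inform further $\NN$-vertices; this is exactly the bipartiteness observation above, since no $\NN$--$\NN$ edges exist and $rt\notin\Gamma(v)$ for $v\in\NN$. Thus the set of $\NN$-vertices hearing the message each round is governed solely by the transmitting subset of $S$, and collisions are automatically accounted for (a vertex with two or more transmitting neighbors hears nothing, so only the unique-neighborhood is counted). No hard estimate is required, and I do not anticipate any real obstacle beyond stating this structural observation cleanly.
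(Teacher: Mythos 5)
Your proof is correct and follows the same route as the paper: the paper also derives the corollary directly from Lemma \ref{worstcase1}(5), noting that after the first round each subsequent round can inform at most $2s$ new vertices of $\NN$ since the newly informed vertices form the unique-neighborhood of the transmitting subset of $S$. Your write-up is just a more careful version of the paper's one-paragraph argument, making explicit the bipartiteness observation that the paper leaves implicit.
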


%Corollary \ref{detcor} implies that the number of rounds required to reach all vertices in $\NN$ is at least $1 + ((\log 2s) / 2)$.
%This bound is deterministic. Next, we
Next, we construct a graph $G$ of diameter $\Theta(D)$, for an arbitrary parameter $D = \Omega(\log n)$,
in which the number of rounds needed to complete a broadcast is $\Omega(D \log(n/D))$.
%(We may focus on diameter at least Omega(log n), as there is a lower bound of Omega(log^2 n) by David and others that applies even to radius 2 networks.)

The core graph $G_S$ has $|S| + |\NN| = s(1 + \log 2s) = s (\log 4s)$ vertices.
We take $D/2$ copies of this graph, denoted by $G^1_S, G^2_S, \ldots, G^{D/2}_S$, each containing roughly $n / D$ vertices.
Thus we take $s$ so that $n / D \approx s (\log 4s)$, and so $\log s = \Theta(\log (n/D))$.
Denote the sides of $G^i_S$ by $S^i$ and $\NN^i$.
We connect the root $rt = rt^0$ to all vertices of $S^1$,
and for each   $1 \le i \le D/2$, we randomly sample a vertex from $N^i$, denoted by $rt^i$, and connect it (unless $i = D/2$) to all vertices of $S^{i+1}$.
%Also, let $rt^{D/2}$ be
%And in general, we pick at random a vertex of N^i, denoted by rt^i, and connect it to all vertices of S^{i+1}.
This completes the construction of the graph $G$. It is easy to verify that the diameter of $G$ is $\Theta(D)$, and to be more accurate, the diameter is $D + 2$.
In what follows we assume that none of the processors associated with the vertices of the graph initially have any topological information on the graph (except for its size
and diameter). This rather standard assumption was also required in the proof of Kushilevitz and Mansour \cite{kushilevitz1998omega}.

Consider a broadcast initiated at $rt$. We make the following immediate observation.
\begin{observation} \label{imm}
The message must reach $rt^{i-1}$ before reaching $rt^i$, for $1 \le i \le D/2$.
\end{observation}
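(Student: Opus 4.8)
The plan is to derive the statement entirely from the topology of $G$: I would show that $rt^{i-1}$ separates the source $rt^0$ from $rt^i$ in the sense that every walk from $rt^0$ to $rt^i$ must pass through $rt^{i-1}$, and then invoke the basic fact that in a radio broadcast a message can only propagate from a vertex to an adjacent one. Combining a \emph{structural} separation with the \emph{temporal} ``travels along edges'' principle is exactly what converts a cut property into the claimed ordering of reception times.

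First I would isolate the relevant cut. Fix $i$ with $1 \le i \le D/2$ and split the vertices of $G$ into an \emph{early} set $A = \{rt^0\} \cup \bigcup_{j < i}(S^j \cup \NN^j)$ and a \emph{late} set $B = \bigcup_{j \ge i}(S^j \cup \NN^j)$, so that $rt^0 \in A$ while $rt^i \in \NN^i \subseteq B$. Every edge of $G$ is either internal to a single copy $G^j_S$ (and hence internal to $A$ or to $B$) or else one of the connector edges joining $rt^j$ to $S^{j+1}$; among the latter, the only ones crossing the partition are those joining $rt^{i-1}$ to $S^i$. Thus $rt^{i-1}$ is the unique vertex of $A$ incident to $B$, so for $i \ge 2$ its deletion leaves $rt^0$ and $rt^i$ in distinct connected components, and every walk from $rt^0$ to $rt^i$ must visit $rt^{i-1}$ (for $i = 1$ we have $rt^{i-1} = rt^0$ and the conclusion is immediate, since every such walk starts there).

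It remains to apply the propagation principle: a vertex first hears the message at some round $t$ only because a neighbor of it held the message at round $t-1$ and transmitted it without collision at round $t$. Tracing this chain of successful receptions back to the source yields a walk $rt^0 = u_0, u_1, \ldots, u_k = rt^i$ whose reception rounds strictly increase along the walk. By the separation established above this walk contains $rt^{i-1}$ as some intermediate vertex $u_j$ with $j < k$, and $u_j$ already held the message at a round strictly earlier than the round at which $rt^i$ received it, which is exactly the claim. I do not expect a genuine obstacle here, as this is essentially a cut-vertex observation; the only point needing care is articulating that the sequence of successful receptions forms a bona fide walk in $G$, since that is precisely what lets the purely structural separation be promoted into the temporal statement.
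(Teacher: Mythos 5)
Your proof is correct, and since the paper offers no proof at all (it labels this an ``immediate observation''), your cut-vertex argument -- $rt^{i-1}$ is the unique vertex separating $\{rt^0\}\cup\bigcup_{j<i}(S^j\cup N^j)$ from $\bigcup_{j\ge i}(S^j\cup N^j)$, combined with tracing the chain of successful receptions back to the source as a walk with strictly increasing first-reception times -- is exactly the reasoning the authors intend the reader to supply. No gaps; the one point you flag as needing care (that successive receptions form a genuine walk in $G$) is indeed the only nontrivial step, and you handle it properly.
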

Denote by $R^i$ the random variable for the number of rounds needed for the message to be sent from $rt^{i-1}$ to $rt^i$, for each $i$,
and let $R$ be the random variable for the number of rounds needed to send the message from $rt$ to $rt^{D/2}$.
We thus have $R =  R^1 + R^2 + \ldots + R^{D/2}$.

By Corollary \ref{detcor}, the number of rounds needed for the broadcast message to reach half of the vertices of $N^1$ (from $rt = rt^0$) is at least
$((\log 2s) / 4) + 1 = \Theta(\log (n/D))$.
Since $rt^1$ was sampled randomly from all vertices $N^1$ and as none of the processors have any topological information on the graph,
$rt^1$ received this message within this many rounds with probability at most $1/2$,
hence $R^1 = \Omega(\log (n/D))$ with constant probability.
By Observation \ref{imm}, the only way for the message to reach any vertex of $S^2$, and later $rt^2$, is via $rt^1$, hence we can repeat this argument, and carry it out inductively.
Since the $D/2$ variables $R^1,R^2,\ldots,R^{D/2}$ are independently and identically distributed, and as $D = \Omega(\log n)$ (where the constant hiding in the $\Omega$-notation is sufficiently large), a   Chernoff bound
%in a straightforward way to
implies that $\Prob(R = \Omega(D \log(n/D))) \ge 1-n^{-c}$, where $c$ is a constant as big as needed.
% derive the required bound with high probability.
For the expectation bound,
note that $\Exp(R^i) > (\log 2s) / 4 = \Omega(\log (n/D))$ by Corollary \ref{detcor}, for each $i$,
and by linearity of expectation we obtain $\Exp(R) =  \Exp(R^1) + \Exp(R^2) + \ldots + \Exp(R^{D/2}) = \Omega(D \log(n/D))$.
(The assumption that $D = \Omega(\log n)$ is used for deriving the high probability bound but not the expectation bound.)

\subsubsection*{Acknowledgments.}
We are grateful to Mohsen Ghaffari for the useful discussions on the probabilistic arguments of Section \ref{sec:positive}.

\ignore{
\subsubsection{Creating a vertex set that cannot be uniquely covered} \label{genop}

Let $G$ be an arbitrary $(\alpha,\beta)$-expander on $n$ vertices with maximum degree $\Delta$, and let $0 < \epsilon < 1$ be an arbitrary ``blow-up parameter''.
That is, $\epsilon$ will be used to bound the extent in which the parameters of interest to us
%the original graph
blow up due to the modification that we perform on the original graph $G$ to obtain poor wireless expansion.

We start by constructing the generalized core graph $G^*_S = (S^*, \NN^*, E^*_S)$ guaranteed by Lemma \ref{coregen} for $\tilde \Delta = \epsilon \cdot \Delta$
and expansion $\epsilon/\beta$ (rather than $\epsilon \cdot \beta$),
%By Lemma \ref{coregen},
hence the sizes of the vertex sets $S^*$ and $\NN^*$ of $G^*_S$ satisfy $|S^*| < \epsilon(\Delta/2)$ and $|\NN^*| =  (\epsilon |S^*|)/\beta$, respectively.

Our worst-case expander $\tilde G$ is obtained by plugging $G^*_S$ on top of $G$.
The vertices of $\NN^*$ are not part of the original vertex set of $G$, but are rather new vertices added to it.
The vertices of $S^*$ are chosen arbitrarily from the vertex set of $G$. (To optimize, we may choose the ones of minimum degree.
This optimization will have no effect in the general case, and will not change the asymptotic bounds anyway.)

If $G$ is a bipartite expander, where the expansion is defined from the left side $L$ towards the right side $R$, the set $\NN$ will be added to $R$,
and we will also need to add another dummy set of the same size to $L$, to guarantee that $|L| = |R|$.

%We start by adding an arbitrary set $\NN$ of $\epsilon(\beta \cdot \Delta/2)$ vertices to $G$.  %, with $s$ to be determined later (but it is close asymptotically to $\Delta$).
%Next, we choose an arbitrary set $S$ of $|\NN|/\beta = \epsilon(\Delta/2)$ vertices in $G$.

%%%\ignore{
Finally, we add the edges of the generalized core graph $G^*_S$ to $G$,
where the vertex sets $S^*$ and $\NN^*$ of $G^*_S$ are determined by choosing a set of $|S^*|$ arbitrary vertices of $S$
and a set of $|\NN^*|$ arbitrary vertices of $\NN$, respectively.
%as they will serve as the vertex sets $S^*$ and $\NN^*$.
%and translate each edge
%Consequently, we may 1-to-1 mappings between the vertices of $S^*$ to the vertices of $S$ and between the vertices of $\
To justify this last step, note that the size of $S$ is at least as large as that of $S^*$.
Moreover, since the expansion of $G$ is $\beta$, we have $|\NN| \ge \beta \cdot |S|$, hence the size of $\NN$ is at least as large as that of $\NN^*$.
Consequently, determining the vertex sets $S^*$ and $\NN^*$ as subsets of $S$ and
$\NN$, respectively, is well-defined.
%%%}

In what follows we analyze the properties of $\tilde G$.

First, note that the vertex set of $\tilde G$ contains $|\NN^*| \le \epsilon^2(\Delta/2\beta)$ more vertices than in $G$.
Denoting the number of vertices in $\tilde G$ by $\tilde n$, we have $\tilde n \le n + \epsilon^2(\Delta/2\beta)$.
Note also that the maximum degree increases by at most $\tilde \Delta = \epsilon \cdot \Delta$, so it is bounded by $(1+\epsilon)\cdot \Delta$.

\begin{claim}
$\tilde G$ is an ordinary $(\tilde \alpha, \tilde \beta)$-expander, where ...
\end{claim}
\Proof
Since the inverse-expansion of the core graph $G^*_S$ is at least $\beta/\epsilon$,
Lemma \ref{coregen}(3) yields $|\Gamma(\NN')| \ge \beta/\epsilon \cdot |\NN'|$, for every subset $\NN'$ of $\NN^*$.

Consider an arbitrary set $S$ of at most $\alpha \cdot n$ vertices from $\tilde G$, and let $\NN' = S \cap \NN^*$.
If an $\epsilon$-fraction of the vertices of $S$ belong to $\NN^*$, i.e., $|\NN'| \ge \epsilon \cdot |S|$
then we have $$|\Gamma^-(S)| ~\ge~ |\Gamma(\NN')| ~\ge~
\beta/\epsilon \cdot |\NN'| ~\ge~ (\beta/\epsilon) \cdot (\epsilon \cdot |S|) ~=~ \beta \cdot |S|.$$
\QED

We derive the following corollary concerning worst-case expanders.
\begin{corollary} [worst-case expanders]
Let $n$ and $\Delta$ be two arbitrary integers.
If there exists an $(\alpha,\beta)$-expander $G$ on $n$ vertices with maximum degree $\Delta$,
then for any $0 < \epsilon < 1$, there exists an $(\alpha',\beta')$-expander $G'$ on $n$ vertices with maximum degree $\Delta'$
and wireless expansion $\beta_w$, such that :
\end{corollary}
}

\def\APPENDDETERMENSTIC{
\subsection{Bounds depending on the maximum degree}
\subsubsection{A naive approach}
In this section we provide a simple argument showing that when the maximum degree is small, the wireless expansion $\beta_w$ is not much smaller than
the ordinary expansion $\beta$. Recall that we consider an arbitrary bipartite graph $G_S = (S, \NN, E_S)$ with sides $S$ and $\NN$, such that $|\NN| = \beta \cdot |S|$.
We assume that no vertex of $G_S$ is isolated, i.e., all vertex degrees are at least 1.
In what follows we define $s=|S|$, $\gamma= |N|$.
\begin{lemma} \label{lem:naive}
In $G_S=(S,\NN,E_S)$, if the maximum degree is $\Delta$,
then there is a subset $S'$ of $S$ with $|\UN_{S}(S')| \ge \gamma/\Delta$.
\end{lemma}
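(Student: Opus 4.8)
The plan is to establish the existence of $S'$ by the probabilistic method, reusing the sampling idea behind Lemma \ref{lem:logdelta} but with a single global inclusion probability tuned to the maximum degree. Since $G_S$ is bipartite with no isolated vertices, every $u \in \NN$ satisfies $1 \le \deg(u,S) \le \Delta$ and has all of its neighbors in $S$; consequently $u$ joins $\UN_{S}(S')$ exactly when precisely one of its neighbors is chosen for $S'$. First I would form a random set $S' \subseteq S$ by including each vertex of $S$ independently with probability $p = 1/\Delta$. For a fixed $u \in \NN$ with $d := \deg(u,S)$, the event that $u$ has a unique neighbor in $S'$ has probability exactly $d\,p(1-p)^{d-1}$, which I would rewrite as $\tfrac{1}{\Delta}\cdot d\,(1-1/\Delta)^{d-1}$.

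The entire argument then reduces to the elementary monotonicity claim $h(d) := d\,(1-1/\Delta)^{d-1} \ge 1$ for every integer $d$ in the range $1 \le d \le \Delta$. I would settle this by inspecting the ratio $h(d+1)/h(d) = \tfrac{d+1}{d}(1-1/\Delta)$ and noting that it is at least $1$ precisely when $d \le \Delta-1$; hence $h$ is non-decreasing on $\{1,\ldots,\Delta\}$, and since $h(1)=1$ we obtain $h(d)\ge 1$ throughout. This shows that each $u \in \NN$ is uniquely covered with probability at least $1/\Delta$, regardless of its exact degree.

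Finally, by linearity of expectation,
\[
\Exp{|\UN_{S}(S')|} ~=~ \sum_{u \in \NN} \deg(u,S)\,p\,(1-p)^{\deg(u,S)-1} ~\ge~ \sum_{u \in \NN} \frac{1}{\Delta} ~=~ \frac{|\NN|}{\Delta} ~=~ \frac{\gamma}{\Delta},
\]
so some realization of $S'$ attains $|\UN_{S}(S')| \ge \gamma/\Delta$, proving the lemma. I expect the only nontrivial step to be verifying $h(d)\ge 1$ uniformly over $d \le \Delta$; everything else is routine. One boundary case worth flagging is $\Delta = 1$, where $p=1$ forces $S'=S$ and every (degree-one) vertex of $\NN$ is trivially uniquely covered, so the bound holds with equality and the argument degenerates gracefully.
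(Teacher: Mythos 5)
Your proof is correct, but it takes a genuinely different route from the paper's. The paper proves Lemma \ref{lem:naive} by a \emph{deterministic, constructive} procedure: it iteratively picks a vertex $v$ of the remaining neighborhood with the fewest surviving neighbors in $S$, promotes one such neighbor $w$ into $S'$, collects all vertices whose surviving neighborhood equals that of $v$ as uniquely covered, and discards the other (at most $\Delta-1$) neighbors of $w$; an amortized charging argument then shows at least one out of every $\Delta$ vertices removed from $\NN$ is uniquely covered. You instead sample each vertex of $S$ independently with probability $1/\Delta$ and verify the pointwise bound $d(1-1/\Delta)^{d-1}\ge 1$ for $1\le d\le\Delta$ via the ratio test; your monotonicity computation is right, the expectation calculation is right, and the $\Delta=1$ boundary case is handled. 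The trade-off: your argument is shorter and slots naturally into the decay-style sampling framework of Lemma \ref{lem:logdelta} (indeed it sharpens to $\gamma/\Delta_{\NN}$ where $\Delta_{\NN}=\max_{u\in\NN}\deg(u,S)$, by taking $p=1/\Delta_{\NN}$), but it is non-constructive. The paper's procedure yields an explicit algorithm and, as remarked there, its bound actually depends only on $\Delta_S$, the maximum degree of a vertex on the $S$ side --- a different refinement from yours. Since this lemma lives in the appendix whose stated purpose is to provide \emph{deterministic and constructive} arguments complementing the probabilistic ones of Section \ref{sec:positive}, your probabilistic proof, while valid for the statement as written, would not serve that purpose in context.
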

  %Proving that $\beta_w \ge \beta/d^2$ is much easier.
%\Proof
%Let $S$ be a vertex set with $|S| \le \alpha n$, and denote by $\Gamma_{\bar S}(S)$ the set of their neighbors outside $S$, with $|\Gamma_{\bar S}(S)| \ge \beta \cdot |S|$.
\begin{proof}
We describe a procedure for computing vertex sets  $ \SETu \subseteq S$ and $\NNu \subseteq \NN$,
such that $|\NNu| \ge \gamma/\Delta$ and every vertex of $\NNu$ has a unique neighbor in $\SETu$.

Initialize $\NNu =  \SETu = \emptyset, \NNt = \NN,  \SETt = S$.
At each step of the procedure, the sets $\NNu$ and $\SETu$ (respectively, $\NNt$ and $\SETt$) grow (resp., shrink).
The procedure maintains the following invariant throughout.
\Invariant \begin{description}
\item{(I1)} $\SETt \cup \SETu \subseteq S$ and $\SETt \cap \SETu = \emptyset$.
\item{(I2)} $\NNt \cup \NNu \subseteq \NN$
and $\NNt \cap \NNu = \emptyset$.
\item{(I3)} Every vertex of $\NNu$ has a unique neighbor in $\SETu$.
\item{(I4)} Every vertex of $\NNt$ has at least one neighbor in $\SETt$, but has no neighbor in $\SETu$.
\end{description}

For a vertex $x \in \NNt$, recall that $\Gamma(x,{\SETt})$ is the set of neighbors of $x$ in $\SETt$.
At each step we pick a vertex $v \in \NNt$ minimizing
$|\Gamma(v,{\SETt})|$, i.e., a vertex with a minimum number of neighbors in $\SETt$.
%[[S: need to argue that this measure is not zero! That is, that each vertex in $\Gamma$ is covered by at least one vertex in $S$.
%And this is why we need the issue of removing all vertices of $Q''_v$ incident on $w$ below]]
%(The vertex set $S^{temp}$ will be shrinking at each step, so this measure $|\Gamma_{S^{temp}}(v)|$ is dynamic.
(By invariant $(I4)$, we have $|\Gamma(v,{\SETt})| \ge 1$.)
Let $Q_v$ be the set of all vertices in $\NNt$ that are incident on at least one vertex of $\Gamma(v,{\SETt})$.
By the choice of $v$, for any vertex $u$ in $Q_v$ satisfying $\Gamma(u,{\SETt}) \subseteq \Gamma(v,{\SETt})$, we must have $\Gamma(u,{\SETt}) = \Gamma(v,{\SETt})$.
%and consider an arbitrary vertex $q
We partition $Q_v$ into two subsets $Q'_v$ and $Q''_v$, where $Q'_v$ contains all vertices $u$ for which $\Gamma(u,{\SETt}) = \Gamma(v,{\SETt})$
and $Q''_v$ contains the remaining vertices of $Q_v$ (all of which must have a neighbor in $\SETt \setminus \Gamma(v,{\SETt})$).
Obviously we have $Q'_v \supseteq \{v\}$, so $|Q'_v| \ge 1$.
%, i.e., the vertices $v''$ for which $\Gamma_S(v'') \cap \Gamma_S(v) \ne \emptyset$
%and $\Gamma_S(v'') \setminus \Gamma_S(v) \ne \emptyset$.

We start by moving an arbitrary vertex $w$ of $\Gamma(v,{\SETt})$ from $\SETt$ to $\SETu$; note that
$w$ is incident on all vertices of $Q'_v$.
Then we remove all other vertices of $\Gamma(v,{\SETt})$ from $\SETt$, which prevents these vertices from entering $\SETu$ later on, thus guaranteeing that all vertices in $Q'_v$ will have $w$ as their unique neighbor in $\SETu$.
Subsequently,  all vertices of $Q'_v$ are moved from $\NNt$ to $\NNu$.
(See Figure \ref{fig:naivearg1} for an illustration.)

\def\FIGA{
%%%%%%%%%%%%%%%%
\begin{figure}[htb]
\begin{center}
\includegraphics[scale=0.6]{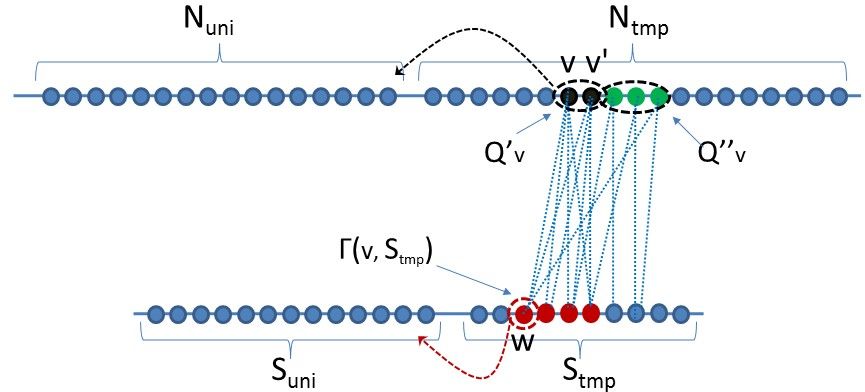}
\end{center}
\caption{\sf An illustration of a single step of the procedure.
%the vertex sets $\SETu,\SETt$ and $\NNu,\NNt$ into which $S$ and $\Gamma$ are partitioned,
The dashed lines represent edges that connect vertices in $Q_v$ with vertices in $\SETt$, where $v$ is a vertex in $\NNt$ minimizing $|\Gamma(v,{\SETt})|$.
The vertices in $Q'_v$ are colored black, and they move from $\NNt$ to $\NNu$; the vertices in $Q''_v$ are colored green, and they
are removed from $\NNt$; the vertices in $\Gamma(v,{\SETt})$ are colored red, and they are removed from $\SETt$, except for $w$ which moves to $\SETu$.}
\label{fig:naivearg1}
\end{figure}
%%%%%%%%%%%%%%%
}%FIGA
\FIGA

%It is easy to see that the third assertion of the invariant continues to hold.
In addition, to prevent violating invariant $(I4)$ now and invariant $(I3)$ in the future, all neighbors of $w$ that belong to $Q''_v$ are removed from $\NNt$  (they are incident to $w$ which has just moved to $\SETu$, and they might have neighbors in $\SETt$
that will be moved to $\SETu$ later on).
%this is exactly our loss, which is bounded by $d-1$, and so it is greater than our gain $|Q'_v| \ge 1$ by a factor of at most $d-1$.
It is clear that the first three invariants $(I1)-(I3)$ continue to hold following this step.
As for invariant $(I4)$, consider an arbitrary vertex $u$ of $\NNt$ at the beginning of this step.
We know that $u$ had no neighbors in $\SETu$ at the beginning of the step. If $u$ is not a neighbor of $w$, then $u$ had no neighbors in $\SETu$ also at the end of the step. Otherwise, If $u$ is a neighbor of $w$, then its only new neighbor in $\SETu$ at the end of the step is $w$ and $u$ was removed from $\NNt$ (it either moves to $\NNu$ if it belongs to $Q'_v$, or it is removed altogether if it belongs to $Q''_v$). This shows that every vertex $u$ of $\NNt$ has no neighbor in $\SETu$ at the end of the step.
Next, if $u$ has a neighbor outside $\Gamma(v,{\SETt})$, then this neighbor remains in $\SETt$ following the step
(since only the vertices of $\Gamma(v,{\SETt})$ are removed from $\SETt$ during the step).
Otherwise, we have $\Gamma(u,{\SETt}) \subseteq \Gamma(v,{\SETt})$, which by the choice of $v$ implies that $\Gamma(u,{\SETt}) = \Gamma(v,{\SETt})$.
By definition, $u \in Q'_v$, and is thus removed from $\NNt$ during the step. This shows that at the end of this step, every vertex of $\NNt$ has at least one neighbor in $\SETt$, so $(I4)$ holds.

This procedure terminates once $\NNt = \emptyset$.
By invariant $(I3)$, every vertex of $\NNu$ has a unique neighbor in $\SETu$.
At each step of the procedure, we move $|Q'_v| \ge 1$ vertices from $\NNt$ to $\NNu$, all of which are neighbors of some vertex
$w \in \Gamma(v,{\SETt})$, and remove some of the other (at most $\Delta-1$) neighbors of $w$ from $\NNt$.
Consequently, at least one vertex among every $\Delta$ vertices removed from $\NNt$ must move to $\NNu$.
%for every vertex moved from $\Gamma^{temp}$ in $\Gamma$, we removed at most $d-1$ other vertices from $\Gamma$,
Since initially we have $\NNt = \NN$,
it follows that $|\NNu| \ge \gamma/\Delta$.
\QED
\end{proof}

Note that the proof of this lemma takes into account the maximum degree $\Delta_S$ of a vertex in $S$,
rather than the maximum degree $\Delta$ in the entire graph.

\begin{corollary} \label{naive lower bound}
Suppose $G$ is an $(\alpha,\beta)$-expander with maximum degree $\Delta$. Then it is also an ($\alpha_w ,\beta_w$)-wireless expander, with $\alpha_w=\alpha$ and $\beta_w\geq\beta/\Delta$.
\end{corollary}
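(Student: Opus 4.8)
The plan is to obtain Corollary~\ref{naive lower bound} as an immediate consequence of Lemma~\ref{lem:naive}, reading the latter through the bipartite reduction of Section~\ref{frame}. All the combinatorial content—the greedy procedure that extracts a subset with a large unique neighborhood—is already packaged into Lemma~\ref{lem:naive}, so the only thing left is to instantiate it on the right bipartite graph and translate the conclusion back into the language of wireless expansion on the whole graph $G$.

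First I would fix an arbitrary $S \subseteq V$ with $|S| \le \alpha n$ and use the $(\alpha,\beta)$-expansion hypothesis to conclude $|\EN| \ge \beta \cdot |S|$. I would then pass to the bipartite graph $G_S = (S, \NN, E_S)$ with $\NN = \EN$ and $E_S = e(S,\NN)$, exactly the object studied in the framework. By construction its $\NN$-side has size $\gamma = |\NN| \ge \beta \cdot |S|$, and every vertex of $S$ keeps degree at most $\Delta$, so the maximum degree of $G_S$ is at most $\Delta$. Lemma~\ref{lem:naive} then supplies a subset $S' \subseteq S$ with $|\UN_{S}(S')| \ge \gamma/\Delta \ge (\beta/\Delta)\cdot|S|$. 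Setting $\alpha_w = \alpha$ and $\beta_w = \beta/\Delta$, and recalling that $S$ was an arbitrary set of size at most $\alpha n$, yields the claimed wireless-expander bound.

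The single point deserving an explicit sentence is that $\UN_{S}(S')$ measured inside $G_S$ agrees with $\UN_{S}(S')$ in the original graph $G$. This holds because $\UN_{S}(S')$ counts only vertices \emph{outside} $S$ having exactly one neighbor in $S'$, and $\NN = \EN$ already contains every external neighbor of $S$; edges internal to $S$ or to $\NN$ are therefore irrelevant, precisely as argued in Section~\ref{frame}. I do not expect a genuine obstacle here: since the hard work lives in Lemma~\ref{lem:naive}, the proof is pure bookkeeping, and the only care needed is that the lemma's conclusion is phrased in terms of $\gamma$ so that the inequality $\gamma \ge \beta|S|$ (rather than an exact equality $|\NN| = \beta|S|$) is enough, and that weakening the lemma's sharper dependence on the $S$-side degree $\Delta_S$ to the global maximum degree $\Delta$ only loses us the harmless factor $\Delta/\Delta_S \ge 1$.
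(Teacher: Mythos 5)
Your proposal is correct and matches the paper's intent exactly: the corollary is meant to follow immediately from Lemma~\ref{lem:naive} via the bipartite reduction of Section~\ref{frame}, with $\gamma=|\Gamma^-(S)|\ge\beta|S|$ supplying the final inequality. Your two side remarks---that $\UN_S(S')$ is unaffected by edges internal to $S$ or to $\NN$, and that the lemma's dependence on the $S$-side maximum degree only helps---are both accurate and consistent with the paper's own comments.
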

%Note that this naive bound is already better than that of Lemma \ref{from unique to wireless} when $\beta\leq \Delta^2/(2\Delta-1)$, and better than that of Corollary \ref{lower bound by probability} when $\beta\leq 8$.

\subsubsection{Procedure Partition} \label{basicproc}
Our next goal is to strengthen Corollary \ref{naive lower bound}. In this section we describe a procedure, hereafter named Procedure {\partition}, which lies at the core of our lower bounds on the wireless expansion. This procedure is then employed in various scenarios to conclude that the wireless expansion is close to the ordinary expansion. The procedure partitions $\NN$ into
$\NNu,\GAMm,\NNt$ and $S$ into $\SETu$ and $\SETt$,
such that the following conditions hold. (In what follows we refer to these conditions as the ``partition conditions''.)
\begin{description}
\item{(P1)} Every vertex of $\NNu$ has a unique neighbor in $\SETu$.
\item{(P2)} Every vertex of $\NNt$ has at least one neighbor in $\SETt$, but has no neighbor in $\SETu$.
\item{(P3)} $|\NNu| \ge |\GAMm|$.
\item{(P4)} Either $\NNt = \emptyset$, or $|\EEt|  \le 2|\EEu|$ holds,
where $\EEu$ (resp., $\EEt$)
denotes the set of edges connecting all vertices in $\SETt$ with vertices in $\NNu$ (resp., $\NNt$).
\end{description}
%\end{lemma}
%\begin{proof}

At the outset, we initialize $\NNu = \GAMm = \SETu = \emptyset, \NNt = \NN,  \SETt = S$.
At each step of the procedure, the sets $\NNu$ and $\SETu$ grow and the set $\NNt$ and $\SETt$ shrink.
The set $\GAMm$ also grows, but not necessarily at each step; it contains ``junk'' vertices that once belonged to $\NNu$,
but were removed from  $\NNu$ due to new vertices added to $\SETu$.

The first three aforementioned conditions are maintained throughout the execution of the procedure. (Notice that initially all three of them hold trivially.)
On the other hand, condition $(P4)$ is required to hold only when the procedure terminates.
%\begin{invariant} \label{basicinvariant}
%\begin{enumerate}
%\item $\SETt \cup \SETu= S$ and $\SETt \cap \SETu = \emptyset$.
%\item $\NNt \cup \NNu \cup \GAMm = \Gamma_{2\delta}$
%and $\NNt \cap \NNu = \NNt \cap \GAMm = \NNu \cap \GAMm = \emptyset$.
%\item Every vertex of $\NNu$ has a unique neighbor in $\SETu$.
%\item Every vertex of $\NNt$ has at least one neighbor in $\SETt$, but has no neighbor in $\SETu$.
%\item $|\NNu| \ge |\GAMm|$.
%\end{enumerate}
%\end{invariant}

For a vertex $x \in \SETt$, denote by  $\NNt(x)$ (resp., $\NNu(x)$) the set of neighbors of $x$ in $\NNt$ (resp., $\NNu$).
%note that all vertices in $\Gamma^{temp}(x)$ are not adjacent to any vertex in $\SETu$.
%Similarly, denote by  the set of neighbors of $x$ in .  %; note that each vertex in $\NNu$
%is already adjacent to at least one vertex of $\SETu$.

At each step we pick a vertex $v \in \SETt$ maximizing $gain(v) := |\NNt(v)| - 2|\NNu(v)|$.
% roughly speaking,
%this vertex has a maximal number of vertices that haven't been discovered yet.
%[[S: need to argue that this measure is no zero! That is, that each vertex in $\Gamma$ is covered by at least one vertex in $S$.
%And this is why we need the issue of removing all vertices of $Q''_v$ incident on $w$ below]]
%(The vertex set $\hat S$ will be shrinking at each step, so this measure $|\Gamma_{\hat S}(v)|$ is dynamic.
%(By the fourth assertion of the invariant, we have $|\Gamma_{\hat S}(v)| \ge 1$.)
Assuming $gain(v) > 0$, we move $v$ from $\SETt$ to $\SETu$;
to preserve condition $(P1)$, we move the vertices of $\NNu(v)$ from $\NNu$ to $\GAMm$.
Next, we move all vertices of $\NNt(v)$ from $\NNt$ to $\NNu$. Since $gain(v) > 0$, condition $(P3)$ holds.
The reason condition $(P2)$ holds is because once a vertex of $\SETt$ moves to $\SETu$, all its neighbors in $\NNt$
  are moved to $\NNu$. Obviously the sets $\NNu,\GAMm,\NNt$ (resp., $\SETu,\SETt$) form a partition of $\NN$ (resp., $S$).

Procedure {\partition} terminates once $\SETt$ becomes empty or once $gain(v) \le 0$ for all $v \in \SETt$.
In the former case, condition $(P2)$ implies that $\NNt = \emptyset$, and we are done. In the latter case, we have $|\NNt(v)| \le 2|\NNu(v)|$ for any $v \in \SETt$, yielding
\begin{equation} \label{secondcase}
|\EEt| ~=~ \sum_{v \in \SETt} |\NNt(v)| ~\le~   \sum_{v \in \SETt} 2|\NNu(v)| ~=~ 2|\EEu|.
\end{equation}
%Denote by $\EEu$ the set of edges connecting all vertices in $\SETt$ with vertices in $\NNu$,
%and denote by $\EEt$ the set of edges connecting all vertices in $\SETt$ with vertices in $\NNt$.
%the vertex sets $\SETu,\SETt$ and $\NNu,\GAMm,\NNt$ into which $S$ and $\Gamma_{2\delta}$ are partitioned, and

%\QED
%\end{proof}

(See Figure \ref{fig:mainarg1} for an illustration.)

\def\FIGB{
%%%%%%%%%%%%%%%%
\begin{figure}[htb]
\begin{center}
\includegraphics[scale=0.6]{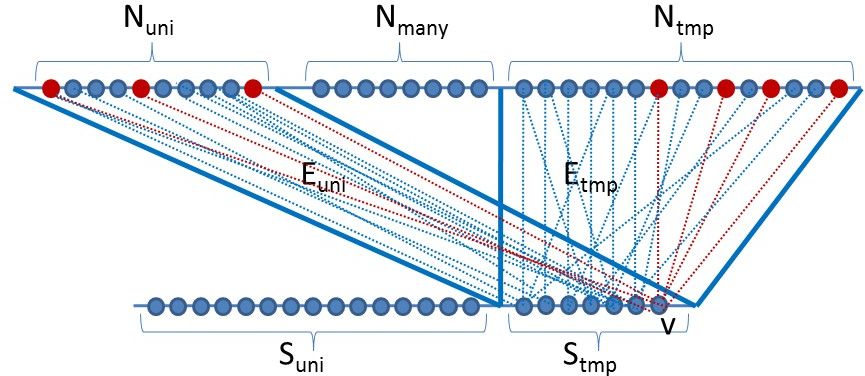}
\end{center}
\caption{\sf An illustration of
%the vertex sets $\SETu,\SETt$ and $\NNu,\GAMm,\NNt$ into which $S$ and $\Gamma_{2\delta}$ are partitioned, and
the edge sets $\EEu$ and $\EEt$ which connect $\SETt$ to $\NNu$ and $\NNt$, respectively. The vertices in $\NNt(v)$ and $\NNu(v)$, as well as the edges connecting them to $v$, are colored red; here we have $\gain(v) = |\NNt(v)| - 2|\NNu(v)| = -2$.}
\label{fig:mainarg1}
\end{figure}
%%%%%%%%%%%%%%%%
}%FIGB
\FIGB

\subsubsection{Constructive lower bound for $\beta_w$ in terms of the average degree}
Let  $\NN=\EN$ and $\gamma=|\NN|$, and denote by $\delta$ the average degree of a vertex in $\NN$, i.e., $\delta = (1/\gamma) \sum_{v \in \NN} \deg(v,S)$.
(As all vertex degrees are at least 1, we have $\delta \ge 1$.)\\
We next show a lower bound that takes into account the average degree $\delta$ rather than the maximum degree $\Delta$.

\begin{lemma}
\label{constructive lemma}
In the graph $G_S$ there exists a subset $S'$ of $S$ with $|\UN_S(S')|\geq \gamma/(8\delta)$.
\end{lemma}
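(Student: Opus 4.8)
The plan is to apply Procedure \partition, but only after first discarding the high-degree vertices of $\NN$ — this pruning is precisely what lets the \emph{average} degree $\delta$ (rather than the maximum degree) govern the bound. Concretely, set $\NN' = \{v \in \NN : \deg(v,S) \le 2\delta\}$. Since $\sum_{v \in \NN}\deg(v,S) = \gamma\delta$, Markov's inequality shows that fewer than $\gamma/2$ vertices of $\NN$ have degree above $2\delta$, so $|\NN'| \ge \gamma/2$, and every vertex of $\NN'$ has at most $2\delta$ neighbors in $S$.

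First I would run Procedure \partition\ on the bipartite graph $(S,\NN')$, obtaining the partition of $\NN'$ into $\NNu,\GAMm,\NNt$ and of $S$ into $\SETu,\SETt$ meeting the partition conditions (P1)--(P4). I then take $S' = \SETu$. By condition (P1), every vertex of $\NNu$ has a \emph{unique} neighbor in $\SETu$; since the pruning removed vertices of $\NN$ but kept all edges incident to the retained vertices, this unique-neighbor property is inherited verbatim in $G_S$. Hence $\UN_S(S') \supseteq \NNu$, and it suffices to prove $|\NNu| \ge \gamma/(8\delta)$ (vertices of $\NN \setminus \NN'$ can only enlarge $\UN_S(S')$, so ignoring them is harmless for a lower bound).

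The heart of the argument is an edge count. If $\NNt = \emptyset$, then $\NN' = \NNu \cup \GAMm$, and condition (P3) gives $|\NNu| \ge |\NN'|/2 \ge \gamma/4$, which already exceeds $\gamma/(8\delta)$ since $\delta \ge 1$. Otherwise, condition (P4) yields $|\EEt| \le 2|\EEu|$, and here the degree restriction pays off: every edge of $\EEu$ is incident to a vertex of $\NNu \subseteq \NN'$, whose degree is at most $2\delta$, so $|\EEu| \le 2\delta \cdot |\NNu|$. Combining this with $|\NNt| \le |\EEt|$ (each vertex of $\NNt$ has at least one neighbor in $\SETt$ by (P2)) gives $|\NNt| \le 4\delta \cdot |\NNu|$, and with $|\GAMm| \le |\NNu|$ from (P3) we obtain $|\NN'| = |\NNu| + |\GAMm| + |\NNt| \le (2 + 4\delta)\,|\NNu|$. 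Together with $|\NN'| \ge \gamma/2$ this gives $|\NNu| = \Omega(\gamma/\delta)$; optimizing the pruning threshold $t\delta$ (the factor $(1-1/t)/(2t)$ is maximized at $t=2$, with value $1/8$) drives the constant to the claimed $\gamma/(8\delta)$.

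The main obstacle is exactly the term $|\EEu|$: on the full neighborhood $\NN$ it cannot be controlled, because a handful of very high-degree vertices of $\NN$ could carry almost all of the edges into $\SETt$ while contributing almost nothing to $\NNu$. This is what forces the initial restriction to $\NN'$ and explains structurally why the bound must depend on the average degree rather than the maximum degree. A secondary point that must be checked carefully is that the unique-neighbor guarantee (P1), which Procedure \partition\ establishes \emph{inside} $(S,\NN')$, survives the passage back to $G_S$; it does, precisely because pruning deletes vertices of $\NN$ but no edges among the retained vertices, so a vertex of $\NNu$ sees the same set of neighbors in $S$ in both graphs.
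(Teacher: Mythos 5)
Your proposal follows the paper's proof essentially step for step: the same Markov-type pruning to the set of vertices of $\NN$ of degree at most $2\delta$, the same invocation of Procedure {\partition} on the pruned bipartite graph, the same case split on whether $\NNt$ is empty, and the same edge count via conditions (P1)--(P4); your observation that the unique-neighbor property survives the passage back to $G_S$ (since pruning removes vertices but no edges among retained ones) is also correct and implicit in the paper. The one place you fall short is the constant. Your edge count uses $|\EEu| \le 2\delta\cdot|\NNu|$, which gives $|\NN'| \le (2+4\delta)|\NNu|$ and hence only $|\NNu| \ge \gamma/(8\delta+4)$, strictly weaker than the claimed $\gamma/(8\delta)$. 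The proposed repair by re-optimizing the pruning threshold does not work: with threshold $t\delta$ the bound you actually derive is $|\NNu| \ge \gamma(1-1/t)/(2+2t\delta)$, and requiring this to be at least $\gamma/(8\delta)$ reduces to $\delta(8-8/t-2t)\ge 2$, whose left-hand side is maximized at $t=2$ where it equals $0$ --- so no threshold recovers the constant $8$ from your version of the count.

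The paper closes exactly this gap with one extra use of (P1): every vertex of $\NNu$ already has one neighbor in $\SETu$, so, having total degree at most $2\delta$, it has at most $2\delta-1$ neighbors in $\SETt$. This gives $|\EEu| \le (2\delta-1)|\NNu|$, hence $|\NNt| \le |\EEt| \le 2|\EEu| \le (4\delta-2)|\NNu|$ and $|\NN'| = |\NNu|+|\GAMm|+|\NNt| \le \bigl(2+(4\delta-2)\bigr)|\NNu| = 4\delta|\NNu|$; combined with $|\NN'| \ge \gamma/2$ this yields precisely $|\NNu| \ge \gamma/(8\delta)$. With that single refinement inserted, your argument coincides with the paper's.
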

\begin{proof}
Denote by $\NN^{2\delta}$ the set of vertices of $\NN=\EN$ with degree at most
$2\delta$. Observe that at least half the vertices of $\NN$ have degree
at most twice the average, implying that $|\NN^{2\delta}| \ge \gamma/2$.
%\footnote{
%(We assume each vertex in $\Gamma_{\bar S}(S)$ has at most $2d/\beta$ neighbors in $S$; this assumption may effect the final upper bound by a factor of 2.}
%(The set $\bar S$ denotes the complementary vertex set of $S$ in $G$, i.e., $\bar S = V \setminus S$.)
We apply Procedure {\partition}, but consider the vertex set $\NN^{2\delta}$
rather than $\NN$. Thus we obtain a partition of $\NN^{2\delta}$ rather than
$\NN$ into $\NNu^{2\delta},\GAMm^{2\delta},\NNt^{2\delta}$ and a partition of $S$
into $\SETu$ and $\SETt$ satisfying the partition conditions $(P1)-(P4)$.
Next, we show that $|\NNu^{2\delta}| \ge  \gamma/(8\delta)$.

%Note that the sum of degrees of vertices in $S$ is bounded by $S \cdot d$.
%As a result, the number of vertices in $\Gamma_{\bar S}(S)$
%We may assume that all vertices in $\Gamma_{\bar S}(S)$ have degree bounded by $2d/\beta$;
%To complete the proof, we show that upon termination  it holds that $|\NNu| \ge  |\Gamma|/(8\delta)$.

Suppose first that the procedure terminates because $\NNt^{2\delta} = \emptyset$.
By partition condition $(P3)$, $2|\NNu^{2\delta}| \ge |\NNu^{2\delta}| + |\GAMm^{2\delta}| = |\NN^{2\delta}|$.
It follows that
\begin{equation} \label{firstcase}
|\NNu^{2\delta}| ~\ge~  \frac{|\NN^{2\delta}|}{2} ~\ge~ \frac{\gamma}{4} ~\ge~ \frac{\gamma}{4\delta}.
\end{equation}
We henceforth assume that $|\EEt|  \le 2|\EEu|$.
By definition, each vertex in $\NN^{2\delta}$ has at most $2\delta$ neighbors in $S$. Condition $(P1)$ implies that each vertex in $\NNu^{2\delta}$ has a
single neighbor in $\SETu$, so it has at most $2\delta-1$ neighbors in $\SETt$, yielding $|\EEu| ~\le~ (2\delta-1)  |\NNu^{2\delta}|.$
By condition $(P2)$, each vertex of $\NNt^{2\delta}$ is incident on at least one edge of $\EEt$, and so $|\EEt| \ge |\NNt^{2\delta}|$.
It follows that
\begin{eqnarray*}
|\NNt^{2\delta}| ~\le~ |\EEt| ~\le~ 2|\EEu| ~\le~ (4\delta -2)   |\NNu^{2\delta}|.
\end{eqnarray*}
Hence,
\begin{eqnarray*}
4\delta \cdot |\NNu^{2\delta}| &=& (2 + (4\delta -2))  |\NNu^{2\delta}|
\\&\ge& |\NNu^{2\delta}| + |\GAMm^{2\delta}| + |\NNt^{2\delta}|
\\&=& |\NN^{2\delta}|  ~\ge~ \frac{\gamma}{2},
\end{eqnarray*}
which yields $|\NNu^{2\delta}| ~\ge~ \gamma/(8\delta).$ \QED
\end{proof}
%{\bf Remark.} We consider the average degree of a vertex in $\NN=\EN$ rather than in $S$. As we will show later, this is for a good reason --
%there are examples where the average degree of a vertex in $S$ is constant yet one cannot uniquely cover more than $O(\gamma / \log\gamma)$ vertices of $\EN$.

For every $S\subset V$ denote by $\delta_S$ the average degree of a vertex in $\NN=\EN$, i.e., $\delta_S=(1/|\NN|)\sum_{v\in \NN}\deg (v,S)$ and denote $\bar{\delta}=\max\{\delta_S~|~ S\subset V, |S|\leq \alpha n\}$.

\begin{corollary}
\label{basiclemma1}
Let $G=(V,E)$ be an $(\alpha, \beta)$-expander. Then \item{(1)} $G$ is an $(\alpha_w,\beta_w)$-wireless expander with $\alpha_w=\alpha$ and $\beta_w \geq \beta /(8\bar{\delta}) \ge \beta /(8\Delta)$, where $\Delta$ is the maximum degree in the graph. \item{(2)}
In the regime $\beta \ge 1$, we have $\delta_S \le \Delta / \beta$, for every $S$ such that $|S|\leq \alpha n$, thus $\bar{\delta}\leq \Delta/\beta$ and we get $\beta_w \ge \beta^2 /(8\Delta)$.
\end{corollary}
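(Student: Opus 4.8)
The plan is to derive both parts of the corollary directly from Lemma \ref{constructive lemma} (the constructive bound obtained via Procedure {\partition}), applied separately to each candidate set $S$.

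For part (1), I would fix an arbitrary $S \subseteq V$ with $|S| \le \alpha n$ and pass to the bipartite graph $G_S = (S, \NN, E_S)$ obtained by taking $\NN = \EN$ together with all edges of $G$ between $S$ and $\NN$; as noted in Section \ref{frame}, discarding edges internal to $S$ or $\NN$ has no effect on the relevant expansion quantities. By the expansion hypothesis, $\gamma := |\NN| \ge \beta \cdot |S|$, and by definition $\delta_S$ is the average degree of a vertex of $\NN$ in $G_S$. Lemma \ref{constructive lemma} then supplies a subset $S' \subseteq S$ with $|\UN_S(S')| \ge \gamma/(8\delta_S) \ge \beta |S|/(8\delta_S)$. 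Since $\delta_S \le \bar\delta$ by the definition of $\bar\delta$, this yields wireless expansion $|\UN_S(S')|/|S| \ge \beta/(8\bar\delta)$ for this particular $S$. As $S$ was an arbitrary set of size at most $\alpha n$, taking the worst case over all such $S$ gives $\beta_w \ge \beta/(8\bar\delta)$ with $\alpha_w = \alpha$. The second inequality $\beta/(8\bar\delta) \ge \beta/(8\Delta)$ then follows from $\bar\delta \le \Delta$, which holds because every vertex of $\NN$ has at most $\Delta$ neighbours in $S$, so $\delta_S \le \Delta$ for every $S$.

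For part (2), restricting to $\beta \ge 1$, I would bound $\delta_S$ by a simple edge count. Writing $e(S,\NN)$ for the number of edges between $S$ and $\NN$, we have $\delta_S = e(S,\NN)/|\NN|$. The numerator satisfies $e(S,\NN) = \sum_{u \in S}\deg(u,\NN) \le \Delta \cdot |S|$ since the maximum degree is $\Delta$, while the denominator satisfies $|\NN| \ge \beta \cdot |S|$ by expansion. Hence $\delta_S \le \Delta/\beta$ for every $S$ with $|S| \le \alpha n$, so $\bar\delta \le \Delta/\beta$. Substituting this into the bound from part (1) gives $\beta_w \ge \beta/(8\bar\delta) \ge \beta^2/(8\Delta)$.

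I do not anticipate a genuine obstacle, since the entire combinatorial core is already encapsulated in Lemma \ref{constructive lemma}; the corollary is essentially a bookkeeping step. The only point requiring care is the order of quantifiers: Lemma \ref{constructive lemma} produces a set $S'$ whose guarantee is phrased in terms of the \emph{local} average degree $\delta_S$ of the specific set $S$ at hand, so I must first record the per-set bound $|\UN_S(S')| \ge \beta|S|/(8\delta_S)$ and only afterwards replace $\delta_S$ by the global maximum $\bar\delta$ (and, for the weaker form in part (1), by $\Delta$). I would also note in passing that $\beta \le \Delta$ always holds, so that $\Delta/\beta \ge 1$ is consistent with $\delta_S \ge 1$ and the regime $\beta \ge 1$ in part (2) is non-vacuous.
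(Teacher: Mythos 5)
Your proposal is correct and follows essentially the same route as the paper: apply Lemma \ref{constructive lemma} to the bipartite graph $G_S=(S,\EN,E_S)$ for each admissible $S$, use $\gamma\ge\beta|S|$ and $\delta_S\le\bar\delta\le\Delta$ for part (1), and the edge count $e(S,\NN)\le\Delta|S|$ against $|\NN|\ge\beta|S|$ for part (2). Your explicit attention to quantifier order (bounding with the local $\delta_S$ first, then passing to $\bar\delta$) is exactly the right bookkeeping and matches the paper's argument.
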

\def\APPENDF{
%\Proof
For every set $S\subset V$, such that $|S|\leq \alpha n$, let $\NN=\EN$ and consider the corresponding bipartite graph $G_S=(S,\NN,E_S)$. By Lemma \ref{constructive lemma}, there exists a subset $S'$ of $S$ with $\UN_S(S')\geq \gamma/(8\delta)\geq \gamma/(8\bar{\delta})$. By the expansion property, $|\gamma|=|\EN|\geq \beta |S|$. Hence, $|\UN_S(S')|\geq \beta/(8\bar{\delta})$.
\inQED
} %APPENDF

\subsubsection{``Convenient'' degree constraints}
The following lemmas show that if many vertices in $\EN$ have roughly the same degree, then the ordinary expansion $\beta$ and the wireless expansion $\beta_w$ of $G_S$ are roughly the same.

\begin{lemma}
\label{partition using the degrees}
In $G_S$, for any   $c > 1$ and any $i \in \{1,2,\ldots,\log_c |S|\}$,
%Denote by $\delta$ the average degree of a vertex in $\Gamma$, i.e., $\delta = (1/|\Gamma|) \sum_{v \in \Gamma} \deg(v)$.
%(As all vertex degrees are at least 1, we have $\delta \ge 1$.)
%If the average degree in $G_S$ is $d$,
there is a subset $S'$ of $S$ with $|\UN_{S}(S')| \ge |\NN^{(i)}|/2(1+c)$,
where $\NN^{(i)}$ denotes the set of vertices in $\NN$ with degree in $[c^{i-1},c^i)$ for $i<\log_c|S|$ and for $i=\log_c|S|$ is the set of vertices in $\NN$ with degree in $[c^{i-1},c^i]=[|S|/c,|S|]$.
\end{lemma}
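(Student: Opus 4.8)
The plan is to invoke Procedure {\partition} of Section~\ref{basicproc}, run on the bipartite graph between $S$ and the restricted neighbor set $\NN^{(i)}$ (used in place of the full set $\NN$), and to output $S' = \SETu$. This produces a partition of $\NN^{(i)}$ into $\NNu, \GAMm, \NNt$ and of $S$ into $\SETu, \SETt$ obeying the partition conditions $(P1)$--$(P4)$. By condition $(P1)$, every vertex of $\NNu$ has a unique neighbor in $\SETu$, so $\NNu \subseteq \UN_{S}(\SETu)$, and it therefore suffices to prove $|\NNu| \ge |\NN^{(i)}|/(2(1+c))$.

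I would then split according to the two ways the procedure can terminate. If $\NNt = \emptyset$, then $\NN^{(i)} = \NNu \cup \GAMm$, and condition $(P3)$ (that is, $|\NNu| \ge |\GAMm|$) immediately gives $|\NNu| \ge |\NN^{(i)}|/2$, which is even stronger than what we need. The substantive case is the one governed by condition $(P4)$, namely $|\EEt| \le 2|\EEu|$, where $\EEu$ (resp.,\ $\EEt$) is the set of edges joining $\SETt$ with $\NNu$ (resp.,\ $\NNt$).

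The crux of the argument, and the place where the degree window $[c^{i-1},c^i)$ is exploited, is to estimate these two edge sets using the fact that all vertices of $\NN^{(i)}$ have degree confined to this narrow range. On the one hand, by $(P2)$ every vertex $u \in \NNt$ has no neighbor in $\SETu$, so \emph{all} of its neighbors lie in $\SETt$; since $\deg(u) \ge c^{i-1}$, this yields $|\EEt| = \sum_{u \in \NNt}\deg(u) \ge c^{i-1}\,|\NNt|$. On the other hand, by $(P1)$ every vertex $u \in \NNu$ has exactly one neighbor in $\SETu$, so it contributes $\deg(u)-1 < c^i$ edges to $\EEu$, giving $|\EEu| < c^i\,|\NNu|$. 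Combining these with $(P4)$ gives $c^{i-1}|\NNt| \le |\EEt| \le 2|\EEu| < 2c^i|\NNu|$, that is, $|\NNt| < 2c\,|\NNu|$. Using $(P3)$ once more, $|\NN^{(i)}| = |\NNu| + |\GAMm| + |\NNt| \le 2|\NNu| + 2c\,|\NNu| = 2(1+c)|\NNu|$, which rearranges to the claimed bound.

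I expect the only delicate point to be the edge-counting step for $\NNt$: it rests on the observation (an immediate consequence of $(P2)$) that a vertex surviving in $\NNt$ sends all of its edges into $\SETt$, so that the lower degree bound $c^{i-1}$ — rather than the trivial bound $1$ used in the average-degree argument of Lemma~\ref{constructive lemma} — can be applied to $|\EEt|$. This asymmetry between the $c^{i-1}$ lower bound on $\NNt$-degrees and the $c^i$ upper bound on $\NNu$-degrees is precisely what produces the factor $c$ (rather than $c^i$) in the denominator. A minor caveat is the boundary bucket $i = \log_c|S|$, where the interval is closed at $|S|$; there $\deg(u) \le |S| = c^i$ still gives $|\EEu| \le (c^i-1)|\NNu| < c^i|\NNu|$, so the same computation goes through unchanged.
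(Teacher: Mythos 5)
Your proof is correct and follows essentially the same route as the paper's: run Procedure \partition{} on the bucket $\NN^{(i)}$, split on the two termination conditions, and in the nontrivial case combine the lower bound $|\EEt|\ge c^{i-1}|\NNt|$ (from $(P2)$ and the lower end of the degree window) with the upper bound $|\EEu|\le (c^i-1)|\NNu|$ (from $(P1)$ and the upper end) and $(P3)$ to get $|\NN^{(i)}|\le 2(1+c)|\NNu|$. The paper's version differs only in carrying floors and ceilings through the same computation.
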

\def\APPENDH{
%\Proof
We apply Procedure $\partition$, but consider the vertex set $\NN^{(i)}$ rather than $\NN=\EN$. Thus we obtain a partition of $\NN^{(i)}$ rather than $\NN$ into $\NNu^{(i)},\GAMm^{(i)},\NNt^{(i)}$
and a partition of $S$ into $\SETu$ and $\SETt$ satisfying the partition conditions $(P1)-(P4)$.
Next, we show that $|\NNu^{(i)}| \ge  |\NN^{(i)}|/2(1+c)$. This complete the proof as $|\UN_{S}(\SETu)|\geq|\UN_{S}(\SETu)\cap \NN^{(i)}|= |\NNu^{(i)}|$.

If the procedure terminates because $\NNt^{(i)} = \emptyset$,
then we have $|\NNu^{(i)}|~\ge~  |\NN^{(i)}|/2$ (cf.\ Equation (\ref{firstcase})).

We henceforth assume that $|\EEt|  \le 2|\EEu|$.
Note that each vertex in $\NN^{(i)}$ has at most $\lfloor c^i\rfloor$ neighbors in $S$. Condition $(P1)$ implies that each vertex in $\NNu^{(i)}$ has a
single neighbor in $\SETu$, so it has at most $\lfloor c^i\rfloor-1$ neighbors in $\SETt$,
yielding $|\EEu| ~\le~ (\lfloor c^i\rfloor-1)  |\NNu^{(i)}|.$
By condition $(P2)$, each vertex of $\NNt^{(i)}$ is incident only on edges of $\EEt$.
Since there must be at least $\lceil c^{i-1}\rceil$ such edges, we have $|\EEt| \ge \lceil c^{i-1}\rceil|\NNt^{(i)}|$.
It follows that $$\lceil c^{i-1}\rceil\cdot |\NNt^{(i)}| ~\le~ |\EEt| ~\le~ 2|\EEu| ~\le~ 2(\lfloor c^i\rfloor -1)\cdot   |\NNu^{(i)}|.$$
Hence,
\begin{eqnarray*}
2(c^{i-1} + c^i)  |\NNu^{(i)}| &=& (2(c^{i-1} +1) + 2(c^i-1))  |\NNu^{(i)}|
\ge(2(\lceil c^{i-1} \rceil) + 2(\lfloor c^i\rfloor-1))  |\NNu^{(i)}| \\ &\ge& \lceil c^{i-1} \rceil(|\NNu^{(i)}| + |\GAMm^{(i)}| + |\NNt^{(i)}|)
  = \lceil c^{i-1} \rceil |\NN^{(i)}|,
\end{eqnarray*}
which yields $$|\NNu^{(i)}| ~\ge~ \frac{\lceil c^{i-1} \rceil}{2(c^{i-1} + c^i)} |\NN^{(i)}|  ~\ge~ \frac{c^{i-1}}{2(c^{i-1} + c^i)}|\NN^{(i)}|  ~\ge~ \frac{ |\NN^{(i)}|}{2(1+ c)}~. \inQED$$
}%APPENDH

\begin{corollary} \label{samedegrees}
In $G_S$, for any   $c > 1$ there is a subset $S'$ of $S$ such that $$|\UN_{S}(S')| \ge  \frac{\log_2c}{2(1+c)\log_2\Delta}\cdot\gamma~.$$
\end{corollary}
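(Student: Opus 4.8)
The plan is to derive this corollary directly from Lemma~\ref{partition using the degrees} by a pigeonhole argument over the degree classes $\NN^{(i)}$. Since $G_S$ has no isolated vertices and maximum degree $\Delta$, every vertex of $\NN$ has degree (in $S$) in the range $[1,\Delta]$, so the classes $\NN^{(i)}$ partition all of $\NN$. Taking the topmost class to be a closed interval exactly as in Lemma~\ref{partition using the degrees}, and reading $\Delta$ as a power of $c$ (as is implicit in the corollary's constant), there are at most $\log_c\Delta$ such non-empty classes. Moreover, since degrees in $\NN$ never exceed $\min\{\Delta,|S|\}$, the index of whichever class I select automatically lies in the range $\{1,\ldots,\log_c|S|\}$ required by the lemma.

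First I would apply pigeonhole: as the at most $\log_c\Delta$ classes $\NN^{(i)}$ cover all $\gamma=|\NN|$ vertices, the largest of them, say $\NN^{(i^*)}$, satisfies
\[
|\NN^{(i^*)}| ~\ge~ \frac{\gamma}{\log_c\Delta}.
\]
Then I would invoke Lemma~\ref{partition using the degrees} with $i=i^*$ to obtain a subset $S'\subseteq S$ with
\[
|\UN_{S}(S')| ~\ge~ \frac{|\NN^{(i^*)}|}{2(1+c)} ~\ge~ \frac{\gamma}{2(1+c)\,\log_c\Delta}.
\]
Finally, rewriting $\log_c\Delta=\log_2\Delta/\log_2 c$ turns the right-hand side into $\frac{\log_2 c}{2(1+c)\log_2\Delta}\cdot\gamma$, which is exactly the claimed bound.

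I do not anticipate any genuine obstacle here: all the real work---running Procedure~\partition\ on a single degree class and charging $|\EEt|$ against $|\EEu|$---is already carried out inside Lemma~\ref{partition using the degrees}, so this corollary is essentially a bookkeeping step. The only point needing mild care is the exact number of degree classes: in general it is $\lceil\log_c\Delta\rceil$, which coincides with $\log_c\Delta$ when $\Delta$ is a power of $c$ and otherwise exceeds it by at most one, affecting the constant only in lower-order terms. If one wants to avoid the power-of-$c$ convention entirely, the identical argument yields $|\UN_{S}(S')|\ge \gamma/\bigl(2(1+c)\lceil\log_c\Delta\rceil\bigr)$, which differs from the stated bound only by this harmless ceiling.
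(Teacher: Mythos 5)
Your proposal is correct and follows essentially the same route as the paper: the paper likewise re-indexes the degree classes to $i\in\{1,\ldots,\log_c\Delta\}$ (closing the top interval at $[\Delta/c,\Delta]$), applies the pigeonhole principle to find a class of size at least $\gamma/\log_c\Delta$, invokes Lemma~\ref{partition using the degrees} on that class, and converts $\log_c\Delta$ to $\log_2\Delta/\log_2 c$. Your remark about the $\lceil\log_c\Delta\rceil$ ceiling is a fair observation that the paper glosses over in the same way.
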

\begin{proof}
The previous lemma implies also that for every $c > 1$ and every
$i \in \{1,2,\ldots,\log_c \Delta \}$ (rather then $\log_c|S|$),
there is a subset $S'$ of $S$ with $|\UN_{S}(S')| \ge |\NN^{(i)}|/2(1+c)$,
where $\NN^{(i)}$ denotes the set of vertices in $\NN$ with degree in
$[c^{i-1},c^i)$ for $i<\log_c\Delta$ and for $i=\log_c\Delta$ is the set of
vertices in $\NN$ with degree in $[c^{i-1},c^i]=[\Delta/c,\Delta]$.
Observe that there exists an index $j \in \{1,2,\ldots,\log_c \Delta \}$
s.t. $|\NN^{(j)}|\geq \gamma/\log_c{\Delta}$,  for this index $j$, we get
$|\UN_{S}(S')| \ge |\NN^{(j)}|/2(1+c)\geq \frac{\gamma}{ 2(1+c)\log_c{\Delta}}$.
\QED
\end{proof}
%}%APPENDI

%Denote by $\delta$ the average degree of a vertex in $\Gamma$, i.e., $\delta = (1/|\Gamma|) \sum_{v \in \Gamma} \deg(v)$.
%(As all vertex degrees are at least 1, we have $\delta \ge 1$.)
%If the average degree in $G_S$ is $d$,

The maximum of $f(c)=\log_2c/(2(1+c))$ is attained at $c\approx 3.59112$
and equals $\approx 0.20087$, hence we get the following.

\commabs
\begin{corollary} \label{samedegrees1}
In $G_S$, there is a subset $S'$ of $S$ such that
$\displaystyle |\UN_{S}(S')| \ge  \frac{0.20087}{\log_2\Delta}\cdot\gamma~.$
\end{corollary}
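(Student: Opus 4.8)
The plan is to derive Corollary \ref{samedegrees1} as an immediate consequence of Corollary \ref{samedegrees} by optimizing over the free parameter $c$. Corollary \ref{samedegrees} guarantees, for \emph{every} $c > 1$, a subset $S' \subseteq S$ with $|\UN_{S}(S')| \ge \frac{f(c)}{\log_2 \Delta}\cdot\gamma$, where $f(c) = \frac{\log_2 c}{2(1+c)}$. Since the choice of $c$ is ours to make, the best bound obtainable this way is $\frac{\sup_{c>1} f(c)}{\log_2 \Delta}\cdot\gamma$, so it suffices to locate the maximizer of $f$ on $(1,\infty)$ and evaluate $f$ there.

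First I would verify that $f$ attains an interior maximum. Note that $f(1) = 0$, that $f(c) > 0$ for all $c > 1$, and that $f(c) \to 0$ as $c \to \infty$ (since $\log_2 c$ grows far slower than $1+c$); hence $f$ must attain its supremum at an interior critical point. To find it, write $f(c) = \frac{1}{2\ln 2}\cdot\frac{\ln c}{1+c}$ and differentiate, obtaining
\[
f'(c) = \frac{1}{2\ln 2}\cdot\frac{\frac{1+c}{c} - \ln c}{(1+c)^2}.
\]
Setting the numerator to zero yields the transcendental equation $\ln c = 1 + \frac{1}{c}$.

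Next I would solve this equation numerically. The left-hand side is increasing and the right-hand side is decreasing in $c$, so the root is unique, and a direct check confirms it is $c^\ast \approx 3.59112$ (indeed $\ln(3.59112) \approx 1.2785 \approx 1 + 1/3.59112$). Moreover, the numerator of $f'$ is positive for $c < c^\ast$ and negative for $c > c^\ast$, so $c^\ast$ is the global maximizer. Substituting back gives
\[
f(c^\ast) = \frac{\log_2(3.59112)}{2\,(1 + 3.59112)} \approx \frac{1.8445}{9.1822} \approx 0.20087,
\]
and plugging $c = c^\ast$ into Corollary \ref{samedegrees} then produces a subset $S'$ with $|\UN_{S}(S')| \ge \frac{0.20087}{\log_2 \Delta}\cdot\gamma$, as claimed.

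There is essentially no conceptual obstacle: the whole argument is a single-variable calculus optimization layered on top of the already-proved Corollary \ref{samedegrees}. The only mild subtlety is that the optimal $c^\ast$ admits no closed form, so rather than exhibiting it algebraically one verifies numerically that the stated value satisfies $\ln c = 1 + 1/c$ and yields the constant $0.20087$; the monotonicity of the two sides of that equation guarantees uniqueness of the critical point and hence that it is indeed the global maximizer.
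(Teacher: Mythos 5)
Your proposal is correct and follows exactly the paper's route: the paper derives this corollary from Corollary \ref{samedegrees} by maximizing $f(c)=\log_2 c/(2(1+c))$ over $c>1$, noting the maximum is attained at $c\approx 3.59112$ with value $\approx 0.20087$ (the paper even records the same critical-point equation $\ln c = 1+1/c$ elsewhere in the appendix). Your write-up merely supplies the routine calculus details that the paper leaves implicit.
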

\commabsend

\begin{corollary}\label{samedegrees1}
Let $G=(V,E)$ be an $(\alpha,\beta)$-expander with maximum degree $\Delta$. Then it is also an $(\alpha_w,\beta_w)$-wireless expander with $\alpha_w\geq \alpha$ and $\beta_w\geq \frac{0.20087}{\log_2 {\Delta}}\cdot\beta$.
\end{corollary}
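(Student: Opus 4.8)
The plan is to combine Corollary~\ref{samedegrees}, optimized over its free parameter $c$, with the defining expansion property of $G$. Since the wireless expansion is obtained by minimizing over all admissible vertex sets, it suffices to fix an arbitrary $S \subseteq V$ with $|S| \le \alpha n$ and to exhibit a single subset $S' \subseteq S$ whose $S$-excluding unique-neighborhood satisfies $|\UN_S(S')| \ge \frac{0.20087}{\log_2 \Delta}\cdot \beta\cdot |S|$. As this argument applies verbatim to every such $S$, we simultaneously obtain $\alpha_w \ge \alpha$ and the claimed lower bound on $\beta_w$.

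First I would pass to the bipartite graph $G_S = (S,\NN,E_S)$ associated with $S$, where $\NN = \EN$ is the external neighborhood of $S$ and $\gamma = |\NN|$; recall that, as noted in Section~\ref{frame}, discarding edges internal to $S$ or to $\NN$ does not affect the expansion bounds. The expansion property of $G$ gives $\gamma = |\EN| \ge \beta\cdot|S|$. Next I would invoke Corollary~\ref{samedegrees} on $G_S$, which guarantees, for every $c > 1$, a subset $S' \subseteq S$ with $|\UN_S(S')| \ge \frac{\log_2 c}{2(1+c)}\cdot \frac{\gamma}{\log_2 \Delta}$. It then remains only to choose $c$ so as to maximize the coefficient $f(c) = \frac{\log_2 c}{2(1+c)}$: differentiating, $f'(c)=0$ reduces to the transcendental equation $\ln c = 1 + 1/c$, whose numerical solution is $c \approx 3.59112$, at which $f(c) \approx 0.20087$. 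Substituting this value of $c$ and then the bound $\gamma \ge \beta|S|$ yields
$$|\UN_S(S')| ~\ge~ \frac{0.20087}{\log_2 \Delta}\cdot \gamma ~\ge~ \frac{0.20087}{\log_2 \Delta}\cdot \beta \cdot |S|,$$
as required.

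I do not expect a genuine obstacle here: all of the combinatorial substance was already discharged in Lemma~\ref{partition using the degrees} (the analysis of Procedure~{\partition} restricted to a single degree class $\NN^{(i)}$) and in the pigeonhole step inside Corollary~\ref{samedegrees} (which isolates a degree class carrying a $1/\log_c \Delta$ fraction of $\NN$). The present statement is essentially a bookkeeping step that optimizes the constant over $c$ and then applies the expansion inequality $\gamma \ge \beta|S|$ to convert a bound stated in terms of $\gamma$ into one stated in terms of $\beta|S|$. The only point meriting care is the numerical optimization of $f(c)$, but this influences solely the value of the constant $0.20087$ and not the $1/\log_2 \Delta$ dependence, which is what matters for the qualitative conclusion that ordinary expanders with bounded maximum degree are good wireless expanders.
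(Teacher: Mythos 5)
Your proposal is correct and follows essentially the same route as the paper: pass to the bipartite graph $G_S$, invoke Corollary~\ref{samedegrees}, optimize the coefficient $f(c)=\log_2 c/(2(1+c))$ at $c\approx 3.59112$ to get the constant $0.20087$, and then apply $\gamma=|\Gamma^-(S)|\ge \beta|S|$ for every admissible $S$. The paper merely splits these steps slightly differently (stating the optimization of $f(c)$ just before the corollary), so there is nothing to add.
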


\commabs
\begin{proof} Given $S\subset V$ with $|S|\leq \alpha|V|$, write $\gamma=|\EN|$ and let $G_S=(S,\EN, e(S,\EN)$.
Note that as $G$ is an $(\alpha,\beta)$-expander, $\gamma\geq\beta|S|$ and by Corollary \ref{samedegrees1}
there is a subset $S'$ of $S$ with $|\UN_{S}(S')| \geq  \frac{0.20087}{\log_2\Delta}\cdot\gamma\geq  \frac{0.20087}{\log_2\Delta}\cdot\beta|S|$, and therefore $\beta_w\geq\frac{0.20087}{\log_2\Delta} \cdot\beta $~. 
\QED
\end{proof}
\commabsend

%%%%%%%%%%%%%%%%
\subsection{Bounds depending on the average degree}\label{Bounds depending on the average degree}
Recall that $\delta$ denotes the average degree of a vertex in $\NN=\EN$. In case $\delta$ is known, we can state a stronger bound than that of Corollary \ref{samedegrees1}, using $\delta$ in place of $\Delta$.

\begin{corollary}
\label{lower bound on gamma1 with t and c}
In $G_S$, for any   $c > 1$ and $t>1$ there is a subset $S'$ of $S$ such that $$|\UN_{S}(S')| \ge \left(1-\frac{1}{t}\right) \frac{1}{2(1+c)\log_c(t\delta)}\cdot\gamma.$$
\end{corollary}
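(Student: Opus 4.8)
The plan is to refine the degree-bucketing argument behind Corollary~\ref{samedegrees} by first discarding the few vertices of very large degree, so that the number of buckets is controlled by $t\delta$ rather than by the maximum degree $\Delta$. First I would let $\NN^{\le t\delta}\subseteq \NN$ be the set of vertices whose degree into $S$ is at most $t\delta$. Since the average degree of a vertex of $\NN=\EN$ is $\delta$, a Markov (averaging) argument shows that fewer than $\gamma/t$ vertices have degree exceeding $t\delta$, and hence
$$|\NN^{\le t\delta}| ~\ge~ \left(1-\frac1t\right)\gamma.$$

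Next I would partition $\NN^{\le t\delta}$ into degree-buckets $\NN^{(i)}$, where $\NN^{(i)}$ consists of the vertices whose degree into $S$ lies in $[c^{i-1},c^i)$, for $i=1,2,\ldots,\log_c(t\delta)$ (using the same convention as in the proof of Corollary~\ref{samedegrees} of ignoring the negligible rounding in the number of buckets). Every vertex of $\NN^{\le t\delta}$ has degree in $[1,t\delta]$, so these buckets cover $\NN^{\le t\delta}$, and by pigeonhole there is an index $j$ with
$$|\NN^{(j)}| ~\ge~ \frac{|\NN^{\le t\delta}|}{\log_c(t\delta)} ~\ge~ \left(1-\frac1t\right)\frac{\gamma}{\log_c(t\delta)}.$$

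Finally I would invoke Lemma~\ref{partition using the degrees} for this specific bucket $\NN^{(j)}$, which produces a subset $S'\subseteq S$ with $|\UN_{S}(S')|\ge |\NN^{(j)}|/\big(2(1+c)\big)$; note that the lemma only needs $\NN^{(j)}$ to be a set of vertices of roughly equal degree and is insensitive to which degree range we restrict to. Combining the two bounds yields
$$|\UN_{S}(S')| ~\ge~ \frac{|\NN^{(j)}|}{2(1+c)} ~\ge~ \left(1-\frac1t\right)\frac{1}{2(1+c)\log_c(t\delta)}\cdot\gamma,$$
which is exactly the claimed bound.

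I do not expect any genuine obstacle: the argument is structurally identical to that of Corollary~\ref{samedegrees}, the only new ingredient being the Markov step that replaces $\Delta$ by $t\delta$ at the cost of the multiplicative factor $(1-1/t)$. The single point requiring a little care is the accounting of the number of degree-buckets (whether to write $\lceil\log_c(t\delta)\rceil$ or $\log_c(t\delta)$) and the associated rounding, but this affects only constant factors and is absorbed in the stated bound, exactly as in Corollary~\ref{samedegrees}.
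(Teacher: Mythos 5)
Your proposal is correct and matches the paper's own argument essentially step for step: the paper likewise uses a Markov/averaging step to show that the set $\NN^{t\delta}$ of vertices of degree at most $t\delta$ has size at least $(1-1/t)\gamma$, buckets it into $\log_c(t\delta)$ degree classes, picks the largest class by pigeonhole, and applies Lemma~\ref{partition using the degrees} to that class. Your remark that the lemma only requires the chosen vertex set to have degrees confined to a window $[c^{i-1},c^i)$ is exactly the observation the paper relies on when extending the lemma from $\log_c|S|$ buckets to $\log_c(t\delta)$ buckets.
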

\def\APPENDJ{
%\Proof
Lemma \ref{partition using the degrees} implies also that for any   $c,t > 1$ and any $i \in \{1,2,\ldots,\log_c(t\delta)\}$, there is a subset $S'$ of $S$ with $|\UN_{S}(S')| \ge |\NN^{(i)}|/2(1+c)$,
where $\NN^{(i)}$ denotes the set of vertices in $\NN$ with degree in $[c^{i-1},c^i)$ for $i<\log_c(t\delta)$ and for $i=\log_c(t\delta)$ is the set of vertices in $\NN$ with degree in $[c^{i-1},c^i]=[t\delta/c,t\delta]$. Observe that there exists an index $j \in \{1,2,\ldots,\log_c (t\delta) \}$ s.t. $|\NN^{(j)}|\geq |\NN^{t\delta}|/\log_c(t\delta)\geq |\NN|(1-1/t)(1/\log_c(t\delta))=\gamma(1-1/t)(1/\log_c(t\delta))$, where $\NN^{t\delta}$ is the set of vertices with degree at most $t\delta$. For this index $j$, Lemma \ref{partition using the degrees} implies $|\UN_{S}(S')| \ge |\NN^{(j)}|/2(1+c)\geq |\NN^{t\delta}|/(2(1+c)\log_c(t\delta))\geq \gamma(1-1/t)(1/(2(1+c)\log_c(t\delta)))$ ~. \inQED
} %APPENDJ

\begin{corollary}
\label{finding t and c}
In $G_S$, for every $\epsilon>0$, and for sufficiently large\footnote{$\delta$ that that satisfies $\epsilon\ln(\delta)-\ln(\ln\delta)-\ln(1+\epsilon)-1\geq 0$ is enough.} $\delta$, there is a subset $S'$ of $S$ such that $|\UN_{S}(S')| \ge \frac{2.0087}{(1+\epsilon)\log_2(\delta)}\cdot\gamma.$
\end{corollary}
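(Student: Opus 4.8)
The plan is to derive this directly from Corollary~\ref{lower bound on gamma1 with t and c} by optimizing over its two free parameters $c$ and $t$. First I would rewrite the coefficient there in base-$2$ form: since $\log_c(t\delta)=\log_2(t\delta)/\log_2 c$, Corollary~\ref{lower bound on gamma1 with t and c} gives, for every $c>1$ and $t>1$,
$$|\UN_{S}(S')| ~\ge~ \left(1-\frac{1}{t}\right)\cdot\frac{f(c)}{\log_2(t\delta)}\cdot\gamma, \qquad\text{where } f(c)=\frac{\log_2 c}{2(1+c)}.$$
Thus the leading constant is governed precisely by the function $f(c)$ whose maximum was already pinned down in the remark preceding Corollary~\ref{samedegrees} (attained at $c\approx 3.59112$, with $f(c)\approx 0.20087$). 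I would therefore fix $c=c^\ast$ to be this maximizer, so that the numerator becomes the best achievable constant $f(c^\ast)\approx 0.20087$, and all that remains is to eliminate the gap between $\log_2(t\delta)$ and the desired $\log_2\delta$ in the denominator.

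Next I would choose $t$ as a slowly growing function of $\delta$, the natural choice being $t=\ln\delta$. The point is that this single choice simultaneously neutralizes both sources of loss relative to the target $\tfrac{0.20087}{(1+\epsilon)\log_2\delta}\cdot\gamma$: on one hand $1-\tfrac1t=1-\tfrac1{\ln\delta}\to 1$, and on the other $\log_2(t\delta)=\log_2\delta+\log_2\ln\delta$, where the additive term $\log_2\ln\delta$ is $o(\log_2\delta)$ and hence eventually absorbed by the factor $(1+\epsilon)$. Concretely, it suffices to verify
$$\left(1-\frac{1}{t}\right)\frac{1}{\log_2(t\delta)} ~\ge~ \frac{1}{(1+\epsilon)\log_2\delta},$$
which, after clearing denominators and passing to natural logarithms, reduces to an elementary inequality of the form $\epsilon\ln\delta \ge \ln\ln\delta + O(1)$ — precisely the threshold on $\delta$ recorded in the footnote — and holds for all sufficiently large $\delta$.

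The only genuinely delicate point is this balancing step: the two parameters pull in opposite directions, since increasing $t$ drives $1-\tfrac1t$ toward $1$ but inflates $\log_2(t\delta)$. One must therefore choose $t$ growing slowly enough that $\log_2 t=o(\log_2\delta)$ yet fast enough that $1/t\to 0$; $t=\ln\delta$ is one clean point in the admissible window, and indeed any $t$ with $1\ll t\ll\delta^{\epsilon}$ would work, with the exact admissible range of $\delta$ depending on how tightly one bounds the $(1-1/t)$ factor. Everything else is a routine substitution. I note in passing that the constant produced by this argument is $\max_{c>1} f(c)\approx 0.20087$, matching Corollary~\ref{samedegrees}; the factor $(1+\epsilon)$ is exactly the price paid for replacing $\log_2(t\delta)$ by $\log_2\delta$ in the denominator, which is the whole content of the corollary.
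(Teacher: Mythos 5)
Your proposal matches the paper's proof in essentially every respect: both arguments fix $c=c^{\ast}\approx 3.59112$ to maximize $f(c)=\log_2 c/(2(1+c))$, and both take $t\approx\ln\delta$ to control $g(t)=(1-1/t)/\log_2(t\delta)$ (the paper works with the exact maximizer of $g$, which it bounds by $(1+\epsilon)\ln\delta$ using the footnote's condition, whereas you substitute $t=\ln\delta$ directly and absorb the resulting $\ln\ln\delta+O(1)$ term into $\epsilon\ln\delta$ --- an immaterial difference that at most shifts the ``sufficiently large'' threshold slightly). One remark: as you correctly observe, the constant this argument produces is $0.20087$, not the $2.0087$ printed in the corollary statement; the paper's own derivation also yields $0.20087$, so the stated constant appears to be a factor-of-ten typo.
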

\def\APPENDK{
%\Proof
Let $\epsilon>0$ and let $f(c)= \log_2(c)/(2(1+c))$ and
$g(t)=(1-1/t)(1/\log_2(t\delta))$. By corollary
\ref{lower bound on gamma1 with t and c}, for any $c > 1$ and $t>1$ there is
a subset $S'$ of $S$ such that $|\UN_{S}(S')| \ge \gamma \cdot f(c) g(t)$.
Note that $f(c)$ and $g(t)$ are both positive in $c>1$, $t>1$, thus,
\begin{eqnarray}\label{maximum of f and g}
|\UN_{S}(S')| &\ge& \gamma \cdot\max\{ f(c)\cdot g(t)~|~ c>1, t>1\}\nonumber
\\&=&
\gamma \cdot\max\{\cdot f(c)~|~c>1\}\cdot\max\{ g(t)~|~t>1\},
\end{eqnarray}
and it is enough to find $\max \{f(c)~|~ c>1\}$ and $\max \{g(t)~|~ t>1\}$.
Inspecting the derivative $f'(c)=(1/\ln2)\frac{2(1+c)/c-2\ln c}{(2+2c)^2}$
reveals that $f'(c)=0$ if and only if $1+1/c=\ln c$, and the maximum
is attained at $c^{max}\approx 3.59112$ and equals $f(c^{max})\approx 0.20087$.
On the other hand,
$g'(t)=\ln2\left((1/t^2)(1/\ln(t\delta))-(1-1/t)(1/(t\ln^2(t\delta)))\right)
=(\ln2/(t^2\ln^2(t\delta))(\ln(t\delta)-t+1)$, and $g'(t)=0$ if and only if
$h(t)=\ln(t\delta)-t+1=0$.
\\ {\bf DP: What is $h$ here?} \\
Note also that $g'(t)\geq 0$ when $h(t)\geq0$ and $g'(t)\leq 0$ when $h(t)\leq0$. As $h(t)$ is a continuous decreasing function, $h(t)=0$ yields a maximum point of $g(t)$. Plugging $t=\ln\delta$ we get $h(\ln\delta)<0$,  and by the assumption, $h((1+\epsilon)\ln\delta)=\epsilon\ln\delta-\ln(\ln\delta)-\ln(1+\epsilon)-1\geq 0$. Therefore, the maximum of $g(t)$ is attained at $t^{max}\leq (1+\epsilon)\ln\delta$, and $\ln(t^{max}\delta)-t^{max}-1=0$. Hence, in that point, $g(t^{max})=\ln2/t^{max}\geq \ln2/((1+\epsilon)\ln \delta)=1/((1+\epsilon)\log_2\delta)$. In conclusion, plugging $f(c^{max})$ and $g(t^{max})$ in Eq. (\ref{maximum of f and g}), we get
\begin{eqnarray*}
|\UN_{S}(S')| &\ge& \gamma \cdot f(c^{max})\cdot g(t^{max})
\\&\geq& \gamma\cdot \frac{2.0087}{(1+\epsilon)\log_2(\delta)} ~.\inQED
\end{eqnarray*}
} %APPENDK

%For every $S\subset V$ denote by $\delta_S$ the average degree of a vertex in $\NN=\EN$, i.e., $\delta_S=(1/|\NN|)\sum_{v\in \NN}\deg (v,S)$ and denote $\bar{\delta}=\max\{\delta_S~|~ S\subset V, |S|\leq \alpha n\}$.

\begin{corollary}\label{lower bound in t, c and delta}
Let $G=(V,E)$ be an $(\alpha,\beta)$-expander with maximum degree $\Delta$ and let $\epsilon>0$. Suppose that for every $S$, $\delta_S$ is large\footnote{i.e., satisfies $\epsilon\ln(\delta_S)-\ln(\ln\delta_S)-\ln(1+\epsilon)-1\geq 0$.} enough. Then $G$ is also an $(\alpha_w,\beta_w)$-wireless expander with $\alpha_w\geq \alpha$ and $$\beta_w\geq \frac{2.0087}{(1+\epsilon)\log_2(\bar{\delta})}\cdot\beta~.$$
\end{corollary}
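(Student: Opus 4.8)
\noindent The plan is to deduce this uniform bound on $\beta_w$ directly from the per-instance estimate of Corollary \ref{finding t and c}, applied to the bipartite graph associated with each candidate set $S$, and then to pass from the set-specific average degree $\delta_S$ to the global maximum $\bar{\delta}$ via monotonicity.

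First I would fix an arbitrary set $S \subseteq V$ with $|S| \le \alpha n$ and consider the bipartite graph $G_S = (S, \EN, e(S,\EN))$; by construction, the average degree of a vertex on the $\NN$-side of $G_S$ is exactly $\delta_S$, and we write $\gamma = |\EN|$. Since $G$ is an $(\alpha,\beta)$-expander, the expansion property gives $\gamma \ge \beta \cdot |S|$. By hypothesis $\delta_S$ is large enough to meet the requirement of Corollary \ref{finding t and c} (i.e., it satisfies $\epsilon\ln(\delta_S)-\ln(\ln\delta_S)-\ln(1+\epsilon)-1\ge 0$), so that corollary applies to $G_S$ and yields a subset $S' \subseteq S$ with
$$|\UN_{S}(S')| ~\ge~ \frac{2.0087}{(1+\epsilon)\log_2(\delta_S)}\cdot\gamma.$$

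Next, recalling the definition $\bar{\delta} = \max\{\delta_S \mid S \subseteq V,\ |S| \le \alpha n\}$ and using that $x \mapsto 1/\log_2 x$ is decreasing for $x > 1$, the inequality $\delta_S \le \bar{\delta}$ gives $1/\log_2(\delta_S) \ge 1/\log_2(\bar{\delta})$. Combining this with $\gamma \ge \beta|S|$ yields
$$|\UN_{S}(S')| ~\ge~ \frac{2.0087}{(1+\epsilon)\log_2(\bar{\delta})}\cdot\beta\cdot|S|.$$
Since $S$ was an arbitrary set of size at most $\alpha n$, this shows that $G$ is an $(\alpha_w,\beta_w)$-wireless expander with $\alpha_w = \alpha$ and $\beta_w \ge \frac{2.0087}{(1+\epsilon)\log_2(\bar{\delta})}\cdot\beta$, as claimed.

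Finally, I expect no genuine obstacle here, since all of the difficulty is already encapsulated in Corollary \ref{finding t and c}; the single point that must be handled with care is the \emph{direction} of the monotonicity step, namely the observation that replacing $\delta_S$ by the larger quantity $\bar{\delta}$ only weakens the lower bound, so that the resulting estimate holds uniformly over all choices of $S$ and therefore bounds $\beta_w$ from below.
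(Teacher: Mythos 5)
Your proposal is correct and follows essentially the same route as the paper: fix an arbitrary $S$ with $|S|\le\alpha n$, apply Corollary \ref{finding t and c} to $G_S=(S,\EN,e(S,\EN))$ using the largeness hypothesis on $\delta_S$, and then chain $\gamma\ge\beta|S|$ with $\log_2(\delta_S)\le\log_2(\bar{\delta})$. Your explicit attention to the direction of the monotonicity step is a welcome clarification that the paper leaves implicit, but the argument is the same.
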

%\commabs
\begin{proof}
Given $S\subset V$ with $|S|\leq \alpha|V|$, write $\gamma=|\EN|$ and let $G_S=(S,\EN, e(S,\EN)$. Note that as $G$ is an $(\alpha,\beta)$-expander, $\gamma\geq\beta|S|$ and by Corollary \ref{finding t and c}, there is a subset $S'$ of $S$ with
\begin{eqnarray*}
|\UN_{S}(S')| &\ge&  \frac{2.0087}{(1+\epsilon)\log_2(\delta_S)}\cdot\gamma\\&\geq&  \frac{2.0087}{(1+\epsilon)\log_2(\delta_S)}\cdot\beta|S|\geq  \frac{2.0087}{(1+\epsilon)\log_2(\bar{\delta})}\cdot\beta|S|~.
\end{eqnarray*}
Hence $\beta_w\geq \frac{2.0087}{(1+\epsilon)\log_2(\bar{\delta})}\cdot\beta.$
\QED
\end{proof}
%\commabsend

\begin{lemma}
\label{lem:c and t}
Suppose there exists $c>1$ and $t>1$ such that for every subset $\NN'$ of $\NN$ in $G_S$ of sufficiently large size (say, of size at least $(\gamma/2)(1-1/t)$), the average degree $\delta'$ of a vertex in $\NN'$ is at least $t\delta/c$. Then there is a subset $S'$ of $S$ such that $$|\UN_{S}(S')| \ge \frac{\gamma}{2(1+ c)}\left(1-\frac{1}{t}\right).$$
\end{lemma}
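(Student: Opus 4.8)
The plan is to restrict attention to the low-degree part of $\NN$ and run Procedure {\partition} on it, using the set $\SETu$ it produces as the desired subset $S'$. Concretely, let $\NN^{t\delta}$ denote the set of vertices of $\NN$ whose degree into $S$ is at most $t\delta$. Since the average degree of a vertex in $\NN$ is $\delta$, a Markov-type argument shows that the vertices of degree exceeding $t\delta$ number fewer than $\gamma/t$, and hence $|\NN^{t\delta}| \ge \gamma(1 - 1/t)$. I would then apply Procedure {\partition} with $\NN^{t\delta}$ in place of $\NN$, obtaining a partition of $\NN^{t\delta}$ into $\NNu^{t\delta}, \GAMm^{t\delta}, \NNt^{t\delta}$ and of $S$ into $\SETu, \SETt$ satisfying conditions (P1)--(P4). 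Taking $S' = \SETu$, condition (P1) guarantees that each vertex of $\NNu^{t\delta}$ has a unique neighbor in $\SETu$, so $\NNu^{t\delta} \subseteq \UN_S(\SETu)$ and it suffices to prove $|\NNu^{t\delta}| \ge \frac{\gamma}{2(1+c)}(1 - 1/t)$.

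The analysis splits according to the size of $\NNt^{t\delta}$. If $|\NNt^{t\delta}| < (\gamma/2)(1 - 1/t)$ (which in particular subsumes the termination case $\NNt^{t\delta} = \emptyset$), then using (P3) in the form $|\NN^{t\delta}| = |\NNu^{t\delta}| + |\GAMm^{t\delta}| + |\NNt^{t\delta}| \le 2|\NNu^{t\delta}| + |\NNt^{t\delta}|$, I would subtract to obtain $2|\NNu^{t\delta}| \ge |\NN^{t\delta}| - |\NNt^{t\delta}| \ge (\gamma/2)(1 - 1/t)$, so $|\NNu^{t\delta}| \ge (\gamma/4)(1-1/t)$, which already beats the target since $1/(2(1+c)) < 1/4$ for $c > 1$.

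The interesting case is $|\NNt^{t\delta}| \ge (\gamma/2)(1 - 1/t)$; here $\NNt^{t\delta}$ is nonempty, so the procedure terminated with all gains nonpositive and (P4) gives $|\EEt| \le 2|\EEu|$. This is exactly where the hypothesis enters: since $\NNt^{t\delta}$ is a subset of $\NN$ of size at least $(\gamma/2)(1 - 1/t)$, its average degree is at least $t\delta/c$, and because (P2) ensures the vertices of $\NNt^{t\delta}$ have no neighbor in $\SETu$, all of their edges land in $\EEt$, giving $|\EEt| \ge (t\delta/c)\,|\NNt^{t\delta}|$. On the other side, (P1) says each vertex of $\NNu^{t\delta}$ spends exactly one edge on $\SETu$, so being of degree at most $t\delta$ it contributes at most $t\delta - 1 < t\delta$ edges to $\EEu$, whence $|\EEu| \le t\delta\,|\NNu^{t\delta}|$. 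Chaining these through $|\EEt| \le 2|\EEu|$ and cancelling the common factor $t\delta$ yields $|\NNt^{t\delta}| \le 2c\,|\NNu^{t\delta}|$; combining this once more with (P3) gives $|\NN^{t\delta}| \le 2(1+c)\,|\NNu^{t\delta}|$, and therefore $|\NNu^{t\delta}| \ge \frac{|\NN^{t\delta}|}{2(1+c)} \ge \frac{\gamma}{2(1+c)}(1 - 1/t)$, as required.

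I expect the only delicate point to be the bookkeeping of which degree is being measured: the hypothesis concerns degrees into all of $S$, while Procedure {\partition} tracks neighbors in $\SETt$ and $\SETu$ separately, and the argument hinges on the facts that a $\NNt^{t\delta}$-vertex's full $S$-degree coincides with its $\EEt$-contribution (via (P2)), whereas a $\NNu^{t\delta}$-vertex loses exactly one unit of its bounded degree to $\SETu$ (via (P1)). Keeping these two accountings aligned, and invoking the hypothesis only when $\NNt^{t\delta}$ is guaranteed to be large enough, is the crux; the remaining manipulations are elementary.
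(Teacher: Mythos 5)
Your proposal is correct and follows essentially the same route as the paper's proof: restrict to the vertices of degree at most $t\delta$ (which by Markov's inequality form at least a $(1-1/t)$-fraction of $\NN$), run Procedure {\partition} on that set, and split on whether $|\NNt^{t\delta}|$ is below or above $(\gamma/2)(1-1/t)$, using the hypothesis together with (P2) to lower-bound $|\EEt|$ and (P1) to upper-bound $|\EEu|$ in the second case. The only cosmetic difference is that you bound $|\EEu|$ by $t\delta\,|\NNu^{t\delta}|$ where the paper uses $(t\delta-1)|\NNu^{t\delta}|$, which does not affect the final constant.
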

\begin{proof}
We apply Procedure $\partition$, but consider the vertex set $\NN^{t\delta}$  of vertices in $\NN=\EN$ with degree at most $t\delta$. Thus we obtain a partition of $\NN^{t\delta}$ rather than $\NN$ into $\NNu^{t\delta},\GAMm^{t\delta},\NNt^{t\delta}$
and a partition of $S$ into $\SETu$ and $\SETt$ satisfying the partition conditions $(P1)-(P4)$.
Next, we show that $|\NNu^{t\delta}| \ge  |\NN^{t\delta}|/2(1+ c)$. This complete the proof as $|\NN^{t\delta}|\geq \gamma(1-1/t)$.

If $|\NNt^{t\delta}| < (\gamma/2)(1-1/t)$, as $|\NN^{t\delta}|\geq \gamma(1-1/t)$ and by using partition condition $(P3)$ we get $2|\NNu^{t\delta}|\geq |\NNu^{t\delta}|+|\GAMm^{t\delta}|\geq (\gamma/2)(1-1/t)$, hence $|\NNu^{t\delta}|\geq(\gamma/4)(1-1/t)\geq (\gamma/(2(1+ c)))(1-1/t)$. Otherwise, if $|\NNt^{t\delta}| \geq (\gamma/2)(1-1/t)$, in particular nonempty and it must hold that $|\EEt|  \le 2|\EEu|$.
By definition, each vertex in $\NN^{t\delta}$ has at most $t\delta$ neighbors in $S$. Condition $(P1)$ implies that each vertex in $\NNu^{t\delta}$ has a
single neighbor in $\SETu$, so it has at most $t\delta-1$ neighbors in $\SETt$,
yielding $|\EEu| ~\le~ (t\delta-1)  |\NNu^{t\delta}|.$
By condition $(P2)$, each vertex of $\NNt^{t\delta}$ is incident only on edges of $\EEt$.
Since $|\NNt^{t\delta}|\geq (\gamma/2)(1-1/t)$, the average degree in this set is at least $t\delta /c$. Therefore, $|\EEt| \ge (t\delta/c)|\NNt^{t\delta}|$.
It follows that $$\frac{t\delta}{c} |\NNt^{2\delta}| ~\le~ |\EEt| ~\le~ 2|\EEu| ~\le~ 2(t\delta-1)   |\NNu^{t\delta}|.$$
Hence \begin{eqnarray*}
2\left(\frac{t\delta}{c}+t\delta\right)  |\NNu^{t\delta}| &\geq& \left(2\cdot\frac{t\delta}{c} + 2(t\delta-1)\right)  |\NNu^{t\delta}|
\\ &\ge& \frac{t\delta}{c}\cdot(|\NNu^{t\delta}| + |\GAMm^{t\delta}| + |\NNt^{t\delta}|)
 \\ &=&\frac{t\delta}{c} |\NN^{t\delta}|,
\end{eqnarray*}
which yields $$|\NNu^{t\delta}|  ~\ge~ \frac{ |\NN^{t\delta}|}{2(1+ c)} . \inQED$$
\end{proof}

\begin{corollary}
Let $G=(V,E)$ be an $(\alpha,\beta)$-expander and suppose there exists $c>1$ and $t>1$ such that for every subset $S$ of $V$ of size $|S|\leq \alpha n$ and for every subset $M$ of $\EN$ of sufficiently large size (say, of size at least $(|\EN|/2)(1-1/t)$), the average degree $\delta'$ of a vertex in $M$ is at least $(t\delta_S)/c$. Then $G$ is also an $(\alpha_w,\beta_w)$-wireless expander with $\alpha_w\geq \alpha$ and $$\beta_w\geq \frac{\beta}{4(1+c)}\left(1-\frac{1}{t}\right).$$
\end{corollary}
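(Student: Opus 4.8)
The plan is to derive this corollary as a direct packaging of Lemma~\ref{lem:c and t}, passing from the general expander $G$ to the bipartite graph $G_S$ associated with each candidate set. First I would fix an arbitrary $S\subseteq V$ with $|S|\le \alpha n$ and form the bipartite graph $G_S=(S,\EN,e(S,\EN))$ that records exactly the edges between $S$ and its external neighborhood. As noted in the framework of Section~\ref{frame}, edges internal to $S$ or to $\EN$ play no role in any of the expansion quantities, so this reduction is lossless. Writing $\gamma=|\EN|$, the $(\alpha,\beta)$-expansion of $G$ immediately gives $\gamma\ge\beta|S|$, and $\delta_S$ is by definition the average degree of a vertex of $\EN$ in $G_S$.

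Next I would verify that the hypothesis of the corollary is precisely the premise of Lemma~\ref{lem:c and t} for this $G_S$. Identifying $\NN$ with $\EN$ and $\delta$ with $\delta_S$, the assumption that every subset $M\subseteq\EN$ of size at least $(\gamma/2)(1-1/t)$ has average degree at least $(t\delta_S)/c$ is exactly the assumption the lemma requires (the size threshold $(|\EN|/2)(1-1/t)$ and the degree threshold $t\delta/c$ match verbatim once $\delta=\delta_S$). Applying Lemma~\ref{lem:c and t} then produces a subset $S'\subseteq S$ with $|\UN_S(S')|\ge \frac{\gamma}{2(1+c)}\left(1-\frac{1}{t}\right)$.

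Finally I would combine this with $\gamma\ge\beta|S|$ to obtain $|\UN_S(S')|\ge \frac{\beta|S|}{2(1+c)}\left(1-\frac{1}{t}\right)\ge \frac{\beta|S|}{4(1+c)}\left(1-\frac{1}{t}\right)$, where the last step merely halves the bound to match the (deliberately conservative) constant in the statement. Since this holds for every $S$ with $|S|\le\alpha n$, I would conclude that $\alpha_w\ge\alpha$ and $\beta_w\ge \frac{\beta}{4(1+c)}\left(1-\frac{1}{t}\right)$, as claimed.

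I do not expect a genuine obstacle here: all of the combinatorial content lives in Lemma~\ref{lem:c and t} (and, underneath it, in Procedure~\partition), and the corollary is only a translation from the bipartite setting to a general expander via the standard reduction to $G_S$. The one point worth checking carefully is the bookkeeping of the two size/average-degree thresholds, which coincide exactly; the factor-of-two gap between the lemma's bound $\frac{\gamma}{2(1+c)}(1-1/t)$ and the corollary's claim $\frac{\beta}{4(1+c)}(1-1/t)$ leaves comfortable slack, so no delicate estimation is needed.
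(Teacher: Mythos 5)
Your proposal is correct and follows essentially the same route the paper intends: the paper's own (one-line) proof simply says it "follows similar lines" as the earlier corollary, namely reduce to the bipartite graph $G_S$, apply Lemma~\ref{lem:c and t}, and use $|\EN|\ge\beta|S|$. Your observation that the direct argument actually yields the stronger constant $\frac{\beta}{2(1+c)}(1-\frac{1}{t})$, with the stated $\frac{\beta}{4(1+c)}(1-\frac{1}{t})$ following a fortiori, is accurate.
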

\begin{proof}
The proof follows similar lines as those in the proof of Corollary \ref{lower bound in t, c and delta}.
\QED
\end{proof}

%\begin{Remark}
%We can prove the previous Lemmas using a probabilistic argument, similar to that by Merav, with a tweak:
%The probability of an element to be taken into $S'$ should be inverse-linear to the ``common degree'' (rather than the average degree).
%The disadvantage is that the constants will deteriorate, but the advantage is that such a subset $S'$ can be ``computed'' in a single communication round, assuming the relevant information
%about degrees in the network is known to the processors.
%\end{Remark}

%%%%%%%%%%%%%%%%%%%%%%%%%%%%%%%%%%%
\subsubsection{Near-optimal bounds}

\begin{lemma}
\label{tightlemma}
%Denote by $\delta$ the average degree of a vertex in $\Gamma$, i.e., $\delta = (1/|\Gamma|) \sum_{v \in \Gamma} \deg(v)$.
%(As all vertex degrees are at least 1, we have $\delta \ge 1$.)
%If the average degree in $G_S$ is $d$,
In $G_S$ there is a subset $S'$ of $S$ with $|\UN_{S}(S')| \ge \gamma /(9 \log (2\delta))$.
\end{lemma}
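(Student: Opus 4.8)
The plan is to reduce everything to the degree-bucketing machinery already developed for Procedure \partition{} and then to optimize the geometric base and the degree threshold so that the only loss is a $\log(2\delta)$ factor with the stated constant. Throughout I work in the bipartite graph $G_S=(S,\NN,E_S)$ with $\gamma=|\NN|$ and $\delta$ the average degree of a vertex of $\NN$; the goal is a \emph{single} subset $S'\subseteq S$ whose $S$-excluding unique-neighborhood captures a $1/(9\log(2\delta))$-fraction of $\NN$, matching (up to the constant) the upper bound forced by the core graph of Lemma \ref{worstcase1}.

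First I would restrict attention to the low-degree part of $\NN$: writing $\NN^{\le t\delta}$ for the set of $v\in\NN$ with $\deg(v,S)\le t\delta$, Markov's inequality gives $|\NN^{\le t\delta}|\ge(1-1/t)\gamma$ for any $t>1$. Next I would split $\NN^{\le t\delta}$ into geometric degree classes $\NN^{(i)}=\{v:\deg(v,S)\in[c^{i-1},c^i)\}$ of ratio $c>1$; there are at most $\log_c(t\delta)$ nonempty classes, so the densest class $\NN^{(j)}$ satisfies $|\NN^{(j)}|\ge(1-1/t)\gamma/\log_c(t\delta)$. The point of working inside one class is that all its degrees agree up to the factor $c$, so running Procedure \partition{} on $\NN^{(j)}$ and invoking the termination dichotomy exactly as in Lemma \ref{partition using the degrees} produces a set $\SETu$ with $|\UN_{S}(\SETu)|\ge|\NN^{(j)}|/(2(1+c))$. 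Chaining these three steps reproduces Corollary \ref{lower bound on gamma1 with t and c}, namely $|\UN_{S}(S')|\ge(1-1/t)\,\gamma/(2(1+c)\log_c(t\delta))$.

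It then remains to choose the two free parameters. I would fix $c$ near the maximizer $c^{\ast}\approx 3.59$ of $\log c/(2(1+c))$, and pick the threshold multiplier $t$ as a function of $\delta$: for large $\delta$ a slowly growing $t$ makes both $(1-1/t)\to 1$ and $\log_c(t\delta)/\log_c(2\delta)\to 1$, which drives the constant well below $9$, while for $\delta$ equal to a small constant I would instead invoke the average-degree bound $|\UN_{S}(S')|\ge\gamma/(8\delta)$ of Lemma \ref{constructive lemma}, which already beats $\gamma/(9\log(2\delta))$ whenever $\delta/\log(2\delta)\le 9/8$. Comparing the two estimates across the whole range $\delta\ge 1$ and keeping the better one at each $\delta$ is what should yield the uniform constant.

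The hard part will be the constant, not the logarithm: the $\log(2\delta)$ dependence falls out of bucketing immediately, but the worst case sits at \emph{moderate} $\delta$ (around $\delta\approx 3$), where the restriction loss $(1-1/t)$, the per-class loss $2(1+c)$, and the class count $\log_c(t\delta)$ all bite simultaneously, so that a careless choice overshoots $9$. Pushing the constant down to $9$ there seems to require squeezing the analysis, for instance peeling off the degree-$1$ vertices of $\NN$ (which lie in $\UN_{S}(S)$ for free and hence need not be bucketed), sharpening the $2(1+c)$ factor in the partition dichotomy, or optimizing $c$ and $t$ jointly rather than separately. I expect verifying that some such refinement holds simultaneously for all $\delta\ge 1$ to be the crux of the argument.
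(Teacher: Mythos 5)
Your route is genuinely different from the paper's (which proves the lemma by strong induction on $\gamma$: run Procedure {\partition} once, then either recurse on the subgraph induced by $\SETt$ and $\NNt$ when $\gamma'/\log(2\delta') \ge \gamma/\log(2\delta)$, or close the bound directly via the edge count $3\delta'\gamma' = 3|\EEt| \le 2(|\EEu|+|\EEt|) \le 2\delta\gamma$ together with $|\NNu| \ge (\gamma-\gamma')/2$). But your proposal has a genuine gap, and it sits exactly where you suspect it does: the constant. The bucketing chain gives $(1-1/t)\cdot\frac{\log_2 c}{2(1+c)}\cdot\frac{\gamma}{\log_2(t\delta)}$, and since the $c$- and $t$-dependences factor, joint optimization cannot beat $0.20087\cdot\max_t\frac{1-1/t}{\log_2(t\delta)}$. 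At $\delta=3$ the target is $\frac{1}{9\log_2 6}\approx\frac{1}{23.26}$, while the optimized bucketing bound tops out near $4.673\cdot 0.211\approx 0.985$ times the target --- strictly short, and using the correct $\lceil\log_c(t\delta)\rceil$ bucket count in the pigeonhole makes it worse still. Your fallback for small $\delta$, the $\gamma/(8\delta)$ bound of Lemma \ref{constructive lemma}, requires $9\log_2(2\delta)\ge 8\delta$, which fails for $\delta=3$ ($23.26 < 24$). So both of your concrete tools provably miss the stated constant in the moderate-degree window around $\delta\approx 3$, and the ``refinements'' you list (peeling degree-$1$ vertices, sharpening $2(1+c)$) are left unexecuted, with no argument that any of them closes the $1.5$--$3\%$ deficit uniformly.

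Since this lemma lives in the section whose entire purpose is improved explicit constants, establishing $\Omega(\gamma/\log(2\delta))$ with an unspecified constant does not prove the statement. The paper's inductive argument is what actually delivers the $9$: the case analysis there reduces everything to the single inequality $\log(\gamma/3)-\log(\gamma'/2)\le\log(6/7)\le -2/9$, which is where the constant is extracted. If you want to salvage your approach you would need to carry out one of your proposed refinements quantitatively over the whole range $\delta\ge 1$; as written, the proof is incomplete at its acknowledged crux.
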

\begin{proof}
We prove the existence of vertex sets $\SETu \subseteq S$ and $\NNu \subseteq \NN=\EN$,
such that $|\NNu| \ge  \gamma/(9 \log (2\delta))$ and every vertex of $\NNu$ has a unique neighbor in $\SETu$.
The proof is by induction on $\gamma$, for all values of $\delta \ge 1$. (Since $\delta \ge 1$, we have $\log (2\delta) \ge 1$.)
\\\emph{Basis: $\gamma \le 9$.}  Let $v$ be an arbitrary vertex of $S$ with at least one neighbor in $\NN$,
let $\SETu = \{v\}$, and let $\NNu$  be the (non-empty) neighborhood of $v$.
We thus have $|\NNu| \ge 1 \ge \gamma/(9\log (2\delta))$.
\\\emph{Induction step: Assume the correctness of the statement for all smaller values of $\gamma$, and prove it for $\gamma$.}
We apply Procedure {\partition} (with the bipartite graph induced by the sets $S$ and $\NN$).
If the procedure terminates because $\NNt = \emptyset$,
then we have $|\NNu| ~\ge~  |\NN|/2$ (cf.\ Equation (\ref{firstcase})).

%We henceforth assume that the procedure terminates because $\gain(v) \le 0$, for all $v \in \SETt$.
We henceforth assume that $\NNt \ne \emptyset$, i.e., $\gamma' = |\NNt| \ge 1$.
In particular, it must hold that $|\EEt|  \le 2|\EEu|$.
Denote by $\delta'$ the average degree of a vertex in $\NNt$, counting only neighbors that belong to $\SETt$.
By partition condition $(P2)$, the entire neighborhood of $\NNt$ is contained in $\SETt$; confusing as it might be, we do {\em not} make use of this property here.
We do use, however, another property guaranteed by partition condition $(P2)$: Each vertex of $\NNt$ has at least one neighbor in $\SETt$, which implies that $\delta' \ge 1$, thus $\log (2\delta') \ge 1$. Since $\NNt$ is non-empty, it must hold that $|\EEt| \ge 1$.
Hence $|\EEu| \ge |\EEt| /2 \ge 1/2$, yielding $|\EEu| \ge 1$.
Consequently, we have $|\NNu| \ge 1$, which in turn yields $1 \le \gamma' \le \gamma-1$.

Suppose first that $\gamma'/ \log (2\delta') \ge \gamma / \log (2\delta)$.
By the induction hypothesis for $\gamma'$ (restricting ourselves to the subgraph of $G_S$ induced by the vertex sets $\SETt$ and $\NNt$),
we conclude that there is a subset $\tilde S$ of $\SETt$ with
$|\UN_{\SETt}(\tilde S) \cap \NNt| \ge \gamma'/(9 \log (2\delta'))$, yielding
$$ |\UN_{S}(\tilde S)| ~\ge~ |\UN_{\SETt}(\tilde S) \cap \NNt| ~\ge~  \frac{\gamma'}{9 \log (2\delta')} ~\ge~ \frac{\gamma}{9 \log (2\delta)}.$$

We may henceforth assume that \begin{equation} \label{bound_gamma1} \frac{\gamma'}{ \log (2\delta')} ~<~ \frac{\gamma}{\log (2\delta)}.\end{equation}
Observe that $|\EEu| + |\EEt| \le |E_S| = \delta \cdot \gamma$.
%$\gamma  = |\NNu| + |\GAMm| + \gamma'$ and
%and $|\NNu| \ge |\GAMm|$.
By definition, $|\EEt| = \delta' \cdot \gamma'$.
It follows that
$$3\delta' \cdot \gamma' ~=~ 3|\EEt| ~\le~  2(|\EEu| + |\EEt|) ~\le~ 2 \delta \cdot \gamma,$$
yielding
\begin{equation} \label{bound_logdelta1}
\log (2\delta') ~\le~ \log (2\delta) + \log\left(\frac{\gamma}{3}\right) - \log\left( \frac{\gamma'}{2}\right).
\end{equation}
Plugging Equation (\ref{bound_logdelta1}) into Equation (\ref{bound_gamma1}), we obtain
\begin{equation} \label{eq:prefinale}
\gamma' ~<~  \frac{\gamma}{\log (2\delta)} \left(\log (2\delta) + \log \left(\frac{\gamma}{3}\right) - \log \left(\frac{\gamma'}{2}\right)\right).
\end{equation}
We may assume that $|\NNu| < \gamma/9$, as otherwise $|\NNu| \ge \gamma/9 \ge \gamma/(9 \log (2\delta))$ and we are done.
By partition condition $(P3)$, $|\NNu|  \ge |\GAMm|$.
Hence $\gamma = |\NNu| + |\GAMm| + \gamma'  \le 2|\NNu| + \gamma'$,
yielding $(\gamma-\gamma')/2 \le |\NNu| < \gamma/9$. Hence $2/3 (\gamma/\gamma') \le 6/7$, which gives
$$\log \left(\frac{\gamma}{3}\right) - \log \left(\frac{\gamma'}{2}\right) ~\le~ \log \left(\frac{6}{7}\right) ~\le~ -\frac{2}{9}.$$
It follows that
\begin{multline} \label{eq:finale}
%\begin{aligned}
\frac{\gamma}{\log (2\delta)} \left(\log (2\delta) + \log \left(\frac{\gamma}{3}\right) - \log \left(\frac{\gamma'}{2}\right)\right)
\\ \le \frac{\gamma}{\log (2\delta)} \left(\log (2\delta) -\frac{2}{9}\right).
%\end{aligned}
\end{multline}
Plugging Equation (\ref{eq:finale}) into Equation (\ref{eq:prefinale}) gives
\begin{eqnarray*}
\gamma' &\le& \frac{\gamma}{\log (2\delta)} \left(\log (2\delta) -\frac{2}{9}\right) ~=~ \gamma - \frac{2\gamma}{9 \log (2\delta)} \\ &\le& 2|\NNu| + \gamma' - \frac{2\gamma}{9 \log (2\delta)},
\end{eqnarray*}
implying that $|\NNu| \ge \gamma/(9 \log (2\delta))$.
\QED
\end{proof}

\begin{corollary} \label{tightlemma1}
%Denote by $\delta$ the average degree of a vertex in $\Gamma$, i.e., $\delta = (1/|\Gamma|) \sum_{v \in \Gamma} \deg(v)$.
%(As all vertex degrees are at least 1, we have $\delta \ge 1$.)
%If the average degree in $G_S$ is $d$,
Let $G=(V,E)$ be an $(\alpha, \beta)$-expander.
%For every subset $S$ of $V$, let $\delta_S=(1/|\EN|)\sum_{v\in \EN}\deg (v,S)$ be the average degree of vertices in $\EN$ in $S$, and denote by $\bar{\delta}=\max\{\delta_S~|~ S\subset V, |S|\leq \alpha n\}$
Then,
\begin{description}
\item{(1)} $G$ is an $(\alpha_w,\beta_w)$-wireless expander with $\alpha_w=\alpha$ and $\beta_w \ge \beta /(9\log (2\bar{\delta})) \ge \beta /(9 \log (2\Delta))$,
where $\Delta$ is the maximum degree in the graph.
\item{(2)} In the regime $\beta \ge 1$, we have $\delta_S \le \Delta / \beta$, thus $\bar{\delta}\leq \Delta/\beta$, and hence $\beta_w \ge \beta /(9 \log (2\Delta / \beta))$.
\end{description}
\end{corollary}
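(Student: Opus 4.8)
The plan is to obtain Corollary~\ref{tightlemma1} as a direct consequence of Lemma~\ref{tightlemma}, in exactly the same packaging style used for Corollary~\ref{basiclemma1}. Lemma~\ref{tightlemma} already does all the real work: it guarantees, for a single bipartite instance $G_S$, a subset $S'\subseteq S$ with $|\UN_S(S')|\ge\gamma/(9\log(2\delta))$, where $\gamma=|\NN|$ and $\delta$ is the average degree of a vertex of $\NN$. What remains is only to lift this per-instance bound to a statement about the wireless expansion of the whole graph $G$.

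First I would fix an arbitrary vertex set $S\subseteq V$ with $|S|\le\alpha n$ and pass to the induced bipartite graph $G_S=(S,\EN,e(S,\EN))$; as noted in Section~\ref{frame}, dropping the edges internal to $S$ and to $\EN$ has no effect on the quantities of interest. Writing $\gamma=|\EN|$ and letting $\delta_S$ denote the average $S$-degree of a vertex of $\EN$, Lemma~\ref{tightlemma} supplies a subset $S'\subseteq S$ with $|\UN_S(S')|\ge\gamma/(9\log(2\delta_S))$. Because $G$ is an $(\alpha,\beta)$-expander we have $\gamma=|\EN|\ge\beta|S|$, and because $\delta_S\le\bar{\delta}$ while the logarithm sits in the denominator, this yields $|\UN_S(S')|\ge\beta|S|/(9\log(2\bar{\delta}))$. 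Since $S$ was arbitrary, the wireless expansion satisfies $\beta_w\ge\beta/(9\log(2\bar{\delta}))$ with $\alpha_w=\alpha$, and the weaker estimate $\beta_w\ge\beta/(9\log(2\Delta))$ follows from the trivial inequality $\bar{\delta}\le\Delta$. This establishes part~(1).

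For part~(2) I would add one elementary averaging estimate valid when $\beta\ge1$. The sum of the $S$-degrees over $\EN$ equals the number of edges $e(S,\EN)$, which is at most $\Delta|S|$ since every vertex of $S$ has degree at most $\Delta$; dividing by $\gamma=|\EN|\ge\beta|S|$ gives $\delta_S\le\Delta|S|/\gamma\le\Delta/\beta$ for each such $S$, hence $\bar{\delta}\le\Delta/\beta$. Substituting this into the bound of part~(1) delivers $\beta_w\ge\beta/(9\log(2\Delta/\beta))$.

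I do not expect a genuine obstacle: the entire content lives in Lemma~\ref{tightlemma}, and the corollary is routine bookkeeping. The only points requiring care are the directions of the two monotonicity arguments — one must exploit that a larger average degree only \emph{weakens} the denominator bound, so that replacing $\delta_S$ by $\bar{\delta}$ (and $\bar{\delta}$ by $\Delta$) is legitimate — and the fact that the estimate $\delta_S\le\Delta/\beta$ is derived per set $S$ before the maximum defining $\bar{\delta}$ is taken.
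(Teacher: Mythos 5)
Your proposal is correct and follows exactly the route the paper intends: the paper leaves Corollary~\ref{tightlemma1} as an immediate consequence of Lemma~\ref{tightlemma}, lifted by the same per-set argument it spells out for the analogous Corollary~\ref{basiclemma1} and Lemma~\ref{lem:final-M} (fix $S$ with $|S|\le\alpha n$, apply the lemma to $G_S$, use $\gamma\ge\beta|S|$ and $\delta_S\le\bar{\delta}\le\Delta$, and for part~(2) bound $\delta_S\le e(S,\EN)/|\EN|\le\Delta/\beta$). Your attention to the monotonicity directions and to deriving $\delta_S\le\Delta/\beta$ per set before maximizing is exactly right.
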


\begin{corollary}
\label{improving shay's lower bound}
In $G_S$ there is a subset $S'$ of $S$ with $$|\UN_{S}(S')| \ge \min \left\{ \frac{\gamma}{9 \log \delta}~,~\frac{\gamma}{20}\right\}.$$
\end{corollary}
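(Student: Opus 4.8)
The plan is to obtain this as a genuine corollary of the average-degree bounds already proved, by splitting on whether the average degree $\delta$ of $\NN$ lies below or above a fixed constant threshold, which I will take to be $\delta = 5/2$. The two branches of the minimum correspond exactly to these two regimes: the floor $\gamma/20$ is what one can guarantee for bounded $\delta$, whereas $\gamma/(9\log\delta)$ is the correct tail behaviour for large $\delta$. This is also where the improvement over Lemma \ref{tightlemma} comes from, since that lemma carries $\log(2\delta)$ and I want to shave the spurious $+1$ inside the logarithm in the large-degree range.

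First I would treat the small-degree regime $\delta \le 5/2$. Here $\log\delta \le \log(5/2) < 20/9$, so the minimum equals $\gamma/20$, and it suffices to produce a subset with $|\UN_{S}(S')| \ge \gamma/20$. This is immediate from Lemma \ref{constructive lemma}, which yields $S'$ with $|\UN_{S}(S')| \ge \gamma/(8\delta) \ge \gamma/(8\cdot\tfrac52) = \gamma/20$, matching the target exactly at the boundary $\delta=5/2$.

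For the large-degree regime $\delta > 5/2$ I would invoke Corollary \ref{finding t and c} with a suitably chosen constant $\epsilon$. The idea is to pick $\epsilon$ large enough that the applicability condition of that corollary, namely $\epsilon\ln\delta - \ln\ln\delta - \ln(1+\epsilon) - 1 \ge 0$, already holds at $\delta = 5/2$, yet small enough that the resulting coefficient still beats $1/9$. Concretely, taking $\epsilon = 17$ one has $\tfrac{2.0087}{1+\epsilon} = \tfrac{2.0087}{18} > \tfrac19$, so Corollary \ref{finding t and c} gives $|\UN_{S}(S')| \ge \tfrac{2.0087}{18\log\delta}\,\gamma > \tfrac{\gamma}{9\log\delta} \ge \min\{\gamma/(9\log\delta),\,\gamma/20\}$, the last inequality holding because $\gamma/(9\log\delta)$ is always at least the minimum. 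Together the two branches cover every $\delta \ge 1$, which proves the corollary.

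The only real work is the constant bookkeeping in the second branch, and this is where I expect the (mild) obstacle to sit. One must simultaneously verify that $\epsilon=17$ keeps the coefficient above $1/9$ — the inequality $2.0087/(1+\epsilon) \ge 1/9$ only barely holds, failing once $\epsilon$ exceeds $17.08$ — and that this same $\epsilon$ pushes the ``sufficiently large $\delta$'' threshold of Corollary \ref{finding t and c} below the cut point $5/2$. For the latter I would check that the threshold function $g(\delta) = \epsilon\ln\delta - \ln\ln\delta - \ln(1+\epsilon) - 1$ is nonnegative at $\delta = 5/2$ and has derivative $g'(\delta) = \tfrac1\delta\bigl(\epsilon - \tfrac{1}{\ln\delta}\bigr) > 0$ for all $\delta \ge 5/2$ (since $1/\ln\delta < 1.1 < 17$ there), so that $g \ge 0$ holds uniformly on $[5/2,\infty)$ and the large-degree bound indeed applies on the entire range $\delta > 5/2$.
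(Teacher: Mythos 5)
Your route is genuinely different from the paper's: the paper proves this corollary by a fresh induction on $\gamma$ that mirrors the proof of Lemma~\ref{tightlemma} (replacing $\log(2\delta)$ by $\log\delta$ once $\delta\ge 2$, and falling back on Lemma~\ref{tightlemma} itself when $\delta<2$, where $\gamma/(9\log(2\delta))>\gamma/18\ge\gamma/20$), whereas you try to deduce it purely from the already-stated average-degree bounds via a threshold split at $\delta=5/2$. Your small-degree branch is fine: for $\delta\le 5/2$ the minimum is indeed $\gamma/20$ and Lemma~\ref{constructive lemma} gives $\gamma/(8\delta)\ge\gamma/20$ exactly at the cut point.

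The large-degree branch, however, has a genuine gap, and it sits exactly where you flagged the "mild obstacle." Your entire argument there rests on the coefficient $2.0087$ in Corollary~\ref{finding t and c} being strictly larger than $2$, so that $2.0087/(1+\epsilon)\ge 1/9$ survives with $\epsilon=17$. But that coefficient is not supported by the corollary's own derivation: the proof optimizes $f(c)=\log_2 c/(2(1+c))$, whose maximum is $f(c^{\max})\approx 0.20087$, and multiplies it by $g(t^{\max})\ge 1/((1+\epsilon)\log_2\delta)$, which yields a bound of $0.20087\,\gamma/((1+\epsilon)\log_2\delta)$ — the stated $2.0087$ is off by a factor of ten from what is actually proved (and a bound of order $2\gamma/\log_2\delta$ would in any case be absurdly strong for moderate $\delta$). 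With the supported constant one needs $1+\epsilon\le 9\cdot 0.20087\approx 1.81$, i.e.\ $\epsilon\le 0.81$, and then the applicability condition $\epsilon\ln\delta-\ln\ln\delta-\ln(1+\epsilon)-1\ge 0$ fails at $\delta=5/2$ and does not kick in until roughly $\delta\approx 40$. This leaves the whole range $5/2<\delta\lesssim 40$ covered by neither branch, and Lemma~\ref{constructive lemma}'s $\gamma/(8\delta)$ decays far too fast to bridge it (at $\delta=40$ it gives $\gamma/320$ against a target of about $\gamma/48$). Because the inequality you rely on "only barely holds" even with the typo taken at face value, the argument has no slack and collapses once the constant is corrected; to repair it you would essentially have to reprove the large-$\delta$ case directly, which is what the paper's induction does.
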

\begin{proof}
We prove that if $\delta<2$ then there is a subset $S'$ of $S$ with $|\UN_{S}(S')| \ge \gamma/20$ and if $\delta\geq 2$ then there is a subset $S'$ of $S$ with $|\UN_{S}(S')| \ge \gamma /(9 \log \delta)$. The proof is by induction on $\gamma$.
\\\emph{Basis: $\gamma \le 9$.} If $\delta<2$, by Lemma \ref{tightlemma} there is a subset $S'$ of $S$ with $|\UN_{S}(S')|\geq \gamma/(9\log(2\delta))> \gamma/18\geq \gamma/20$. For $\delta\geq 2$, let $v$ be an arbitrary vertex of $S$ with at least one neighbor in $\NN$,
let $S' = \{v\}$,
 then $|\Gamma(v)|=|\UN_{S}(S')| \ge 1 \ge \gamma/(9\log \delta)$.
\\\emph{Induction step: Assume the correctness of the statement for all smaller values of $\gamma$, and prove it for $\gamma$.}
If $\delta <2$, then the same proof holds as in the basis case. Let assume $\delta\geq 2$ and therefore $\log \delta\geq 1$.
We apply Procedure {\partition} (with the bipartite graph induced by the sets $S$ and $\NN$).
If the procedure terminates because $\NNt = \emptyset$,
then we have $|\NNu| ~\ge~  |\NN|/2$ (cf.\ Equation (\ref{firstcase})).

%We henceforth assume that the procedure terminates because $\gain(v) \le 0$, for all $v \in \SETt$.
We henceforth assume that $\NNt \ne \emptyset$, i.e., $\gamma' = |\NNt| \ge 1$.
In particular, it must hold that $|\EEt|  \le 2|\EEu|$.
Denote by $\delta'$ the average degree of a vertex in $\NNt$, counting only neighbors that belong to $\SETt$. By partition condition $(P2)$ each vertex of $\NNt$ has at least one neighbor in $\SETt$, which implies that $\delta' \ge 1$, thus $\log (2\delta') \ge 1$. Since $\NNt$ is non-empty, it must hold that $|\EEt| \ge 1$.
Hence $|\EEu| \ge |\EEt| /2 \ge 1/2$, yielding $|\EEu| \ge 1$.
Consequently, we have $|\NNu| \ge 1$, which in turn yields $1 \le \gamma' \le \gamma-1$.

There are two cases. The first case is when $\delta'<2$. By Lemma \ref{tightlemma}, there is a subset $S'$ of $S$ s.t.
\begin{eqnarray} \label{gamma'/18}
|\UN_{S}(S')|\geq|\UN_{\SETt}(S')\cap \NNt| \ge \frac{\gamma'}{9 \log (2\delta')}\geq \frac{\gamma'}{18}.
\end{eqnarray}
 If $\gamma'<\gamma(9/10)$, then as $\gamma = |\NNu| + |\GAMm| + \gamma'  \le 2|\NNu| + \gamma'$, we get $|\NNu|\geq \gamma/20$. So we can assume $\gamma'\geq \gamma(9/10)$, and by Equation (\ref{gamma'/18}),
 $$|\UN_{S}(S')|\geq \frac{\gamma'}{18}\geq \frac{\gamma}{20}.$$
The second case is when $\delta'\geq 2$ and therefore $\log \delta\geq 1$.

Suppose first that $\gamma'/ \log \delta' \ge \gamma / \log \delta$.
By the induction hypothesis for $\gamma'$ (restricting ourselves to the subgraph of $G_S$ induced by the vertex sets $\SETt$ and $\NNt$),
we conclude that there is a subset $\tilde S$ of $\SETt$ with
$|\UN_{\SETt}(\tilde S) \cap \NNt| \ge \gamma'/(9 \log \delta')$, yielding
$$ |\UN_{S}(\tilde S)| ~\ge~ |\UN_{\SETt}(\tilde S) \cap \NNt| ~\ge~  \frac{\gamma'}{9 \log \delta'} ~\ge~ \frac{\gamma}{9 \log \delta}.$$

We may henceforth assume that \begin{equation} \label{bound_gamma} \frac{\gamma'}{\log \delta'} ~<~ \frac{\gamma}{\log\delta}.\end{equation}
Observe that $|\EEu| + |\EEt| \le |E_S| = \delta \cdot \gamma$.
%$\gamma  = |\NNu| + |\GAMm| + \gamma'$ and
%and $|\NNu| \ge |\GAMm|$.
By definition, $|\EEt| = \delta' \cdot \gamma'$.
It follows that
$$3\delta' \cdot \gamma' ~=~ 3|\EEt| ~\le~  2(|\EEu| + |\EEt|) ~\le~ 2 \delta \cdot \gamma,$$
yielding
\begin{equation} \label{bound_logdelta}
\log \delta' ~\le~ \log \delta + \log \left(\frac{\gamma}{3}\right) - \log \left(\frac{\gamma'}{2}\right).
\end{equation}
Plugging Equation (\ref{bound_logdelta}) into Equation (\ref{bound_gamma}), we obtain
\begin{equation} \label{eq:prefinale1}
\gamma' ~<~  \frac{\gamma}{\log \delta} \left(\log \delta + \log \left(\frac{\gamma}{3}\right) - \log \left(\frac{\gamma'}{2}\right)\right).
\end{equation}
We may assume that $|\NNu| < \gamma/9$, as otherwise $|\NNu| \ge \gamma/9 \ge \gamma/(9 \log \delta)$ and we are done.
By partition condition $(P3)$, $|\NNu|  \ge |\GAMm|$.
Hence $\gamma = |\NNu| + |\GAMm| + \gamma'  \le 2|\NNu| + \gamma'$,
yielding $(\gamma-\gamma')/2 \le |\NNu| < \gamma/9$. Hence $2/3 (\gamma/\gamma') \le 6/7$, which gives
$$\log \left(\frac{\gamma}{3}\right) - \log \left(\frac{\gamma'}{2}\right) ~\le~ \log \left(\frac{6}{7}\right) ~\le~ -\frac{2}{9}.$$
It follows that
\begin{equation} \label{eq:finale1}
\frac{\gamma}{\log \delta} \left(\log \delta + \log \left(\frac{\gamma}{3}\right) - \log \left(\frac{\gamma'}{2}\right)\right) ~\le~ \frac{\gamma}{\log \delta} \left(\log \delta -\frac{2}{9}\right).
\end{equation}
Plugging Equation (\ref{eq:finale1}) into Equation (\ref{eq:prefinale1}) gives
$$\gamma' ~\le~ \frac{\gamma}{\log \delta} \left(\log \delta -\frac{2}{9}\right) ~=~ \gamma - \frac{2\gamma}{9 \log \delta} ~\le~ 2|\NNu| + \gamma' - \frac{2\gamma}{9 \log \delta},$$
implying that $|\NNu| \ge \gamma/(9 \log \delta)$.
\QED
\end{proof}
By corollaries \ref{tightlemma}, \ref{lower bound on gamma1 with t and c} and \ref{improving shay's lower bound} we get the following result. Denote  \[M_G(x)= \max \left\{
  \begin{array}{lr}
   \min \{ 1 /(9 \log x),1/20\},\\
   1/(9 \log (2x)),\\
    \max\{(1-1/t) (2.0087/\log(tx))~|~t>1\}
  \end{array}
\right\}.
\]
\begin{corollary} \label{conclusion of the lower bounds}
In $G_S$, there is a subset $S'$ of $S$ with
$|\UN_{S}(S')|\ge \gamma\cdot M_G(\delta)$ .

\end{corollary}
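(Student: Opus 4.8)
The plan is to read the statement as a pure bookkeeping combination of three lower bounds already established for the \emph{fixed} bipartite graph $G_S$ (with $\gamma = |\NN|$ and average degree $\delta \ge 1$). By definition $M_G(\delta)$ is the maximum of three expressions, and each of these is (up to the trivial reduction explained below) exactly one of the guaranteed values of $|\UN_{S}(S')|/\gamma$ proved earlier. Crucially, each of those results is an \emph{existence} statement, producing \emph{some} subset $S' \subseteq S$ that attains the stated ratio; since they all concern the same $G_S$, I would invoke the three of them independently, obtain three candidate subsets, and simply retain the one whose guaranteed ratio is largest. This immediately yields a subset $S'$ with $|\UN_{S}(S')| \ge \gamma \cdot M_G(\delta)$.

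Next I would match each term to its source. The second expression, $1/(9\log(2\delta))$, is verbatim the bound of Lemma~\ref{tightlemma}, so multiplying by $\gamma$ produces a subset with $|\UN_{S}(S')| \ge \gamma/(9\log(2\delta))$. The first expression, $\min\{1/(9\log\delta),\,1/20\}$, is verbatim the bound of Corollary~\ref{improving shay's lower bound}. The third expression, $\max\{(1-1/t)(2.0087/\log(t\delta)) : t > 1\}$, is the content of Corollary~\ref{lower bound on gamma1 with t and c} after optimizing its second free parameter. Indeed, that corollary supplies, for \emph{every} $c>1$ and $t>1$, a subset with $|\UN_{S}(S')| \ge (1-1/t)\,\gamma/(2(1+c)\log_c(t\delta))$; rewriting $\log_c(t\delta) = \log_2(t\delta)/\log_2 c$ isolates the $c$-dependence in the factor $\log_2 c/(2(1+c))$, whose maximum over $c>1$ is attained at $c \approx 3.59112$ with value $\approx 0.20087$ (the same constant computed for Corollary~\ref{finding t and c}). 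Fixing this optimal $c$ and leaving $t$ free gives, for each $t>1$, a subset with ratio at least $(1-1/t)\,(0.20087)/\log_2(t\delta)$, and taking the maximum over $t$ reproduces the third term.

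Finally I would conclude by choosing $S'$ to be the candidate subset corresponding to whichever of the three terms realizes the maximum defining $M_G(\delta)$. I do not expect any genuine obstacle here: all three results share the same hypotheses (a single bipartite $G_S$ with $\delta \ge 1$, so every logarithm involved is well defined and positive), and the only step that is not completely immediate is the reduction of the third term, which is just the elementary optimization of $\log_2 c/(2(1+c))$ already carried out earlier in the paper. The single point worth double-checking is purely notational, namely that the numerical constant appearing in the third term of $M_G$ is precisely this optimized value, after which the combination goes through verbatim.
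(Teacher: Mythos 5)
Your proposal is correct and matches the paper's own (one-line) justification, which simply cites Lemma~\ref{tightlemma}, Corollary~\ref{lower bound on gamma1 with t and c} and Corollary~\ref{improving shay's lower bound} and takes the best of the three guaranteed subsets; your reduction of the third term via the optimization of $\log_2 c/(2(1+c))$ at $c\approx 3.59112$ is exactly the intended step. The notational point you flag is real: the optimized constant is $0.20087$, so the $2.0087$ appearing in the displayed definition of $M_G$ (and in Corollary~\ref{finding t and c}) is a typo --- as written it would give a bound exceeding $\gamma$ for small $\delta$ --- and the corollary holds with the constant $0.20087$.
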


\begin{observation}
%\begin{align*}

$\max\{\min \{ \gamma /(9 \log \delta),\gamma/20\},\gamma /(9 \log (2\delta))\}$  is given by
\begin{eqnarray*}
\begin{cases}
%\hspace{-15pt}
    \gamma/(9 \log (2\delta))       & \quad \text{if } \delta\leq 2^{11/9}\\
    \gamma/20\  & \quad \text{if } 2^{11/9}\leq\delta\leq 2^{20/9}\\
    \gamma /(9 \log \delta)  & \quad \text{otherwise}.
  \end{cases}
 \end{eqnarray*}
% \end{align*}

Moreover, for every $\epsilon>0$, if $\delta$ satisfies  $\epsilon\ln(\delta)-\ln(\ln\delta)-\ln(1+\epsilon)-1\geq 0$, then $\max\{\gamma(1-1/t) (2.0087/\log(t\delta))~|~ t>1\}= \gamma \frac{2.0087}{(1+\epsilon)\log(\delta)}$. In that case, \\$\max\{ \gamma/(9 \log \delta) ,\max\{\gamma(1-1/t) (1/(2(1+c)\log_c(t\delta))~|~t>1\}\}\geq\gamma \frac{2.0087}{(1+\epsilon)\log(\delta)}$ if and only if $\epsilon<17.0783$, i.e., to understand which expression is the maximum, we need to take $\epsilon'=\min\{\epsilon~|~\epsilon\ln(\delta)-\ln(\ln\delta)-\ln(1+\epsilon)-1\geq 0 \}$ and then check if $\epsilon'<17.0783$ or not.
\end{observation}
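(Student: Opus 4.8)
The plan is to read the Observation as a purely elementary comparison of the three lower bounds already established, in Lemma~\ref{tightlemma}, Corollary~\ref{improving shay's lower bound}, and Corollaries~\ref{finding t and c}/\ref{lower bound on gamma1 with t and c}. After cancelling the common factor $\gamma$ and fixing base-$2$ logarithms throughout (so that $\log(2\delta)=1+\log\delta$), the first displayed formula is just the pointwise maximum of the three functions $A(\delta)=1/(9\log\delta)$, $B=1/20$, and $C(\delta)=1/(9\log(2\delta))=1/(9(1+\log\delta))$, while the ``Moreover'' part is a one-variable calculus statement about the maximiser of $g(t)=(1-1/t)/\log(t\delta)$. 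I would treat the two parts separately.

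For the first part I would first record the free inequality $C(\delta)<A(\delta)$, valid for every $\delta>1$ because $1+\log\delta>\log\delta$; consequently the outer maximum can never select $C$ once the inner minimum already equals $A$. It then remains only to locate two crossover points. The inner minimum $\min\{A,B\}$ equals $A$ precisely when $9\log\delta\ge 20$, i.e.\ $\delta\ge 2^{20/9}$, and equals $B$ otherwise; and on the range $\delta<2^{20/9}$ the relevant comparison is $C\ge B$, which holds iff $9(1+\log\delta)\le 20$, i.e.\ $\delta\le 2^{11/9}$. Assembling these thresholds gives exactly the three stated cases: the maximum is $C(\delta)=1/(9\log(2\delta))$ for $\delta\le 2^{11/9}$, is $B=1/20$ for $2^{11/9}\le\delta\le 2^{20/9}$, and is $A(\delta)=1/(9\log\delta)$ for $\delta\ge 2^{20/9}$ (the boundary inclusions being immaterial).

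For the ``Moreover'' part I would reuse the derivative computation from the proof of Corollary~\ref{finding t and c}: writing $g(t)=(1-1/t)/\log_2(t\delta)$, the sign of $g'(t)$ coincides with that of $h(t)=\ln(t\delta)-t+1$, which is strictly decreasing on $t>1$ and hence has a unique root $t^{\max}$, so $\max_{t>1}g(t)=g(t^{\max})$. At this root $\ln(t^{\max}\delta)=t^{\max}-1$, which on substitution collapses $g(t^{\max})$ to $\ln 2/t^{\max}$. The key point is then the bookkeeping linking the constraint to the root: a short computation of $h\big((1+\epsilon)\ln\delta\big)$ shows it equals $-\big(\epsilon\ln\delta-\ln(\ln\delta)-\ln(1+\epsilon)-1\big)$, so the feasibility condition $\epsilon\ln\delta-\ln(\ln\delta)-\ln(1+\epsilon)-1\ge 0$ is equivalent to $h\big((1+\epsilon)\ln\delta\big)\le 0$, i.e.\ to $t^{\max}\le(1+\epsilon)\ln\delta$. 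Since the left-hand side is increasing in $\epsilon$, taking $\epsilon'=\min\{\epsilon:\epsilon\ln\delta-\ln(\ln\delta)-\ln(1+\epsilon)-1\ge 0\}$ makes the constraint tight, so $h\big((1+\epsilon')\ln\delta\big)=0$ and hence $t^{\max}=(1+\epsilon')\ln\delta$ exactly. This yields $\max_{t>1}g(t)=\ln 2/((1+\epsilon')\ln\delta)=1/((1+\epsilon')\log_2\delta)$, and reinstating the constant $2.0087$ inherited from Corollary~\ref{finding t and c} gives the claimed equality.

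Finally, to decide which term wins in $M_G(\delta)$ I would simply compare the $t$-optimised bound against $A(\delta)$ after cancelling $\gamma/\log\delta$: we have $2.0087/(1+\epsilon')\ge 1/9$ iff $9\cdot 2.0087\ge 1+\epsilon'$, i.e.\ iff $\epsilon'\le 9\cdot 2.0087-1=17.0783$, which is exactly the stated threshold; the stated recipe (compute $\epsilon'$ and test it against $17.0783$) follows immediately. The only genuinely delicate step is the ``Moreover'' part: one must justify that $h$ is strictly decreasing with a unique root and, above all, get the sign right when relating feasibility of the $\epsilon$-constraint to the inequality $t^{\max}\le(1+\epsilon)\ln\delta$, since a sign slip there reverses the direction of the final bound. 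Everything else is routine manipulation with the numerical constants $2^{11/9}$, $2^{20/9}$, and $17.0783$, together with consistent conversion between $\ln$ and $\log_2$.
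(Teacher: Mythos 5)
Your proof is correct; note that the paper states this Observation with no proof at all (and the closely related computations appear only inside macro definitions that are never expanded), so the only meaningful check is whether your verification holds up --- it does. For the first part, the argument reduces exactly to the two crossovers you identify: $1/(9\log(2\delta))<1/(9\log\delta)$ always, the inner minimum switches from $1/20$ to $1/(9\log\delta)$ at $\delta=2^{20/9}$, and $1/(9\log(2\delta))$ crosses $1/20$ at $\delta=2^{11/9}$, giving the three cases. For the ``Moreover'' part your calculus matches what the suppressed proof of Corollary~\ref{finding t and c} attempts: $g'(t)$ has the sign of $h(t)=\ln(t\delta)-t+1$, which is strictly decreasing on $t>1$ with a unique root $t^{\max}$, and $g(t^{\max})=\ln 2/t^{\max}$. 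Two remarks. First, you get the delicate sign right: $h((1+\epsilon)\ln\delta)=-\bigl(\epsilon\ln\delta-\ln\ln\delta-\ln(1+\epsilon)-1\bigr)$, so feasibility of the $\epsilon$-constraint is equivalent to $t^{\max}\le(1+\epsilon)\ln\delta$; the paper's own hidden computation asserts the opposite sign, and your version is the one that makes the argument work. Second, your argument actually shows that the claimed equality $\max_t\{\gamma(1-1/t)(2.0087/\log(t\delta))\}=\gamma\cdot 2.0087/((1+\epsilon)\log\delta)$ holds only at $\epsilon=\epsilon'$ (for any larger feasible $\epsilon$ it is a strict ``$>$''), so the Observation's ``$=$'' should be read as ``$\ge$'' for general feasible $\epsilon$; since only $\epsilon'$ enters the final test $\epsilon'<9\cdot 2.0087-1=17.0783$, this does not affect the conclusion.
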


Let $G=(V,E)$ be an $(\alpha, \beta)$-expander, and for every $S$ in $V$, denote $\gamma_S=|\EN|$. As $G$ is an $(\alpha, \beta)$-expander, $\gamma_S\geq\beta|S|$.
Then, Corollary \ref{conclusion of the lower bounds} yields the following bound on $\beta_w$.
\begin{lemma}
\label{lem:final-M}
Let $G=(V,E)$ be an $(\alpha, \beta)$-expander.
Then,
\begin{description}
\item{(1)} $G$ is an $(\alpha_w,\beta_w)$-wireless expander with $\alpha_w=\alpha$ and $\beta_w \ge \beta \cdot M_G(\bar{\delta}).$
\item{(2)} In the regime $\beta \ge 1$, we have $\delta_S \le \Delta / \beta$, thus $\bar{\delta}\leq \Delta/\beta$, and hence $\beta_w \ge \beta \cdot M_G(\Delta/\beta)$.
\end{description}
\end{lemma}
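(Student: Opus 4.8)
The plan is to read off Lemma \ref{lem:final-M} directly from the per-instance estimate of Corollary \ref{conclusion of the lower bounds}, converting its bound on the \emph{size} of a unique neighborhood into a bound on the \emph{expansion ratio} via the expander hypothesis. For part (1), I would fix an arbitrary set $S \subseteq V$ with $|S| \le \alpha n$ and pass to the associated bipartite graph $G_S = (S, \EN, e(S,\EN))$, writing $\gamma_S = |\EN|$ and letting $\delta_S$ be the average degree of a vertex of $\EN$ in $G_S$. Corollary \ref{conclusion of the lower bounds} supplies a subset $S' \subseteq S$ with $|\UN_S(S')| \ge \gamma_S \cdot M_G(\delta_S)$. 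Since $G$ is an $(\alpha,\beta)$-expander we have $\gamma_S = |\EN| \ge \beta\,|S|$, hence $|\UN_S(S')| \ge \beta\, M_G(\delta_S)\,|S|$.

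To replace the instance-dependent quantity $\delta_S$ by the global $\bar\delta = \max\{\delta_S : |S| \le \alpha n\}$, I would invoke monotonicity of $M_G$ (established below): since $\delta_S \le \bar\delta$ and $M_G$ is non-increasing, $M_G(\delta_S) \ge M_G(\bar\delta)$, so $|\UN_S(S')| \ge \beta\, M_G(\bar\delta)\,|S|$. As $S$ was an arbitrary set of size at most $\alpha n$, taking the maximum over $S'$ and then the minimum over $S$ in the definition of $\beta_w(G)$ yields $\alpha_w = \alpha$ and $\beta_w \ge \beta\, M_G(\bar\delta)$, which is part (1).

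For part (2) I would first bound $\delta_S$ in the regime $\beta \ge 1$ by a simple double-counting of the edges of $G_S$: the number of edges between $S$ and $\EN$ is at most $\Delta\,|S|$ (each vertex of $S$ has degree at most $\Delta$), while $|\EN| \ge \beta\,|S|$, so $\delta_S = e(S,\EN)/|\EN| \le \Delta\,|S|/(\beta\,|S|) = \Delta/\beta$. Thus $\bar\delta \le \Delta/\beta$, and applying monotonicity of $M_G$ once more gives $M_G(\bar\delta) \ge M_G(\Delta/\beta)$, whence $\beta_w \ge \beta\, M_G(\Delta/\beta)$ by part (1).

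The only step that is not bookkeeping is the claim that $x \mapsto M_G(x)$ is non-increasing on $x \ge 1$ (note that $\delta_S \ge 1$ always, and since $\beta \le \Delta$ we have $\Delta/\beta \ge 1$, so every argument lies in this range). I would verify this at the level of the three constituent functions of the outer maximum: $\min\{1/(9\log x), 1/20\}$ is constant until $\log x = 20/9$ and decreasing thereafter; $1/(9\log(2x))$ is manifestly decreasing; and for each fixed $t > 1$ the map $x \mapsto (1-1/t)\,(2.0087/\log(tx))$ is decreasing, so their pointwise supremum over $t$ is decreasing as well. A pointwise maximum (respectively minimum) of non-increasing functions is non-increasing, so $M_G$ inherits this property. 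This is the crux of the argument, but it is an elementary monotonicity check rather than a genuine obstacle; everything else is a direct substitution of Corollary \ref{conclusion of the lower bounds} and the expander inequality $\gamma_S \ge \beta\,|S|$.
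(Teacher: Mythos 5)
Your proposal is correct and follows essentially the same route as the paper's own proof: apply Corollary \ref{conclusion of the lower bounds} to the bipartite graph $G_S$, use $\gamma_S \ge \beta|S|$ from the expander property, and pass from $\delta_S$ to $\bar\delta$ (and to $\Delta/\beta$ for part (2)) via monotonicity of $M_G$. Your write-up is in fact slightly more careful: you explicitly verify that $M_G$ is non-increasing (which the paper merely asserts) and you correctly use $\delta_S \le \bar\delta$, whereas the paper's proof contains a typo stating $\delta_S \ge \bar\delta$.
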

\begin{proof}
Let $S$ in $V$ s.t. $|S|\leq \alpha n$, and let \\$G_S=(S,\EN, E_S)$ be the corresponding graph. Then, by Corollary \ref{conclusion of the lower bounds},
$|\UN_{S}(S')|\geq  \gamma_S\cdot M_G(\delta_S)\ge \beta|S|\cdot M_G(\delta_S).$ Now, $M_G(x)$ is a decreasing function, and as $\delta_S\geq\bar{\delta}$, we get that $M_G(\delta_S)\geq M_G(\bar{\delta})$ and thus $|\UN_{S}(S')|\ge\beta|S|\cdot M_G(\bar{\delta}).$ Moreover, in the regime $\beta \ge 1$, we have $\delta_S \le \Delta / \beta$, thus $\bar{\delta}\leq \Delta/\beta$, and hence $|\UN_{S}(S')|\ge\beta|S|\cdot M_G(\Delta/\beta)$.
\QED
\end{proof}

The bounds presented in Section \ref{Bounds depending on the average degree} on $\beta_w$ are functions of $\bar{\delta}$ (like the inequality $\beta_w \ge \beta /(9\log (2\bar{\delta}))$ that we proved in Corollary \ref{tightlemma1}). Theses bounds are usually hard to use, since in most cases we cannot give an  evaluation of $\bar{\delta}$. But there are cases in which we can evaluate $\bar{\delta}$, and get a better lower bound for $\beta_w$ than $\beta /(9 \log (2\Delta))$. One such example is the class of bounded arboricity graphs.
}%\APPENDDETERMENSTIC
%%%%%%%%%%%%%%%%%%%%
%\clearpage
%{\small
%\bibliographystyle{abbrv}
%vspace{4pt}

%\bibliographystyle{latex8}
%\bibliography{expander}
%}

%%%%%%%%%%%%%%%%%%%%
%\clearpage
%\appendix

\clearpage
\pagenumbering{roman}
\appendix
\centerline{\LARGE\bf Appendix}
 
%\centerline{\bf Appendix}

%\section{Proof of Lemma 3.1}\label{app:relation}

%\inline Proof of Lemma \ref{lem:uniq1}:

%\inline Proof of Lemma \ref{bad unique expander}:
%\APPENDB

%\section{Some Missing Proofs for Section\ \ref{sec:neg}}

%%%%%%%%%%%%%%%%
%\subsection{Proof of Lemma \ref{coregen}}
%\label{append:corgen}
%\APPENDLEMMACORGEN

%%%%%%%%%%%%%%%%
%\subsection{Missing Proofs for Section\ \ref{plug}} \label{app:helper}

%\inline Proof of Corollary \ref{cl:ordinary}:
%\APPENDORD

%\inline Proof of Claim \ref{cl:helperwire}:
%\APPENDCLHELPW

\section{Deterministic and Constructive Analysis with Improved Bounds}
\label{sec:detbounds}
\APPENDDETERMENSTIC

%\section{Figures}

%\FIGA

%\FIGB

%\clearpage

%\inline Proof of Lemma \ref{constructive lemma}:
%\APPENDE
%
%\inline Proof of Cor. \ref{basiclemma1}:
%\APPENDF
%
%\inline Proof of Lemma \ref{partition using the degrees}:
%\APPENDH
%
%\inline Proof of Cor. \ref{samedegrees}:
%\APPENDI
%
%\inline Proof of Cor. \ref{lower bound on gamma1 with t and c}:
%\APPENDJ
%
%\inline Proof of Cor. \ref{finding t and c}:
%\APPENDK
%
%\inline Proof of Lemma \ref{lem:c and t}:
%\APPENDL
%
%\inline Proof of Lemma \ref{tightlemma}:
%\APPENDM
%
%\inline Proof of Cor. \ref{improving shay's lower bound}:
%\APPENDN
%
%\inline Proof of Lemma \ref{lem:final-M}:
%\APPENDO

\end{document}